\documentclass[a4paper,11pt]{article}

\makeatletter
\@addtoreset{equation}{section}

\makeatother

\textwidth=15.5cm
\textheight=22cm
\hoffset -10mm
\topmargin -1.2cm
\hoffset -10mm

\usepackage{multirow}
\usepackage{booktabs}
\usepackage[dvips]{graphicx}
\usepackage{amsmath}
\usepackage{amsthm}
\usepackage{amssymb}
\usepackage{eepic}
\usepackage{amscd}
\usepackage{longtable}
\usepackage{array}

\newtheorem{theorem}{Theorem}[section]
\newtheorem{lemma}[theorem]{Lemma}
\newtheorem{conjecture}[theorem]{Conjecture}
\newtheorem{proposition}[theorem]{Proposition}
\newtheorem{corollary}[theorem]{Corollary}

\theoremstyle{definition}

\newtheorem{example}[theorem]{Example}
\newtheorem{remark}[theorem]{Remark}

\newcommand{\ot}{\otimes}
\newcommand{\ol}[1]{\overline{#1}}
\newcommand{\Z}{{\mathbb Z}}

\newcommand{\C}{{\mathbb C}}

\newcommand{\mn}[2]{
\left(\!\!\begin{array}{c} {#1}  \\  \! {#2}\!
\end{array}\!\!\right)}
\newcommand{\mc}[2]{
\left\langle\!\!\begin{array}{c} {#1}  \\  \! {#2}\!
\end{array}\!\!\right\rangle}

\newcommand{\nc}{\newcommand}
\nc{\rnc}{\renewcommand}
\nc{\nn}{\nonumber}
\nc{\bra}{\langle}
\nc{\ket}{\rangle}
\nc{\T}{\mathcal{T}}
\rnc{\i}{{\rm i}}
\rnc{\d}{{\rm d}}
\nc{\vac}{{\rm vac}}
\nc{\e}{{\rm e}}
\nc{\tr}{{\rm tr}}
\nc{\mr}{\midrule}
\nc{\br}{\bottomrule}

\DeclareMathOperator{\sh}{sh}
\DeclareMathOperator{\ch}{ch}
\DeclareMathOperator{\End}{End}
\DeclareMathOperator{\Spec}{Spec}

\def\s{{\mathfrak s}}

\title{Spectrum in multi-species asymmetric simple exclusion process on a ring}

\author{
Chikashi Arita\thanks{Faculty of Mathematics, Kyushu University},
Atsuo Kuniba\thanks{Institute of Physics, University of Tokyo},
Kazumitsu Sakai$^\dag$ 
and
Tsuyoshi Sawabe\thanks{Data Mining Division,
Mathematical Systems, Inc.}
}

\date{April 9, 2009}

\begin{document}

\maketitle

\begin{abstract}
The spectrum of Hamiltonian (Markov matrix) 
of a multi-species asymmetric simple
exclusion process on a ring is studied.
The dynamical exponent concerning 
the relaxation time is found to coincide with 
the one-species case.
It implies that the system belongs
to the Kardar-Parisi-Zhang or Edwards-Wilkinson
universality classes depending on whether 
the hopping rate is asymmetric or symmetric,
respectively.
Our derivation exploits a poset structure of 
the particle sectors, leading to a new 
spectral duality and inclusion relations. 
The Bethe ansatz integrability is also demonstrated.
\end{abstract}

\section{Introduction}\label{sec:1}

In recent years, intensive studies on non-equilibrium phenomena
have been undertaken through a variety of 
stochastic process models of many particle systems \cite{S1}.
Typical examples are driven lattice gas systems
with a simple but nonlinear interaction among 
constituent particles.
The asymmetric simple exclusion process (ASEP) 
is one of the simplest driven lattice gas models 
proposed originally to describe the dynamics 
of ribosome along RNA \cite{MGP}. 
In the ASEP, each site is occupied by at most one particle. 
Each particle is allowed to hop to its nearest neighbor right (left) site 
with the rate $p$ $(q)$ if it is empty. 
The ASEP admits exact analyses of 
non-equilibrium properties by the matrix product ansatz 
and the Bethe ansatz.
The matrix product form of the stationary state 
was first found in the open boundary case \cite{DEHP}. 
Similar results have been obtained in  
various driven lattice gas systems 
in one dimension with both open and periodic 
boundary conditions \cite{BE}. 
Applications of the Bethe ansatz \cite{B} to the ASEP
have also been successful.
See for example the works \cite{GS,K}, \cite{S2} and \cite{DE}
for the studies of the ASEP under periodic, infinite and open 
boundary conditions, respectively.
In the periodic boundary case, 
the dynamical exponent $z$ of the relaxation time 
to the stationary state is 
$z=3/2$ if $p\neq q$ and $z=2$ if $p=q$
according to the analysis of the Bethe equation.
This implies that the ASEP belongs to the Kardar-Parisi-Zhang (KPZ)
or Edwards-Wilkinson (EW) universality classes depending on 
whether the hopping rate is asymmetric or symmetric, respectively. 

In this paper we consider the following multi-species generalization 
of the ASEP on a ring of length $L$.
A local state on each site of the ring assumes $N$ states
$1,2,\ldots, N$.
A nearest neighbor pair of the states is interchanged with the transition rate
\begin{align*}
\alpha \beta \to \beta \alpha \;\;\begin{cases}
p& \text{if } \alpha > \beta,\\
q& \text{if } \alpha < \beta.
\end{cases}
\end{align*}
We regard the local state $\alpha=1$ as a vacant site and 
$\alpha=2,\ldots, N$ as a site occupied by a particle of the $\alpha$th kind. 
The dynamics is formulated in terms of the master equation
$\frac{d}{dt}|P(t)\rangle=H|P(t)\rangle$ 
on the probability vector $|P(t)\rangle$.
We call the model the ($N\!-\!1$)-species ASEP or 
simply the multi-species ASEP.
The usual ASEP corresponds to $N=2$.

The $N^L \times N^L$ Markov matrix $H$ will be called 
the ``Hamiltonian" although it is not Hermitian for $p\neq q$.
(At $p=q$, it is Hermitian and
coincides with the $sl(N)$ invariant Heisenberg Hamiltonian.) 
Its eigenvalue with the largest real part is 0 
corresponding to the stationary state.
The other eigenvalues contribute to the
relaxation behavior through
$|P(t)\rangle = \e^{tH}|P(0)\rangle$.
Especially those with the second largest real part 
determine the relaxation time $\tau$ and 
the dynamical exponent $z$ by the scaling
behavior $\tau \sim L^z$.
Our finding is that $z$ remains the same 
with the usual ASEP $N=2$.
The key to the result is that the spectrum of $H$ 
in any multi-species particle sector includes that for an $N=2$ sector.
We systematize such spectral inclusion relations 
by exploiting the poset structure of particle sectors.
As a byproduct we find a duality, a global aspect in the 
spectrum of the Hamiltonian,
which is new even at the Heisenberg point $p=q$.

There are some other multi-species models  
with different hopping rules 
such as ABC model \cite{EKKM} and AHR model \cite{AHR,RSS}, etc.
The recent paper \cite{KN} says
that the AHR model still has the dynamical exponent
$z=3/2$.
The multi-species ASEP in this paper is a 
most standard generalization of the one-species ASEP
allowing the application of the Bethe ansatz.
Let us comment on this point and related works   
to clarify the origin of integrability of the model.
In \cite{AR}, a formulation by Hecke algebra was given
for a wider class of stochastic models 
including reaction-diffusion systems.  
A Bethe ansatz treatment was presented in \cite{AB}.
As is well known \cite{Ba}, however, 
there underlies a two dimensional integrable vertex model
behind a Bethe ansatz solvable Hamiltonian.
In the present model, the relevant vertex model is a special case of 
the Perk-Schultz model (called ``second class of models" in \cite{PS}).
It should be fair to say that 
the nested Bethe ansatz for the present model  
is originally due to Schultz \cite{Sc}.
For an account of the Perk-Schultz $R$-matrix in the framework of  
multi-parameter quantum group, see \cite{OY} and reference therein.

The layout of the paper is as follows.
In section \ref{a:sec:2},
we introduce the multi-species ASEP Hamiltonian 
together with its basic properties.
In section \ref{sec:3},
we explain how the spectral gap responsible for the 
relaxation time is reduced to the one-species case and 
determine the dynamical exponent as in \eqref{k:eq:tl}.
Our argument is based on 
spectral inclusion property \eqref{a:increl}
and conjecture \ref{k:con:sp} supported by numerical analyses.
In section \ref{sec:4}, 
we elucidate a new duality of the spectrum of the Hamiltonian
in theorem \ref{k:th:sd}.
An intriguing feature is that it emerges only 
by dealing with all the basic sectors
of the  $N=L$-state model on the length $L$ ring.
(The term ``basic sector" will be defined in
section \ref{a:sec:2}.)
The argument of section \ref{sec:4} is independent from the Bethe ansatz. 
In section \ref{sec:5},  we discuss the 
Bethe ansatz integrability of the multi-species ASEP and the 
underlying vertex model including the completeness issue.
We execute the nested algebraic Bethe ansatz 
in an arbitrary nesting order, which leads to an 
alternative explanation of the spectral inclusion property.
Except the arbitrariness of the nesting order,
section \ref{k:sec:tme} is a review.

Appendix \ref{app:dd} gives a proof of 
the dimensional duality (theorem \ref{k:th:dd})
needed in section \ref{sec:4}.
Appendix \ref{sec:stationary}
is a sketch of a derivation of the stationary state
by the Bethe ansatz.
Appendix \ref{sec:appC} contains the complete spectra of 
transfer matrix and Hamiltonian in the basic sectors
with the corresponding Bethe roots for 
$(p,q)=(\frac{2}{3}, \frac{1}{3})$ and  
$L=4$. They agree with the completeness conjectures
in section \ref{k:subs:cp}.

\section{Multi-species ASEP}\label{a:sec:2}

\subsection{Master equation}\label{k:subsec:model}
Consider an $L$-site ring ${\mathbb Z}_L$ where 
each site $i \in {\mathbb Z}_L$, 
is assigned with a variable (local state)
$k_i \in \{1, \ldots, N\}\, (N \ge 1)$.
We introduce a stochastic model on ${\mathbb Z}_L$ such that 
nearest neighbor pairs of local states 
$(\alpha, \beta)=(k_i, k_{i+1})$
are interchanged with the transition rate: 
\begin{align}\label{a:rule}
\alpha \beta \to \beta \alpha \;\;\begin{cases}
p& \text{if } \alpha > \beta,\\
q& \text{if } \alpha < \beta,
\end{cases}
\end{align}
where $p$ and $q$ are real nonnegative parameters.
More precisely, the dynamics is 
formulated in terms of the continuous-time master equation
on the probability of finding the configuration
$(k_1,\dots,k_L)$ at time $t$:
\begin{align}\label{a:master1}
\begin{split}
\frac{d}{dt}P(k_1,\dots,k_L;t)
&=\sum_{i \in {\mathbb Z}_L}\Theta(k_{i+1}-k_i)
P(k_1 ,\dots,k_{i-1},k_{i+1},k_i,k_{i+2},
\dots, k_L;t) \\
&-\sum_{i \in {\mathbb Z}_L}\Theta(k_i-k_{i+1})
P(k_1 ,\dots, k_L;t),
\end{split}
\end{align}
where $\Theta$ is a step function defined as
\begin{align}\label{k:eq:step}
\Theta(x)
=\begin{cases}
p & (x>0), \\
0 & (x=0), \\
q & (x<0).
\end{cases}
\end{align}
(Actually, $\Theta(0)$ can be set to any value.)

Our model can be regarded as an 
interacting multi-species particle system on the ring.
We interpret the local state $k_i=\alpha$ as representing 
the site $i$ occupied by a particle of the $\alpha$th kind.
The transition (\ref{a:rule}) is viewed as a local 
hopping process of particles.
We identify the first kind particles with vacancies. 
Note that a particle $\alpha$ has been assumed to overtake any 
$\beta (< \alpha)$
with the same rate $p$ 
as the vacancy $\beta=1$. 

We call this model the multi-species ASEP or more specifically 
the $(N\!-\!1)$-species ASEP.
The usual ASEP corresponds to $N=2$.
We will be formally concerned with 
the zero-species ASEP ($N=1$) as well.
The case $p=q$ will be called 
the multi-species symmetric simple exclusion process (SSEP).

Let $ |1\rangle, \ldots,  |N\rangle$ be the basis of the 
single-site space $\C^N$  
and represent a particle configuration 
$(k_1,\dots,k_L)$ as the ket vector
$ |k_1, \dots, k_L\rangle
 = |k_1\rangle \otimes \cdots \otimes |k_L\rangle
\in (\C^N)^{\otimes L}$.
In terms of the probability vector
\begin{align}
 |P(t)\rangle = \sum_{1\le k_i \le N}
 P(k_1,\dots,k_L; t)|k_1, \dots, k_L\rangle,
\end{align}
the master equation \eqref{a:master1} is expressed as 
\begin{align}\label{a:master2}
\frac{d}{dt}|P(t)\rangle = H|P(t)\rangle,
\end{align}
where the linear operator $H$ has the form
\begin{align}
 H&= \sum_{i \in {\mathbb Z}_L} h_{i, i+1},
\label{k:hdef}\\
h &= \sum_{1\le \alpha<\beta\le N}
(-pE_{\beta \beta}\otimes E_{\alpha \alpha}
-q E_{\alpha \alpha}\otimes E_{\beta \beta}
+ pE_{\alpha \beta}\otimes E_{\beta \alpha}
+ qE_{\beta \alpha}\otimes E_{\alpha \beta}).
\label{k:eq:H}
\end{align}
Here $h_{i, i+1}$ acts 
on the $i$th and the $(i\!+\!1)$th components
of the tensor product as $h$ and as the identity elsewhere.
$E_{\alpha \beta}$ denotes the $N$ by $N$ 
matrix unit sending $|\gamma\rangle$ to 
$\delta_{\gamma \beta}|\alpha\rangle$.

The equation \eqref{a:master2} has the form of 
the Schr\"odinger equation with imaginary time and thus provides 
our multi-species ASEP with a quantum Hamiltonian formalism.
Of course in the present case,
$P(k_1,\dots,k_L; t)$ itself gives the probability distribution
unlike the squared wave functions 
in the case of quantum mechanics.
Nevertheless we call the 
matrix $H$ the {\em Hamiltonian}\footnote{
A more proper terminology is 
Markov matrix.} in this paper by the abuse of language.

\subsection{Basic properties of Hamiltonian}\label{k:subsec:bph}

Our Hamiltonian 
$H$ is an $N^L$ by $N^L$ matrix 
whose off-diagonal elements are $p, q$ or $0$, 
and diagonal elements belong to 
$p{\mathbb Z}_{\le 0}+q{\mathbb Z}_{\le 0}$.
Each column of $H$ sums up to $0$ 
assuring the conservation of the total probability
$\sum_{1\le k_i \le N}P(k_1,\dots,k_L; t)$.
It enjoys the symmetries
\begin{align}
[H, C] = 0,\quad
R H R^{-1} = Q H Q^{-1} 
= H\vert_{p \leftrightarrow q}.
\end{align}
where $C, R$ and $Q$ are linear operators defined by
\begin{alignat}{2}
C \vert k_1, \ldots, k_L\rangle 
&= \vert k_L, k_1, \ldots, k_{L-1}\rangle&\quad&
(\text{cyclic shift}),\label{k:eq:cs}\\
R \vert k_1, \ldots, k_L\rangle 
&= \vert k_L,  k_{L-1},\ldots, k_1\rangle&&
(\text{reflection}),\\
Q \vert k_1, \ldots, k_L\rangle 
&= \vert N\!+\!1\!-\!k_1,\ldots, N\!+\!1\!-\!k_L\rangle&&
(\text{``charge conjugation"})\label{k:eq:cc}
\end{alignat}
satisfying $C^L=R^2=Q^2=1$.
$H$ is Hermitian only at 
$p=q$, where it becomes the 
Hamiltonian of the 
$sl(N)$-invariant Heisenberg spin chain
$H = p\sum_{i \in {\mathbb Z}_L}
(\sigma_{i, i+1}-1)$ with $\sigma_{i,i+1}$ being the 
transposition of the local states $k_i \leftrightarrow k_{i+1}$.
In general, imaginary eigenvalues of $H$ 
form complex conjugate pairs.
In section \ref{sec:5}, Bethe ansatz 
integrability of $H$ for general $p, q$ will be demonstrated.
Although $H$ is not normal in general, we expect that 
it is diagonalizable based on 
conjecture \ref{k:con:dep} and the remark following it.
This fact will be used only in corollary \ref{k:cor:ome}.

In view of the transition rule \eqref{a:rule},
our Hamiltonian $H$ obviously preserves 
the number of particles of each kind.
It follows that the space of states $(\C^N)^{\otimes L}$
splits into a direct sum of sectors and 
$H$ has the block diagonal structure:
\begin{align}
(\C^N)^{\otimes L} &= \bigoplus_m V(m),\quad
V(m) =\bigoplus_{\{k_i\} \text{ in sector }\, m}
\C| k_1,\ldots, k_L\rangle,
\label{k:eq:m-sum}\\
H&= \bigoplus_{m}H(m),\quad
H(m) \in {\rm End }\,V(m).
\label{k:eq:hm}
\end{align}
Here the direct sums $\bigoplus_m$ extend over 
$m=(m_1,\ldots, m_N)\in {\mathbb Z}_{\ge 0}^N$ 
such that $m_1+\cdots + m_N=L$.
The array $m$ labels a sector $V(m)$, which is a 
subspace of $(\C^N)^{\otimes L}$ 
specified by the multiplicities of particles of each kind.
Namely, the latter sum in \eqref{k:eq:m-sum} is taken over all 
$(k_1,\ldots, k_L) \in \{1,\ldots, N\}^L$ such that
\begin{align}\label{k:eq:sort}
\text{Sort}(k_1,\ldots, k_L)
= \overbrace{1 . . . 1}^{m_1}
\overbrace{2 . . . 2}^{m_2}\cdots
\overbrace{N . . .  N}^{m_N},
\end{align}
where $\text{Sort}$ stands for the 
ordering non decreasing to the right.
In other words, the $\{k_i\}$
in \eqref{k:eq:m-sum} runs over 
all the permutations of the right hand side of \eqref{k:eq:sort}.
The array $m$ itself will also be called a sector.
A sector,  $m=(2,0,4,1,0,3)$ for instance,  
will also be referred in terms of the Sort sequence as
$1^23^4 46^3=1133334666$. 
By the definition $\dim V(m) = L!/\prod_{i=1}^Nm_i!$.

By now we have separated the master equation 
\eqref{a:master2} into the ones in each sector 
$\frac{d}{dt}|P(t)\rangle = H(m) |P(t)\rangle$
with $|P(t)\rangle \in V(m)$.  
Since the transition rule \eqref{a:rule}
only refers to the alternatives $\alpha > \beta$ or 
$\alpha < \beta$,
the master equation in the sector 
$1^23^4 46^3$ for example is equivalent to 
that in $a^2 b^4 c\, d^3$ 
for any $1\!\le\! a\!<\! b\!<\! c\!<\! d\! \le\! N$.

Henceforth without loss of generality 
we take $N = L$ and restrict our consideration to 
the {\em basic sectors} that have the form
$m=(m_1,\ldots, m_n,0,\ldots,0)$ for some $n$
with $m_1, \ldots, m_n$ being all positive.
The basic sectors 
are labeled with the elements of the set
\begin{align}
{\mathcal M} &= \{(m_1,\ldots, m_n) \in \Z_{\ge 1}^n\mid
1 \le n \le L, \, m_1+\cdots + m_n=L\}. \label{k:eq:M}
\end{align}
In this convention, which will be  employed in the rest of the paper 
except section \ref{sec:5}, 
$n$ plays the role of $N$ in the sense that 
$H(m)$ for 
$m = (m_1,\ldots, m_n) \in {\mathcal M}$
is equivalent to the Hamiltonian of the $(n-1)$-species ASEP.

In section \ref{sec:3} we study specific eigenvalues of $H$
that are relevant to the leading behavior of the relaxation. 
In section \ref{sec:4} we elucidate a
spectral duality, a new global aspect of the spectrum of $H$,
which has escaped a notice in earlier works mostly devoted to 
the studies of the thermodynamic limit 
$L \rightarrow \infty$ under a fixed $N$.

\section{Relaxation to the stationary state}\label{sec:3}

\subsection{General remarks}\label{k:subsec:rem}
The initial value problem of 
the master equation
$\frac{d}{dt}|P(t)\rangle = H(m) |P(t)\rangle$
with $|P(t)\rangle \in V(m)$
in the sector $m =(m_1,\ldots, m_n)
\in {\mathcal M}$ \eqref{k:eq:M}
is formally solved as 
\begin{align}\label{k:eq:sol}
|P(t)\rangle = \e^{tH(m)} |P(0)\rangle .
\end{align}
There is a unique stationary state corresponding to
the zero eigenvalue of $H(m)$.
The stationary state has a non-uniform  
probability distribution \cite{PEM} 
except for the zero-species, the one-species
and the SSEP cases, where
$P(k_1,\dots,k_L)=1/\dim V(m)$. 
All the other eigenvalues of $H(m)$ have 
strictly negative real parts, which are responsible 
for various relaxation modes to the stationary state.
(The associated eigenvectors themselves are not
physical probability vectors having non-negative components.)
Let us denote by $|P(\infty)\rangle$
the stationary state.
In general, the system exhibits the long time behavior 
\begin{align}\label{k:eq:ppt}
|P(t)\ket -  |P(\infty)\ket \sim \e^{-t/\tau}
\quad(t\rightarrow \infty),
\end{align}
where $\tau$ is the relaxation time.
An important characteristic of the non-equilibrium
dynamics is the scaling property of the relaxation time 
with respect to the system size:  
\begin{align}\label{k:eq:tauL}
\tau \sim L^z \quad (L \rightarrow \infty),
\end{align}
where $z$ is the dynamical exponent.
The thermodynamic limit $L\rightarrow \infty$ 
is to be taken under the fixed
densities $\rho_j = m_j/L$ for $j=1,\ldots, n$. 
(Recall $L=m_1+\cdots+m_n$ in \eqref{k:eq:M}, therefore
$\rho_1+\cdots + \rho_n=1$.)

The spectrum in a sector $m \in {\mathcal M}$ will be denoted by  
\begin{align}\label{k:eq:spm}
{\rm Spec}(m)=\text{multiset of eigenvalues of}\, H(m),
\end{align}
where the multiplicities counts the degrees of degeneracy.
${\rm Spec}(m)$ is invariant under complex conjugation.
The charge conjugation property \eqref{k:eq:cc} 
implies the symmetry
\begin{align}\label{k:eq:ccs}
{\rm Spec}(m_1,\ldots, m_n) 
= {\rm Spec}(m_n,\ldots, m_1).
\end{align}
We say that a complex eigenvalue $x$ of 
$H(m)$  is larger (smaller) than 
$y$ if ${\rm Re}(x) > {\rm Re}(y)$
(${\rm Re}(x) < {\rm Re}(y)$). 
${\rm Spec}(m)$ contains $0$  
as the unique largest eigenvalue.
For a finite $L$, we say an eigenvalue $E$ of $H(m)$ is 
{\em second largest} if 
\begin{align}
{\rm Re}\,E =
\max {\rm Re}({\rm Spec}(m)\setminus \{0\}).
\end{align}
{}From (\ref{k:eq:sol}) and  \eqref{k:eq:ppt}, 
the scaling property \eqref{k:eq:tauL} is equivalent to 
the following behavior of the second largest eigenvalues
\begin{align}\label{k:eq:mrs}
\max {\rm Re}({\rm Spec}(m)\setminus \{0\})
= -c L^{-z}+ o(L^{-z})\quad (L \rightarrow \infty)
\end{align}
if the initial condition $|P(0)\rangle$ is generic.
Here $c>0$ is an ``amplitude" which can 
depend on $\rho_1,\ldots, \rho_n$ in general, 
but not on $L$.

In the remainder of this section, we 
derive the exponent $z$ of the multi-species ASEP based on  
\eqref{k:eq:mrs}.
Our argument reduces the problem essentially to the 
one-species case and is partly based on a conjecture 
supported by numerical analyses.

\subsection{Known results on the one-species ASEP}
\label{a:1sASEP}
In this subsection we review 
the known results on the one-species ASEP.
Thus we shall exclusively consider the sector of the form 
$m=(m_1,m_2)\in {\mathcal M}$, and 
regard the local states $1$ and $2$ as vacancies 
and the particles of one kind, respectively. 
Recall also that $L=m_1+m_2$.

The second largest eigenvalues 
are known to form a complex-conjugate pair,
which will be denoted by $E^{\pm}(m)$.
See figure \ref{plot_34}.
When $m_1=m_2$ or $p=q$, the 
degeneracy $E^+(m)=E^-(m) \in {\mathbb R}$ occurs.

\begin{figure}[h]
\begin{center}
 \includegraphics{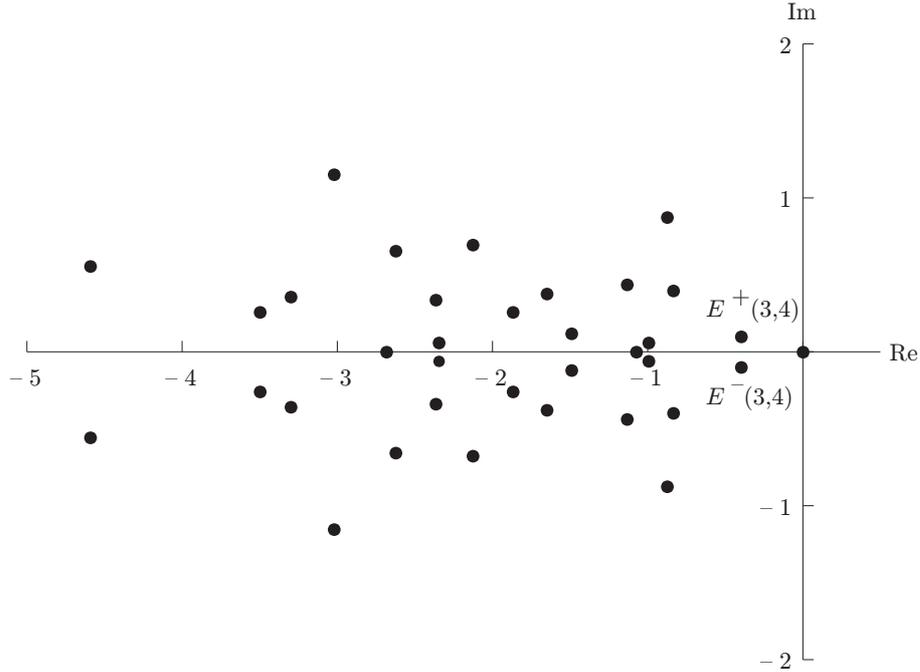}
 \caption{${\rm Spec}(3,4)$ for $(p,q)=(0.8,0.2).$}
 \label{plot_34}
\end{center}
\end{figure}

In \cite{GS,K,GM}, 
the large $L$ asymptotic form 
\begin{align}\label{a:1KPZ}
 E^{\pm} ( (1-\rho)L,\rho L )
= \pm 2\pi \i|(p-q)(1-2\rho)| L^{-1} 
 - 2C|p-q|\sqrt{\rho(1-\rho)}L^{-\frac{3}{2}}
 + O (L^{-2})
\end{align}
with a fixed particle density $\rho=m_2/L$
was derived for $p\neq q$
by an analysis of the Bethe equation.
The two terms are both invariant under 
$\rho \leftrightarrow 1-\rho$.
The constant $C$ has been numerically evaluated as
 $C = 6.50918933794\ldots$ .
Thus from \eqref{k:eq:mrs} 
the one-species ASEP for $p\neq q$
has the dynamical exponent $z=\frac{3}{2}$,
which is a characteristic value for the 
Kardar-Parisi-Zhang universality class
\cite{KPZ}. 

In the SSEP case $p=q$,
the Hamiltonian $H(m_1,m_2)$ is Hermitian,
hence all the eigenvalues are real.
The system relaxes
to the equilibrium stationary state.
For a finite $L$, 
the second largest eigenvalues 
take the simple form
\begin{align}
 E^+(m_1,m_2) = E^-(m_1,m_2)
= - 4p\sin^2(\frac{\pi}{L})\quad
(0 < m_2 < L),
\end{align}
which is 
independent of the density $\rho = m_2/L$ as long as 
$0 < \rho < 1$.
The asymptotic behavior in $L \to \infty$ is
easily determined as
\begin{align}\label{a:1EW}
 E^{\pm}((1-\rho)L, \rho L)
=  - 4\pi ^2pL^{-2} + O(L^{-4}),
\end{align}
which is free from a contribution 
of order $L^{-\frac{3}{2}}$.
From \eqref{a:1EW}, we find the dynamical exponent
$z=2$, which is the characteristic value for
the Edwards-Wilkinson universality class
\cite{EW}.

\subsection{\mathversion{bold} 
Eigenvalues of $H(2,1,3,1)$ -an example}

Let us proceed to the multi-species ASEP.
Before considering a general sector in the next subsection,
we illustrate characteristic features of 
the spectrum along an example. 
Figure \ref{plot_2131} (a) is a plot (black dots) 
of the spectrum ${\rm Spec}(2,1,3,1)$ on the complex plane.
We recall that the sector $(2,1,3,1)$ means  
the ring of length 7 populated with 4 kinds of particles
with multiplicities $2,1,3$ and $1$, among which 
the first kind ones are regarded as vacancies.

For comparison, we have also included 
the plot of the spectra in the one-species sectors 
$(2,5)$, $(3,4)$ and $(6,1)$ in different colors and shapes.
These one-species sectors are related to the 
multi-species sector $(2,1,3,1)$ as follows.
The sector $(2,5)=(2,1+3+1)$ is obtained by
identification of all kinds of particles
(except for vacancies) as one kind of particles.
The sector $(3,4)=(2+1,3+1)$ 
is obtained by identification of the 
second kind particles as vacancies
and the rest of particles as one kind of particles.
The sector $(6,1)=(2+1+3,1)$ 
is obtained by identification of
the second and the third kinds of particles as vacancies.
In figure \ref{plot_2131} (a), 
we observe that all the colored dots overlap  
the black dots.
Namely, 
${\rm Spec(2,5)},\,{\rm Spec(3,4)},\,{\rm Spec(6,1)}$
are totally embedded into ${\rm Spec}(2,1,3,1)$.

Figure \ref{plot_2131} (b) 
shows that the second largest eigenvalues 
(denoted by $E^\pm_j(2,1,3,1)$) 
in those one-species sectors
form a string within ${\rm Spec}(2,1,3,1)$ near the origin.  
Although their real parts are not strictly the same, 
there is no black dot between
the string and the origin.
More precisely, there is no 
eigenvalue in the sector $(2,1,3,1)$ which is nonzero and 
larger than 
any second largest eigenvalues in the one-species sectors
$(2,5)$, $(3,4)$ and $(6,1)$.
This property is a key to our argument in the sequel.

\begin{figure}
\begin{center}
 \includegraphics{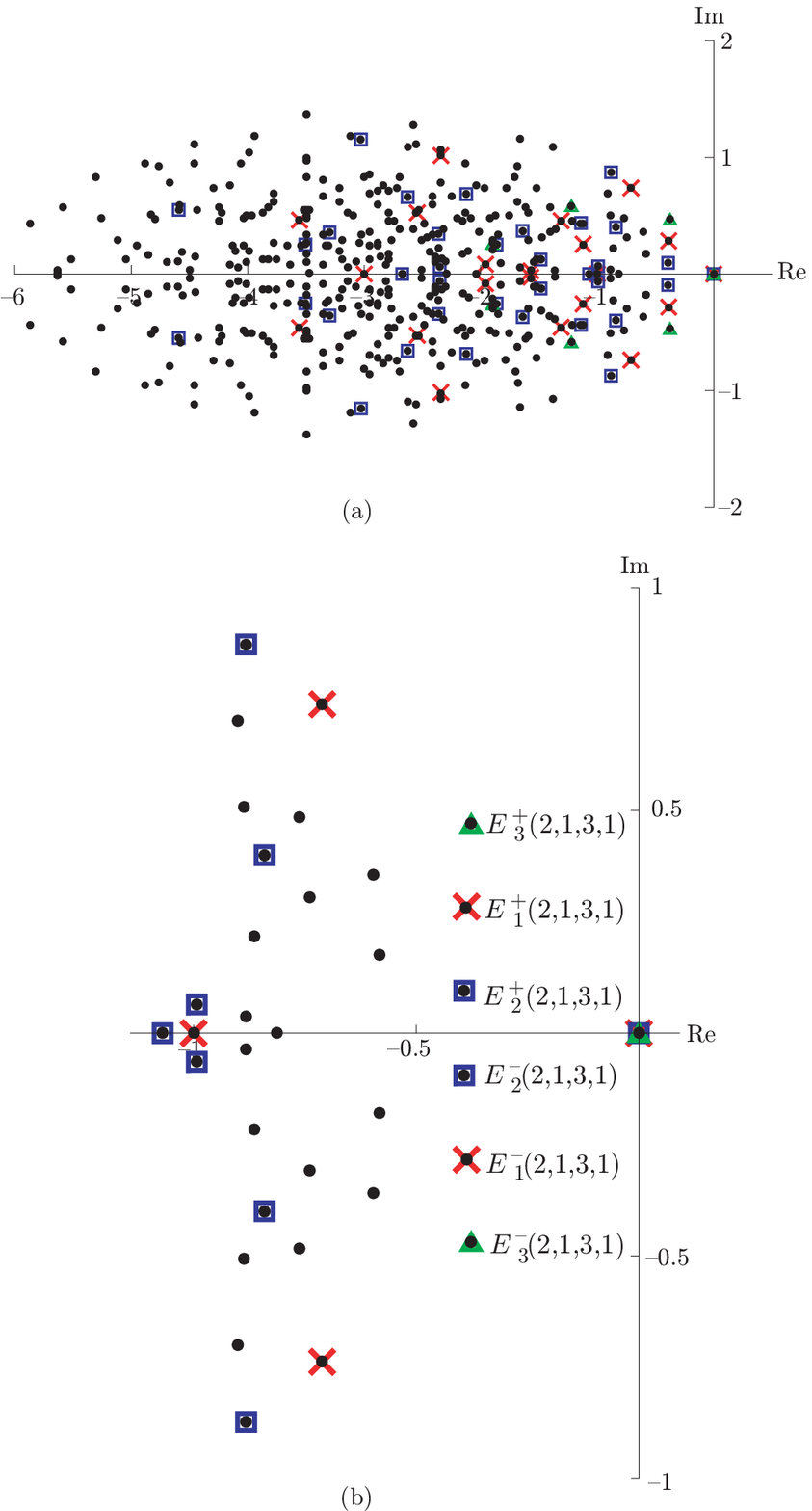}
 \caption{(a) ${\rm Spec}(2,1,3,1)$ (black dot),
${\rm Spec}(2,5)$ (red $\times$), 
${\rm Spec}(3,4)$ (blue square)
and ${\rm Spec}(6,1)$ (green triangle)
with $(p, q)=(0.8, 0.2)$.
(b) An enlarged view near the origin.}
 \label{plot_2131}
\end{center}
\end{figure}

\subsection{Eigenvalues of multi-species ASEP}

Let us systematize the observations made in the previous subsection.
First we claim that the following inclusion relation holds generally:
\begin{align}\label{a:increl}
 {\rm Spec}(m_1,\dots ,m_n)
 \supset {\rm Spec}(m_1+\cdots+m_j, m_{j+1}+\cdots+m_n)
\quad (1\le j \le n-1).
\end{align}
Each one-species sector appearing in the
right-hand side of \eqref{a:increl} 
is obtained by the identification
similar to the previous subsection:
\begin{align*}
\underbrace{
\overbrace{\circ \cdots \circ}^{m_1}|\cdots|
\overbrace{\circ \cdots \circ}^{m_j}}_{m_1+\cdots+m_j}|
\underbrace{
\overbrace{\circ \cdots \circ}^{m_{j+1}}|\cdots|
\overbrace{\circ \cdots \circ}^{m_n}}_{m_{j+1}+\cdots+m_n} .
\end{align*}
The relation \eqref{a:increl} is a special case 
of the more general statement in theorem \ref{k:cor:spe}.
See also section \ref{sec:5-2-2} for an account 
from the nested Bethe ansatz.

Next we introduce a class of eigenvalues of $H(m)$ for 
a multi-species sector
$m =(m_1,\ldots, m_n) \in {\mathcal M}$ by
\begin{align}\label{k:eq:epmj}
E^{\pm}_j(m)=E^{\pm}(m_1+\cdots+m_j, m_{j+1}+\cdots+m_n)
\quad (1\le j \le n-1),
\end{align} 
where $E^\pm$ in the right hand side 
are the second largest eigenvalues 
in the one-species sector introduced in section \ref{a:1sASEP}.
In view of \eqref{a:increl}, we know 
$E^\pm_j(m) \in {\rm Spec}(m)$.
Note that ${\rm Re}(E^+_j(m))={\rm Re}(E^-_j(m))$,
but the subscript $j$ does not necessarily reflect 
the ordering of the eigenvalues 
with respect to their real parts.
Generalizing the previous 
observation on figure \ref{plot_2131},
we make 

\vspace{0.3cm}\noindent
\begin{conjecture}\label{k:con:sp}
In any sector $m=(m_1,\dots,m_n) \in {\mathcal M}$,
there is no eigenvalue $E \in {\rm Spec}(m)$ such that
\begin{align}
\max\{{\rm Re}\,E^{\pm}_1(m),\ldots, 
{\rm Re}\,E^{\pm}_{n-1}(m)\} < {\rm Re}\,E < 0.
\end{align}
\end{conjecture}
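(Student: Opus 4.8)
\emph{Reduction.} The inclusion \eqref{a:increl} already places every $E^\pm_j(m)$ inside ${\rm Spec}(m)$, so conjecture \ref{k:con:sp} is equivalent to the complementary statement that \emph{every eigenvalue of $H(m)$ is either $0$ or has real part $\le\max_{1\le j\le n-1}{\rm Re}\,E^\pm_j(m)$}. Since $0$ is the unique eigenvalue with vanishing real part, this is a question about the eigenvalues of $H(m)$ lying in a neighbourhood of the origin, and the plan is to extract them from the nested Bethe ansatz of section \ref{sec:5} and compare with the one-species data of section \ref{a:1sASEP}.

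\emph{Bethe-ansatz route.} By section \ref{sec:5}, an eigenvalue of $H(m)$, $m=(m_1,\dots,m_n)\in{\mathcal M}$, has the form $E=\sum_{a=1}^{M_1}\varepsilon(z_a)$ with $M_1=L-m_1$, where $\varepsilon$ is the one-particle energy and the level-one roots $\{z_a\}$, together with the higher-level roots at levels $2,\dots,n-1$ (level $k$ carrying $M_k=m_{k+1}+\cdots+m_n$ roots), solve the nested Bethe equations; only the level-one roots enter $E$. First I would show that an eigenvalue with ${\rm Re}\,E$ near $0$ corresponds to a Bethe solution close to the ``reference'' configuration (all roots at their trivial values), which represents the stationary state. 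Linearising the nested equations about the reference, I expect the slowest branches to be those that excite the roots of a single level $k$ only: on such a branch the level-$k$ equations should close up into the Bethe equations of the one-species ASEP with $M_k$ particles, so the corresponding eigenvalue is $E^\pm(m_1+\cdots+m_k,M_k)=E^\pm_k(m)$ of \eqref{k:eq:epmj} -- the mechanism behind \eqref{a:increl} explained in section \ref{sec:5-2-2}, and the string seen in figure \ref{plot_2131}(b). One then has to show that exciting two or more levels at once always drives ${\rm Re}\,E$ strictly below every $E^\pm_k(m)$ -- heuristically because each excited level drags further level-one roots off the trivial value -- and, using the completeness of the Bethe ansatz (section \ref{k:subs:cp}, appendix \ref{sec:appC}), that these branches exhaust the spectrum near $0$. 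Together with the asymptotics \eqref{a:1KPZ} and \eqref{a:1EW} this gives the conjecture and the dynamical exponent of section \ref{sec:3}.

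\emph{The isotropic point $p=q$, as a check.} Here the statement can be obtained directly. The sector $(1^L)=(1,\dots,1)$ is the finest one, so the general inclusion (theorem \ref{k:cor:spe}) gives ${\rm Spec}(m)\subseteq{\rm Spec}(1^L)$ for every $m\in{\mathcal M}$; and for $p=q$ the dynamics on $(1^L)$ is exactly the symmetric interchange process on the cycle ${\mathbb Z}_L$ with each bond ringing at rate $p$, whose spectral gap is that of a single walker, $4p\sin^2(\tfrac{\pi}{L})$ -- Aldous's spectral gap conjecture, later proved by Caputo--Liggett--Richthammer. Hence ${\rm Spec}(1^L)$, and so ${\rm Spec}(m)$, has no eigenvalue with real part in $(-4p\sin^2(\tfrac{\pi}{L}),0)$; and since $0<M_j<L$, each $E^\pm_j(m)=E^\pm(m_1+\cdots+m_j,M_j)$ equals $-4p\sin^2(\tfrac{\pi}{L})$, which is precisely the assertion of the conjecture at $p=q$.

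\emph{Main obstacle.} Everything hinges on the Bethe-ansatz step for $p\ne q$: turning ``low-lying excitations decouple by level near the reference'' into a rigorous finite-$L$ classification showing that no Bethe solution has ${\rm Re}\,E$ strictly between $\max_j{\rm Re}\,E^\pm_j(m)$ and $0$. This is hard because $H$ is not Hermitian for $p\ne q$, so no variational or Dirichlet-form comparison is available; because the required completeness is itself only conjectural; and because there is no single-particle gap identity to replace Aldous's theorem. This is why the statement is left open here.
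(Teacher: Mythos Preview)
The paper does not prove this statement: it is explicitly labelled a conjecture, and the only support offered is numerical verification in all sectors with $\dim V(m)<8000$ (the sentence immediately following the statement). There is therefore no proof in the paper to compare against.

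Your proposal is candid about this. The Bethe-ansatz route you outline for $p\neq q$ is a reasonable heuristic, but --- as you yourself say in the final paragraph --- it is not a proof: the step ``low-lying excitations decouple by level near the reference configuration'' is not made rigorous, and the whole argument would in any case rest on the completeness conjecture~\ref{k:con:cp}, which is itself open. So for $p\neq q$ you are in the same position as the paper: the statement remains conjectural.

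Your $p=q$ paragraph, however, is a genuine addition beyond what the paper contains. The inclusion ${\rm Spec}(m)\subseteq{\rm Spec}(1^L)$ is theorem~\ref{k:cor:spe}; at $p=q$ the Hamiltonian on the maximal sector $(1^L)=\Omega$ is (minus) the generator of the interchange process on the cycle $\mathbb{Z}_L$; and the Caputo--Liggett--Richthammer theorem (Aldous's spectral-gap conjecture) identifies its gap with that of a single walker, namely $4p\sin^2(\pi/L)$. Hence no eigenvalue of any $H(m)$ lies in $\bigl(-4p\sin^2(\pi/L),0\bigr)$, while each $E^\pm_j(m)$ equals $-4p\sin^2(\pi/L)$, which is exactly the content of the conjecture at $p=q$. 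The paper does not make this observation; it predates the Caputo--Liggett--Richthammer proof.
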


\vspace{0.3cm}
The one-species case $n=2$ is trivially true by the definition.
So far the conjecture has been checked in all the sectors $m$
satisfying $\dim V(m) <8000$.

Admitting the conjecture, we are able to claim that 
the second largest eigenvalues in 
${\rm Spec}(m)$ are equal to 
$E^\pm_j(m)$ for some $1 \le j \le n-1$.
(Such $j$ may not be unique.)
The asymptotic behavior of $E^\pm_j(m)$ is 
derived from  \eqref{a:1KPZ} and \eqref{k:eq:epmj} as
\begin{align}\label{a:KPZ}
\begin{split}
&E^\pm_j (\rho_1 L,\dots,\rho_n L) \\
&=
 \pm 2\pi \i|(p-q)(1-2r_j)| L^{-1} 
 - 2C|p-q| \sqrt{r_j(1-r_j)}
 L^{-\frac{3}{2}} +  O (L^{-2})
\end{split}
\end{align}
for $p\neq q$,  where $r_j = \rho_1+\cdots+\rho_j$ is fixed.
We remark that the leading terms in \eqref{a:KPZ}
depend on $j$ only through $r_j$ in the amplitudes.
We call the eigenvalues  
$E^\pm_1(m),\ldots, E^\pm_{n-1}(m)$ 
{\em next leading}. 
Thus the second largest eigenvalues are next leading.
All the next leading eigenvalues possess 
the same asymptotic behavior 
as the second largest ones up to the amplitudes
as far as the first 2 leading terms in \eqref{a:KPZ}
are concerned. 

With regard to the SSEP case $p=q$,
the stationary state is an equilibrium state.
$H(m)$ is Hermitian and  $E^{\pm}_j(m)$ is real.
We have the following explicit form
as in the one-species case:
\begin{align}
E_1^{\pm}(m)= \cdots = E_{n-1}^{\pm}(m)
= - 4p\sin^2(\frac{\pi}{L}).
\end{align}
In other words,
the next leading eigenvalues $E_j^{\pm}(m)$
are degenerated in the SSEP limit $p-q\to 0$.
See figure \ref{degen},
where the string of the next leading eigenvalues 
shrinks to a point on the real axis as $p-q$ approaches $0$.

\begin{figure}[h]
\begin{center}
\includegraphics{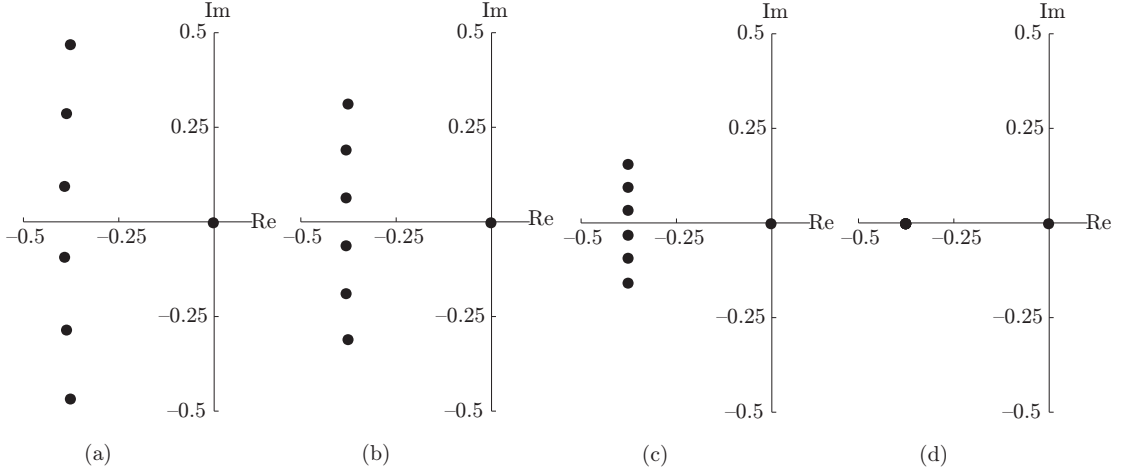}
\label{degen}
\caption{Degeneracy in the SSEP limit $p-q\to 0$ 
in the sector $(2,1,3,1)$.
$(p,q)$ is taken as 
(a) $(0.8,0.2)$, (b) $(0.7,0.3)$, (c) $(0.6,0.4)$
and (d) $(0.5, 0.5)$.}
\end{center}
\end{figure}
As the one-species case \eqref{a:1EW}, we find
\begin{align}\label{a:EW}
E^{\pm}_j(\rho_1 L,\dots,\rho_n L)
= - 4\pi ^2pL^{-2} + O(L^{-4})\ .
\end{align}

To summarize, the results \eqref{a:KPZ} and \eqref{a:EW} 
lead to the following behavior of the relaxation time $\tau$:
\begin{align}\label{k:eq:tl}
\tau \sim
\begin{cases}
 L^{\frac{3}{2}} & \text{for }\; p\neq q, \\
 L^{2}&  \text{for }\; p=q,
\end{cases}\quad (L \rightarrow \infty).
\end{align}
Therefore we conclude that 
the dynamical exponent of 
the multi-species ASEP is  
independent of the number of species.
It belongs to 
the KPZ universality class ($z=\frac{3}{2}$) 
for $p\neq q$, 
and to the EW universality class ($z=2$) for $p=q$.

We leave it as a future study to investigate the
gap between the next leading eigenvalues 
and further smaller eigenvalues, which governs 
the pre-asymptotic behavior of the multi-species ASEP.

\section{Duality in Spectrum}\label{sec:4}

Throughout this section, 
a sector means a basic sector as promised in 
section \ref{k:subsec:bph}.
We fix the number of sites in the ring 
$L \in \Z_{\ge 2}$.
Our goal is to prove 
theorem \ref{k:th:sd}, which exhibits 
a duality in the spectrum of Hamiltonian.

\subsection{Another label of sectors}\label{subsec:sector}

Set
\begin{align}
\Omega &=\{1,2,\ldots, L-1\},\label{k:eq:ome}\\
{\mathcal S}&= \text{the power set of } \Omega.
\label{k:eq:I}
\end{align}
Recall that the sectors in the length $L$ chain
are labeled with the set ${\mathcal M}$ \eqref{k:eq:M}.
We identify ${\mathcal M}$ with 
${\mathcal S}$ by the one to one correspondence:
\begin{align}\label{k:eq:bi}
{\mathcal M} \ni m=(m_1,\ldots, m_n) \longleftrightarrow 
\{s_1 < \cdots < s_{n-1}\}={\mathfrak s} \in {\mathcal S}
\end{align}
specified via $s_j = m_1+ m_2 + \cdots + m_j$, namely,
\begin{align*}
\underbrace{
\underbrace{
\underbrace{
\overbrace{\circ \cdots \circ}^{m_1}}_{s_1}\vert 
\overbrace{\circ \cdots \circ}^{m_2}}_{s_2}\vert\cdots }_{s_{n-1}}
\vert\overbrace{\circ \cdots \circ}^{m_n},
\end{align*}
where the numbers of the symbols $\circ$ and $\vert$ are 
$L$ and $n-1$, respectively.
For example the identification
${\mathcal M} \leftrightarrow {\mathcal S}$ 
for $L=4$ is given as follows:
\begin{center}
$(1,1,1,1)\leftrightarrow \{1,2,3\}$\\
\vspace{0.2cm}
$(1,1,2)\leftrightarrow \{1,2\}
\quad (1,2,1)\leftrightarrow \{1,3\}\quad 
(2,1,1)\leftrightarrow \{2,3\}$\\
\vspace{0.2cm}
$(1,3)\leftrightarrow \{1\}\quad (2,2)\leftrightarrow\{2\} 
\quad (3,1)\leftrightarrow\{3\}$\\
\vspace{0.2cm}
$(4) \leftrightarrow\emptyset$.
\end{center}

An element of ${\mathcal S}$ will also be called a sector.
In the remainder of this section we will mostly work with 
the label ${\mathcal S}$ instead of ${\mathcal M}$.
There are $\sharp{\mathcal S} = 2^{L-1}$ distinct sectors.
We employ the notation:
\begin{align}\label{k:eq:cmp}
\overline{\mathfrak{s}} = 
\Omega \setminus \mathfrak{s} 
=\text{complement sector of }\, \mathfrak{s}.
\end{align}
For a sector 
$\mathfrak{s}=\{s_1<\cdots <s_{n-1}\} \in {\mathcal S}$, 
we introduce the set ${\mathcal P}(\mathfrak{s})$ by
(see \eqref{k:eq:sort})
\begin{align}\label{k:eq:pdef}
{\mathcal P}(\mathfrak{s}) = \{k=(k_1,\ldots, k_L)\mid
\text{Sort}(k)
= \overbrace{1 . . . 1}^{s_1}
\overbrace{2 . . . 2}^{s_2-s_1}\cdots
\overbrace{n\!-\!1 . . .  n\!-\!1}^{s_{n-1}-s_{n-2}}\!
\overbrace{n . . .  n}^{L-s_{n-1}}\},
\end{align}
where Sort stands for the ordering non decreasing to the right as
in (\ref{k:eq:sort}).
For $\mathfrak{s}=\emptyset \in {\mathcal S}$, this definition should 
be understood as 
${\mathcal P}(\emptyset)=\{(1,\ldots, 1)\}$.

To each sector we associate the bra and ket vector spaces 
\begin{align}
V_\mathfrak{s}^\ast = \bigoplus_{k \in {\mathcal P}(\mathfrak{s})}
\C \langle k_1,\ldots, k_L\vert,
\quad
V_\mathfrak{s} = \bigoplus_{k \in {\mathcal P}(\mathfrak{s})}
\C \vert k_1,\ldots, k_L\rangle.
\end{align}
Here $k_1, \ldots, k_L \in \{1,\ldots, L\}$ stand for local states.
For example if $L=3$, one has
\begin{align*}
V_\emptyset &= \C \vert 111\rangle,\\
V_{\{1\}}&= \C\vert 122\rangle \oplus 
\C\vert 212\rangle \oplus \C\vert 221\rangle,\\
V_{\{2\}}&=\C\vert 112\rangle \oplus \C\vert 121\rangle
\oplus \C\vert 211\rangle,\\
V_{\scriptscriptstyle \Omega} =V_{\{1,2\}}&= 
\C\vert 123\rangle \oplus \C\vert 132\rangle \oplus
\C\vert 213\rangle \oplus \C\vert 231\rangle \oplus
\C\vert 312\rangle \oplus \C\vert 321\rangle.
\end{align*}
Note that the vectors like $\vert 2 2 2 \rangle$ 
and $\vert 113 \rangle$ are {\em not} included 
in any $V_\mathfrak{s}$ because 
we are concerned with basic sectors only.
See (\ref{k:eq:pdef}).  
In general, one has
\begin{align}\label{k:eq:ddv}
\dim V_\mathfrak{s} = \dim V_\mathfrak{s}^\ast 
= \frac{L!}{s_1!(s_2\!-\!s_1)! \cdots(s_{n-1}\!-\!s_{n-2})!
(L\!-\!s_{n-1})!}
\end{align}
for $\mathfrak{s}=\{s_1<\cdots <s_{n-1}\} \in {\mathcal S}$.

Suppose ${\mathcal M}\ni m 
\leftrightarrow {\mathfrak s} \in {\mathcal S}$ 
under the correspondence (\ref{k:eq:bi}).
We renew the symbols $H(m)$ and $V(m)$ 
in \eqref{k:eq:m-sum}--\eqref{k:eq:hm} as
\begin{align}
V_{\mathfrak s} = V(m),\quad 
H_{\mathfrak s}=H(m).\label{k:eq:hh}
\end{align}

The set ${\mathcal S}$  
is equipped with the natural poset
(partially ordered set) structure 
with respect to $\subseteq$.
The poset structure is encoded 
in the Hasse diagram \cite{St}, which is useful 
in our working below. 
In the present case, it is just 
the $L\!-\!1$ dimensional hypercube,  
where each vertex corresponds to a sector.
Sectors are so arranged that every edge of the hypercube 
becomes an arrow $\mathfrak{s} \rightarrow \mathfrak{t}$ 
meaning that $\mathfrak{s} \subset \mathfrak{t}$ and 
$\sharp \mathfrak{t} = \sharp \mathfrak{s} + 1$.  
There is the unique sink corresponding to the maximal sector 
$\Omega \in {\mathcal S}$ and the unique source 
corresponding to the minimal sector 
$\emptyset \in {\mathcal S}$.
See figure \ref{Hasse}.

\begin{figure}[h]
\begin{center}
\includegraphics{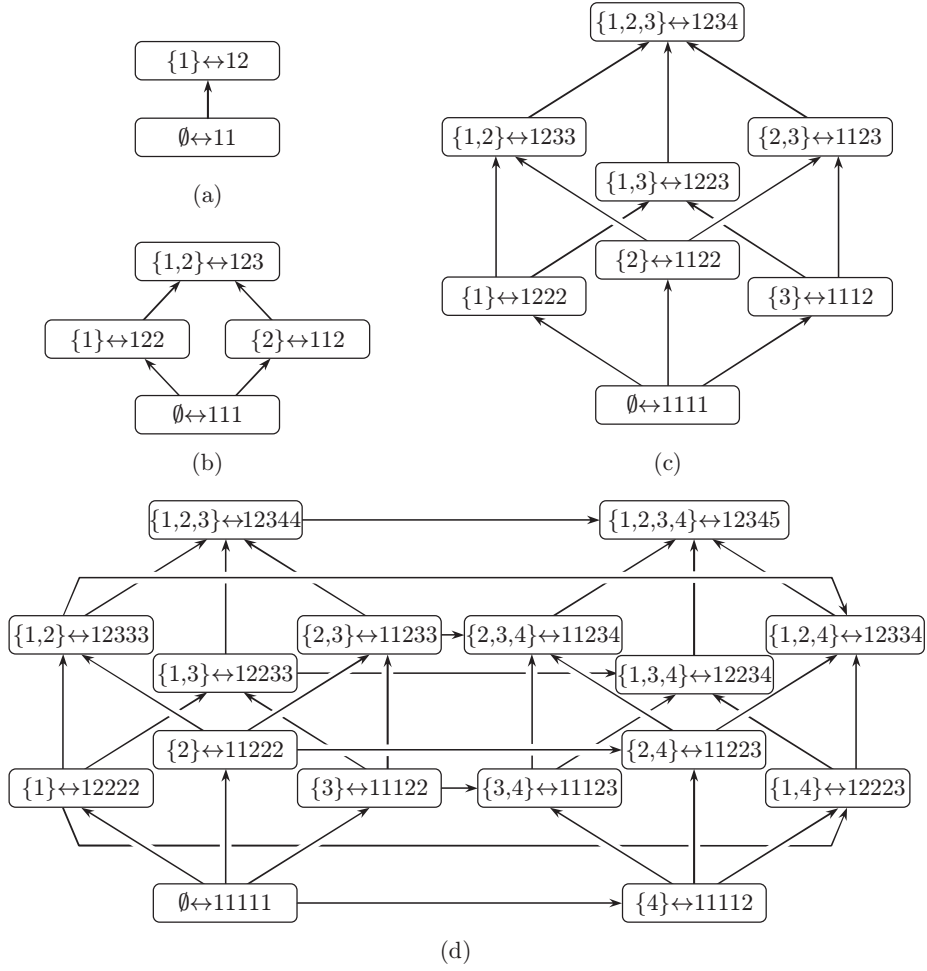}
 \caption{Hasse diagrams 
(a), (b), (c) and (d) for $L=2,3,4$ and $5$, respectively.
Sectors are labeled by ${\mathcal S}$ (\ref{k:eq:I}) 
as well as the sequence ${\rm Sort}(k)$ 
of local states as in (\ref{k:eq:pdef}).}
 \label{Hasse}
\end{center}
\end{figure}

We introduce the natural bilinear pairing between the bra and ket vectors by 
\begin{align}\label{k:eq:bp}
\langle k_1,\ldots, k_L \vert j_1,\ldots, j_L\rangle = 
\delta_{k_1, j_1}\cdots \delta_{k_L, j_L}.
\end{align}
With respect to the pairing, 
$V_\mathfrak{s}^\ast$ and $V_\mathfrak{t}$ 
are dual if $\mathfrak{s}=\mathfrak{t}$ and orthogonal 
if $\mathfrak{s}\neq \mathfrak{t}$. 

Any linear operator $\overset{\rightarrow}{G}$ acting on ket vectors 
give rise to the unique linear operator 
$\overset{\leftarrow}{G}$ acting on bra vectors via
$\left(\langle k_1,\ldots, k_L \vert \overset{\leftarrow}{G}\right)
\vert j_1,\ldots, j_L\rangle
= \langle k_1,\ldots, k_L \vert \left(\overset{\rightarrow}{G}
\vert j_1,\ldots, j_L\rangle\right)$
and vice versa.
We write this quantity simply as 
$\langle k_1,\ldots, k_L \vert G \vert j_1,\ldots, j_L\rangle$ as usual,
and omit $\overset{\leftarrow}{\phantom{e}}$ and 
$\overset{\rightarrow}{\phantom{e}}$ unless 
an emphasis is preferable.
The transpose $G^{\rm T}$ of $G$ is defined by
$\langle k \vert G^{\rm T} \vert j \rangle = 
\langle j \vert G \vert k \rangle \;(=:G_{jk})$
for any $k=(k_1,\ldots, k_L)$ and $j=(j_1,\ldots, j_L)$.
Of course $\overset{\rightarrow}{G}{}^{\rm T}$ 
is equivalent to $\overset{\leftarrow}{G}$ in the sense that
\begin{align}\label{k:eq:gt}
G^{\rm T}\vert k \rangle = \sum_j G_{k j}\vert j \rangle,\quad
\langle k \vert G = \sum_j G_{k j}\langle j \vert.
\end{align}

\subsection{\mathversion{bold} 
Operator $\varphi_{\mathfrak{s},\mathfrak{t}}$}
\label{subsec:vphi}

Let $\mathfrak{s}, \mathfrak{t} \in {\mathcal S}$ 
be sectors such that $\mathfrak{s} \subseteq \mathfrak{t}$.
We introduce a $\C$-linear operator $\varphi_{\mathfrak{s,t}}$ 
in terms of its action on ket vectors 
$\overset{\rightarrow}{\varphi}_{\mathfrak{s},\mathfrak{t}}:
V_{\mathfrak{t}} \rightarrow V_{\mathfrak{s}}$.
We define $\varphi_{\mathfrak{s}, \mathfrak{s}}$ 
to be the identity operator for any sector $\mathfrak{s}$.
Before giving the general definition of the case 
$\mathfrak{s} \subset \mathfrak{t}$,
we illustrate it with the example
$\mathfrak{s}=\{2,5\} \subset \mathfrak{t}=\{2,3,5,8\}$ 
with $L=9$.
The Sort sequence of the local states 
in the sense of (\ref{k:eq:pdef})
for ${\mathcal P}(\mathfrak{t})$ 
and ${\mathcal P}(\mathfrak{s})$ read as follows:
\begin{align}\label{k:eq:ppex}
\begin{split}
{\mathcal P}(\mathfrak{t}=\{2,3,5,8\}): 
&\quad11\overset{2}{\vert}2\overset{3}{\vert}33
\overset{5}{\vert}444\overset{8}{\vert}5,\\
{\mathcal P}(\mathfrak{s}=\{2,5\}):
&\quad11\overset{2}{\vert}2\, 2\, 2 \overset{5}{\vert}
3\,3\,3\,3.
\end{split}
\end{align}
According to these lists, we define 
${\varphi}_{\mathfrak{s},\mathfrak{t}}$ to be the operator 
replacing the local states as
$3 \rightarrow 2, 4 \rightarrow 3, 5 \rightarrow 3$ 
(keeping $1$ and $2$ unchanged)
within all the ket vectors $\vert k_1,\ldots, k_L\rangle$ in 
$V_\mathfrak{t}$.

General definition of 
${\varphi}_{\mathfrak{s},\mathfrak{t}}$ 
is similar and goes as follows.
Suppose $\mathfrak{t}=\{t_1<\cdots < t_n\}$ and 
$\mathfrak{s} = \mathfrak{t}\setminus\{t_{i_1}, \ldots, t_{i_l}\}$.
Then ${\varphi}_{\mathfrak{s},\mathfrak{t}}$ is 
a $\C$-linear operator determined by its action on 
base vectors as follows: 
\begin{align}\label{k:eq:pvv}
\begin{array}{cccc}
\overset{\rightarrow}{\varphi}_{\mathfrak{s}, \mathfrak{t}}:
& V_\mathfrak{t}               & \longrightarrow & V_\mathfrak{s}\\
& \vert k_1,\ldots, k_L\rangle & \mapsto & \vert k'_1, \ldots, k'_L\rangle,
\end{array}
\end{align}
where $x' = x-\sharp\{i_j \mid i_j <x\}$.

\begin{example}\label{k:ex:vp}
\begin{align*}
\vert \phi\rangle :&= \vert 21433\rangle - \vert 12343\rangle 
\in V_{\mathfrak{t}} \;\;\;(\mathfrak{t}=\{1,2,4\}, \;L=5),\\
{\varphi}_{12, 124}\vert \phi\rangle
&=\vert 21333\rangle - \vert 12333\rangle,\;\;
\begin{cases}{\varphi}_{1, 12}\,
{\varphi}_{12, 124}\vert \phi\rangle
=\vert 21222\rangle - \vert 12222\rangle,\\
{\varphi}_{2, 12}\,
{\varphi}_{12, 124}\vert \phi\rangle
=0,
\end{cases}\\
{\varphi}_{14, 124}\vert \phi\rangle
&=\vert 21322\rangle - \vert 12232\rangle,\;\;
\begin{cases}{\varphi}_{1, 14}\,
{\varphi}_{14, 124}\vert \phi\rangle
=\vert 21222\rangle - \vert 12222\rangle,\\
{\varphi}_{4, 14}\,
{\varphi}_{14, 124}\vert \phi\rangle
=\vert 11211\rangle - \vert 11121\rangle,
\end{cases}\\
{\varphi}_{24, 124}\vert \phi\rangle
&=\vert 11322\rangle - \vert 11232\rangle,\;\;
\begin{cases}{\varphi}_{2, 24}\,
{\varphi}_{24, 124}\vert \phi\rangle
=0,\\
{\varphi}_{4, 24}\,
{\varphi}_{24, 124}\vert \phi\rangle
=\vert 11211\rangle - \vert 11121\rangle,
\end{cases}\\
{\varphi}_{1, 124}\vert \phi\rangle
&=\vert 21222\rangle - \vert 12222\rangle,\\
{\varphi}_{4, 124}\vert \phi\rangle
&=\vert 11211\rangle - \vert 11121\rangle,\\
{\varphi}_{2, 124}\vert \phi\rangle
&=0,\quad
{\varphi}_{\emptyset, 124}\vert \phi\rangle =0,
\end{align*}
where ${\varphi}_{12, 124}$
is an abbreviation of 
${\varphi}_{\{1,2\}, \{1,2,4\}}$,  etc.
\end{example}

The following property 
of ${\varphi}_{\mathfrak{s},\mathfrak{t}}$
is a direct consequence of the definition.
\begin{lemma}\label{k:le:fac}
For a pair of sectors $\mathfrak{s} \subset \mathfrak{t}$, 
let $\mathfrak{s}_0 \subset \mathfrak{s}_1 
\subset \cdots \subset \mathfrak{s}_l$
be any sectors such that 
$\mathfrak{s}_0 = \mathfrak{s}, \mathfrak{s}_l = \mathfrak{t}$ and 
$\sharp \mathfrak{s}_{j+1}=\sharp \mathfrak{s}_j + 1$ 
for all $0 \le j <l$.
Then,  
\begin{align*}
{\varphi}_{\mathfrak{s},\mathfrak{t}}
={\varphi}_{\mathfrak{s}, \mathfrak{s}_1}\,
{\varphi}_{\mathfrak{s}_1, \mathfrak{s}_2}\cdots
{\varphi}_{\mathfrak{s}_{l-1},\mathfrak{t}}.
\end{align*}
In particular, the composition in the right hand side is 
independent of the choice of the intermediate 
sectors $\mathfrak{s}_1, \ldots, \mathfrak{s}_{l-1}$.
\end{lemma}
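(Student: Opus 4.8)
\emph{Proof plan.} The idea is to reduce the statement to a single ``cocycle'' identity together with a short induction. Since every $\varphi_{\mathfrak{s},\mathfrak{t}}$ is $\C$-linear and, by \eqref{k:eq:pvv}, acts on a base vector $\vert k_1,\ldots,k_L\rangle$ by applying one fixed map $f_{\mathfrak{s},\mathfrak{t}}\colon x\mapsto x-\sharp\{i_j\mid i_j<x\}$ to each component $k_i$ separately, it suffices to argue at the level of these maps on local-state values. First I would record the elementary facts that $f_{\mathfrak{s},\mathfrak{t}}$ is non-decreasing (the increments $f_{\mathfrak{s},\mathfrak{t}}(x+1)-f_{\mathfrak{s},\mathfrak{t}}(x)$ lie in $\{0,1\}$ because the $i_j$ are distinct) and surjective onto the value set of $V_\mathfrak{s}$, and that the induced operator does map $V_\mathfrak{t}$ into $V_\mathfrak{s}$: the multiplicity of a value $j$ in $\mathrm{Sort}(k')$ equals $\sum_{x\colon f_{\mathfrak{s},\mathfrak{t}}(x)=j}(t_x-t_{x-1})$, a sum over a block of consecutive $x$'s that telescopes to $s_j-s_{j-1}$. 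This guarantees that all compositions appearing in the statement are well defined.

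The core step is the cocycle identity
\begin{align*}
\varphi_{\mathfrak{r},\mathfrak{s}}\,\varphi_{\mathfrak{s},\mathfrak{t}}=\varphi_{\mathfrak{r},\mathfrak{t}}\qquad\text{for all sectors }\mathfrak{r}\subseteq\mathfrak{s}\subseteq\mathfrak{t}.
\end{align*}
I would prove it through a ``block'' reading of \eqref{k:eq:pvv}. Writing $\mathfrak{t}=\{t_1<\cdots<t_n\}$ and setting $t_0=0$, $t_{n+1}=L$, identify the local state $x\in\{1,\ldots,n+1\}$ of $V_\mathfrak{t}$ with the block $(t_{x-1},t_x]\subseteq\{1,\ldots,L\}$. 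Since $\mathfrak{s}\subseteq\mathfrak{t}$, every $\mathfrak{t}$-block is contained in exactly one $\mathfrak{s}$-block, and a direct count (the number of cut points of $\mathfrak{s}$ not exceeding $t_{x-1}$ is $(x-1)-\sharp\{i_j\mid i_j<x\}$) identifies the value of that $\mathfrak{s}$-block with $x-\sharp\{i_j\mid i_j<x\}=f_{\mathfrak{s},\mathfrak{t}}(x)$. Thus $\varphi_{\mathfrak{s},\mathfrak{t}}$ is precisely the ``block-inclusion'' map, and the cocycle identity drops out of transitivity of block containment: if $(t_{x-1},t_x]\subseteq(s_{y-1},s_y]\subseteq(r_{z-1},r_z]$ with $y=f_{\mathfrak{s},\mathfrak{t}}(x)$ and $z=f_{\mathfrak{r},\mathfrak{s}}(y)$, then $(t_{x-1},t_x]\subseteq(r_{z-1},r_z]$, hence $f_{\mathfrak{r},\mathfrak{t}}(x)=z$.

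Granting the cocycle identity, the lemma follows by induction on $l$. The case $l=1$ is vacuous. For $l\ge 2$, applying the induction hypothesis to the chain $\mathfrak{s}_1\subset\cdots\subset\mathfrak{s}_l=\mathfrak{t}$ gives $\varphi_{\mathfrak{s}_1,\mathfrak{s}_2}\cdots\varphi_{\mathfrak{s}_{l-1},\mathfrak{t}}=\varphi_{\mathfrak{s}_1,\mathfrak{t}}$, and then the cocycle identity for $\mathfrak{s}=\mathfrak{s}_0\subseteq\mathfrak{s}_1\subseteq\mathfrak{t}$ yields $\varphi_{\mathfrak{s},\mathfrak{s}_1}\varphi_{\mathfrak{s}_1,\mathfrak{t}}=\varphi_{\mathfrak{s},\mathfrak{t}}$; since this answer does not involve the intermediate sectors, independence of the choice is automatic. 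The only step that needs care — and it is bookkeeping rather than a genuine difficulty — is matching the combinatorial formula \eqref{k:eq:pvv} with the block description uniformly across all values, including the extremes $x=1$, $x=n+1$ and the boundary conventions $t_0=0$, $t_{n+1}=L$.
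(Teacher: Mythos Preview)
Your proposal is correct. The paper itself gives no proof beyond the sentence ``The following property of $\varphi_{\mathfrak{s},\mathfrak{t}}$ is a direct consequence of the definition,'' so your argument is a fully fleshed-out version of what the authors leave implicit. Your block interpretation --- identifying the local state $x$ in $V_\mathfrak{t}$ with the interval $(t_{x-1},t_x]$ and reading $\varphi_{\mathfrak{s},\mathfrak{t}}$ as the coarsening map induced by $\mathfrak{s}\subseteq\mathfrak{t}$ --- is exactly the right way to make ``direct from the definition'' precise, and the transitivity of block containment is the genuine content behind the lemma.

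One small remark: the cocycle identity $\varphi_{\mathfrak{r},\mathfrak{s}}\,\varphi_{\mathfrak{s},\mathfrak{t}}=\varphi_{\mathfrak{r},\mathfrak{t}}$ that you prove is strictly stronger than the lemma as stated (which only asserts factorization along maximal chains), and in fact the paper tacitly relies on this stronger form later, e.g.\ in the proof of Lemma~\ref{k:le:key}(2). So your formulation is not only correct but also better matched to how the result is actually used.
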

In example \ref{k:ex:vp}, one can observe, for instance, 
${\varphi}_{4, 124}\vert \phi\rangle
={\varphi}_{4, 14}\,{\varphi}_{14, 124}\vert \phi\rangle
={\varphi}_{4, 24}\,{\varphi}_{24, 124}\vert \phi\rangle$.

Let us turn to the transpose 
$\varphi_{\mathfrak{s},\mathfrak{t}}^{\rm T}$.
By the definition (see (\ref{k:eq:gt})), we have
\begin{align}\label{k:eq:pvt}
\begin{array}{cccc}
\overset{\rightarrow}{\varphi}_{\mathfrak{s}, \mathfrak{t}}\!\!\!\!{}^{\rm T}:
 & V_\mathfrak{s} & \longrightarrow & V_\mathfrak{t}\\
 & \vert k_1,\ldots, k_L\rangle & \mapsto & 
\displaystyle
\sum_{j \in {\mathcal P}(\mathfrak{t})}\!\!{}^\prime \;
\vert j_1, \ldots, j_L\rangle,
\end{array}
\end{align}
where $\Sigma^\prime$ extends over 
those $j=(j_1,\ldots, j_L)\in {\mathcal P}(\mathfrak{t})$ such that 
${\varphi}_{\mathfrak{s},\mathfrak{t}}\vert j_1, \ldots, j_L\rangle
= \vert k_1,\ldots, k_L\rangle$.
For example in example \ref{k:ex:vp}, one has
\begin{align*}
\varphi^{\rm T}_{14, 124}\vert 21322\rangle
=\vert 21433\rangle + \vert 31423\rangle + \vert 31432\rangle.
\end{align*}
{}From (\ref{k:eq:pvt}) and (\ref{k:eq:pvv}) it follows that
$\varphi_{\mathfrak{s}, \mathfrak{t}}\,
\varphi^{\rm T}_{\mathfrak{s}, \mathfrak{t}} = 
\frac{\dim V_\mathfrak{t}}{\dim V_\mathfrak{s}}\,{\rm Id}$,
which actually means
\begin{align}\label{k:eq:pp}
\overset{\rightarrow}{\varphi}_{\mathfrak{s}, \mathfrak{t}}\,
\overset{\rightarrow}{\varphi}_{\mathfrak{s}, \mathfrak{t}}
\!\!\!\!{}^{\rm T}\,\,\,\,\, 
= \frac{\dim V_\mathfrak{t}}
{\dim V_\mathfrak{s}}\,{\rm Id}_{V_\mathfrak{s}},\quad
\overset{\leftarrow}{\varphi}_{\mathfrak{s}, \mathfrak{t}}\,
\overset{\leftarrow}{\varphi}_{\mathfrak{s}, \mathfrak{t}}
\!\!\!\!{}^{\rm T}\,\,\,\,\, 
= \frac{\dim V_\mathfrak{t}}{\dim V_\mathfrak{s}}
\,{\rm Id}_{V_\mathfrak{s}^\ast}
\end{align}
for any sectors $\mathfrak{s} \subset \mathfrak{t}$.
As a result, we obtain
\begin{lemma}\label{k:le:si}
Let $\mathfrak{s} \subset \mathfrak{t}$ be any sectors.

(1)\; $\overset{\rightarrow}
{\varphi}_{\mathfrak{s}, \mathfrak{t}}: 
V_\mathfrak{t} \rightarrow V_\mathfrak{s}$ is 
surjective.

(2)\; $\overset{\leftarrow}{\varphi}_{\mathfrak{s}, \mathfrak{t}}:
V_\mathfrak{s}^\ast \rightarrow V_\mathfrak{t}^\ast$ is 
injective.
\end{lemma}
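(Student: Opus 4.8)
The plan is to read off both assertions from the identity \eqref{k:eq:pp}, which is the only substantive ingredient and is already established. Note first that the scalar $\dim V_{\mathfrak t}/\dim V_{\mathfrak s}$ occurring there is a strictly positive rational number: both spaces are nonzero and finite-dimensional, and in fact $\dim V_{\mathfrak t}\ge\dim V_{\mathfrak s}$, since passing from $\mathfrak s$ to a sector $\mathfrak t\supset\mathfrak s$ refines the underlying composition and hence can only enlarge the multinomial coefficient \eqref{k:eq:ddv}. In particular the scalar is invertible in $\C$, and that is all we shall use about it.

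For part (1): the first relation in \eqref{k:eq:pp} says that the composite $\overset{\rightarrow}{\varphi}_{\mathfrak s,\mathfrak t}\circ\overset{\rightarrow}{\varphi}_{\mathfrak s,\mathfrak t}{}^{\rm T}\colon V_{\mathfrak s}\to V_{\mathfrak t}\to V_{\mathfrak s}$ equals $\tfrac{\dim V_{\mathfrak t}}{\dim V_{\mathfrak s}}\,{\rm Id}_{V_{\mathfrak s}}$, a nonzero scalar multiple of the identity. Hence $\tfrac{\dim V_{\mathfrak s}}{\dim V_{\mathfrak t}}\,\overset{\rightarrow}{\varphi}_{\mathfrak s,\mathfrak t}{}^{\rm T}$ is a right inverse of $\overset{\rightarrow}{\varphi}_{\mathfrak s,\mathfrak t}$, and a linear map admitting a right inverse is surjective. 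This gives $\overset{\rightarrow}{\varphi}_{\mathfrak s,\mathfrak t}\colon V_{\mathfrak t}\to V_{\mathfrak s}$ onto.

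For part (2): by the correspondence \eqref{k:eq:gt} between the two actions of an operator, $\overset{\leftarrow}{\varphi}_{\mathfrak s,\mathfrak t}\colon V_{\mathfrak s}^\ast\to V_{\mathfrak t}^\ast$ is, in the bases $\{\langle k\vert\}$ and $\{\vert k\rangle\}$, represented by the transpose of the matrix of $\overset{\rightarrow}{\varphi}_{\mathfrak s,\mathfrak t}\colon V_{\mathfrak t}\to V_{\mathfrak s}$; therefore $\overset{\leftarrow}{\varphi}_{\mathfrak s,\mathfrak t}$ is injective precisely because $\overset{\rightarrow}{\varphi}_{\mathfrak s,\mathfrak t}$ is surjective. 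Equivalently, and without invoking part (1), the second relation in \eqref{k:eq:pp} exhibits $\tfrac{\dim V_{\mathfrak s}}{\dim V_{\mathfrak t}}\,\overset{\leftarrow}{\varphi}_{\mathfrak s,\mathfrak t}{}^{\rm T}$ as a left inverse of $\overset{\leftarrow}{\varphi}_{\mathfrak s,\mathfrak t}$, and a map with a left inverse is injective.

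There is no genuine obstacle once \eqref{k:eq:pp} is granted; the argument is purely formal. The only point demanding a moment's care is the bookkeeping — keeping track of which space each of $\overset{\rightarrow}{\varphi}_{\mathfrak s,\mathfrak t}$, $\overset{\leftarrow}{\varphi}_{\mathfrak s,\mathfrak t}$ and their transposes acts on, and checking that $\dim V_{\mathfrak t}/\dim V_{\mathfrak s}\neq 0$ so that the one-sided-inverse criterion applies. Should one prefer to bypass \eqref{k:eq:pp}, Lemma \ref{k:le:fac} reduces the whole statement to the case $\sharp\mathfrak t=\sharp\mathfrak s+1$, in which $\overset{\rightarrow}{\varphi}_{\mathfrak s,\mathfrak t}$ merely amalgamates two consecutive blocks and surjectivity is immediate from the definition \eqref{k:eq:pvv}, with injectivity of $\overset{\leftarrow}{\varphi}_{\mathfrak s,\mathfrak t}$ following by transposition; but the route through \eqref{k:eq:pp} is shorter and is the one I would take.
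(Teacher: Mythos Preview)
Your proof is correct and follows the same approach as the paper: the paper simply states the lemma with the phrase ``As a result, we obtain'' immediately after equation \eqref{k:eq:pp}, leaving the right-inverse/left-inverse deduction implicit, and you have spelled out exactly that deduction. Your alternative route via Lemma \ref{k:le:fac} and the case $\sharp\mathfrak{t}=\sharp\mathfrak{s}+1$ is also valid but, as you note, not needed here.
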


The kernel of 
$\overset{\rightarrow}{\varphi}_{\mathfrak{s}, \mathfrak{t}}$ and 
the cokernel of 
$\overset{\leftarrow}{\varphi}_{\mathfrak{s}, \mathfrak{t}}$
will be the key in our 
derivation of the 
spectral duality in section \ref{subsec:genuine}.

By now it should be clear that 
$\overset{\rightarrow}
{\varphi}_{\mathfrak{t}\setminus \{n\}, \mathfrak{t}}$
kills ket vectors in a sector $\mathfrak{t}$ 
or send them to the neighboring smaller sector 
$\mathfrak{t}\setminus \{n\}$ 
in the Hasse diagram against one of the arrows.
Similarly, 
$\overset{\leftarrow}
{\varphi}_{\mathfrak{s}, \mathfrak{s}\cup \{n\}}$
never kills bra vectors in a sector $\mathfrak{s}$ and send them 
to the neighboring larger sector
$\mathfrak{s}\cup \{n\}$ in the Hasse diagram 
along one of the arrows.

\subsection{\mathversion{bold}
Commutativity of $\varphi_{\mathfrak{s}, \mathfrak{t}}$ and 
Hamiltonian}
\label{subsec:ph}

The action 
$H_\mathfrak{s}: V_\mathfrak{s} \rightarrow V_\mathfrak{s}$ 
of our Hamiltonian (\ref{k:eq:hh}) is 
specified by \eqref{k:hdef}--\eqref{k:eq:H} as
\begin{align}
H_\mathfrak{s} \vert k_1,\ldots, k_L\rangle = 
\sum_{i \in {\mathbb Z}_L}\Theta(k_i-k_{i+1})
\bigl(\vert k_1\ldots  k_{i+1}, k_i \ldots  k_L\rangle - 
\vert k_1\ldots k_i , k_{i+1}\ldots  k_L\rangle \bigr).
\end{align}

\begin{proposition}\label{k:pr:com}
$\varphi_{\mathfrak{s}, \mathfrak{t}}$ is spectrum preserving. 
Namely,  $\varphi_{\mathfrak{s}, \mathfrak{t}}\,
H_\mathfrak{t} = H_\mathfrak{s}\,
\varphi_{\mathfrak{s}, \mathfrak{t}}$
holds for any sectors $\mathfrak{s} \subset \mathfrak{t}$.
\end{proposition}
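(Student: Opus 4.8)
The plan is to reduce the identity $\varphi_{\mathfrak{s},\mathfrak{t}}H_\mathfrak{t}=H_\mathfrak{s}\varphi_{\mathfrak{s},\mathfrak{t}}$ (both sides being linear maps $V_\mathfrak{t}\to V_\mathfrak{s}$) to a two-site computation. By Lemma~\ref{k:le:fac} the operator $\varphi_{\mathfrak{s},\mathfrak{t}}$ is a composition of elementary operators each of which decreases the cardinality of the sector by one, so it suffices to treat the case $\mathfrak{t}=\{t_1<\cdots<t_n\}$ and $\mathfrak{s}=\mathfrak{t}\setminus\{t_i\}$ for some $i$. In that case the recipe \eqref{k:eq:pvv} says that on base vectors $\varphi_{\mathfrak{s},\mathfrak{t}}$ replaces each local state $x$ by $x-[x>i]$ (with $[\cdot]$ the indicator); that is, $\varphi_{\mathfrak{s},\mathfrak{t}}$ is the restriction to $V_\mathfrak{t}$ of the $L$-th tensor power $f^{\otimes L}$ of the single-site map $f\colon\C^{n+1}\to\C^{n}$, $f|x\rangle=|x-[x>i]\rangle$ --- the monotone surjection that merges the local states $i$ and $i+1$ and lowers all larger states by one. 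One checks directly from \eqref{k:eq:pdef} that $f^{\otimes L}$ sends $V_\mathfrak{t}$ into $V_\mathfrak{s}$.

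Because $H=\sum_{i\in\Z_L}h_{i,i+1}$ is a sum of local terms and $f^{\otimes L}$ acts factor-wise, it is enough to establish the two-site relation
\begin{align*}
(f\otimes f)\,h^{(n+1)}=h^{(n)}\,(f\otimes f),
\end{align*}
where $h^{(N)}|a\rangle\otimes|b\rangle=\Theta(a-b)\bigl(|b\rangle\otimes|a\rangle-|a\rangle\otimes|b\rangle\bigr)$ is the Hamiltonian density \eqref{k:eq:H} written for $N$ local states. Granting this, $f^{\otimes L}H^{(n+1)}=H^{(n)}f^{\otimes L}$ holds on $(\C^{n+1})^{\otimes L}$, and restricting this global identity to $V_\mathfrak{t}$ (using $f^{\otimes L}(V_\mathfrak{t})\subseteq V_\mathfrak{s}$) yields the proposition.

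To verify the two-site relation I would simply evaluate both sides on a basis vector $|a\rangle\otimes|b\rangle$. The only properties of $f$ that enter are that it is weakly monotone and that $f(a)=f(b)$ with $a\ne b$ occurs exactly when $\{a,b\}=\{i,i+1\}$. If $a=b$, both sides vanish. If $a\ne b$ and $f(a)\ne f(b)$, then $f(a)-f(b)$ has the same sign as $a-b$, so $\Theta(f(a)-f(b))=\Theta(a-b)$ and both sides equal $\Theta(a-b)\bigl(|f(b)\rangle|f(a)\rangle-|f(a)\rangle|f(b)\rangle\bigr)$. The only case that needs a second glance --- and the one place where the precise shape of $h$ is used --- is $\{a,b\}=\{i,i+1\}$: there $h^{(n)}$ kills $|f(a)\rangle\otimes|f(b)\rangle=|i\rangle\otimes|i\rangle$, so the right-hand side is $0$, while on the left the swap term of $h^{(n+1)}$ survives but $f\otimes f$ turns it into $\Theta(a-b)\bigl(|i\rangle|i\rangle-|i\rangle|i\rangle\bigr)=0$. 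Hence the two-site identity holds and the proposition follows. I do not anticipate any genuine obstacle: the content is entirely in recognizing $\varphi_{\mathfrak{s},\mathfrak{t}}$ as a factor-wise state-merging map and in the small bookkeeping of this degenerate pair.
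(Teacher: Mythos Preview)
Your argument is correct and rests on the same key observation as the paper's proof: the relabeling map is weakly monotone, so $\Theta(a-b)$ and $\Theta(f(a)-f(b))$ can differ only when distinct states merge, and in that case the term $|\,\cdot\,\rangle-|\,\cdot\,\rangle$ vanishes. The paper carries this out directly for a general pair $\mathfrak{s}\subset\mathfrak{t}$ on the full basis vector, without your preliminary reductions via Lemma~\ref{k:le:fac} and localization to two sites; those steps are valid but unnecessary, since the general map $x\mapsto x'$ of \eqref{k:eq:pvv} is already weakly monotone and the same one-line case analysis applies.
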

\begin{proof}
Consider the actions on the ket vector
$\vert k \rangle = \vert k_1,\ldots, k_L\rangle \in V_{\mathfrak{t}}$:
\begin{align}
\label{action1}
\varphi_{\mathfrak{s}, \mathfrak{t}}H_\mathfrak{t} \vert k \rangle 
&= \sum_{i \in {\mathbb Z}_L}\Theta(k_i-k_{i+1})
\bigl(\vert k'_1\ldots  k'_{i+1}, k'_i \ldots  k'_L\rangle - 
\vert k'_1\ldots k'_i , k'_{i+1}\ldots  k'_L\rangle \bigr),\\
\label{action2}
H_\mathfrak{s} \varphi_{\mathfrak{s}, \mathfrak{t}}\vert k \rangle 
&= \sum_{i \in {\mathbb Z}_L}\Theta(k'_i-k'_{i+1})
\bigl(\vert k'_1\ldots  k'_{i+1}, k'_i \ldots  k'_L\rangle - 
\vert k'_1\ldots k'_i , k'_{i+1}\ldots  k'_L\rangle \bigr),
\end{align}
where $x'$ is the one specified in (\ref{k:eq:pvv}).
For simplicity, let us write $(k_i, k_{i+1})$ as $(x,y)$.
{}From (\ref{k:eq:pvv}), we see that
$x>y$ implies $x'\ge y'$, and similarly 
$x<y$ implies $x'\le y'$.
{}From this fact and 
the definition of $\Theta$ in \eqref{k:eq:step},
the discrepancy of the coefficients
$\Theta(x-y)$ and $\Theta(x'-y')$ 
in the above two formulas can possibly make difference only when 
($x>y$ and $x'=y'$) or ($x<y$ and $x'=y'$).
But in the both cases, the vector
$\vert . . .\,  y', x' . . . \rangle -  
\vert . . .\,  x', y' . . . \rangle$ is zero.
Thus the right-hind sides of \eqref{action1}
and \eqref{action2} are the same.
\end{proof}
Our Hamiltonian arises as an expansion coefficient 
of a commuting transfer matrix $T(\lambda)$ with respect to 
the spectral parameter $\lambda$.  See \eqref{s:baxter}. 
However, the commutativity 
$\varphi_{\mathfrak{s}, \mathfrak{t}}
T(\lambda)_{\mathfrak{t}}
=T(\lambda)_{\mathfrak{s}}
\varphi_{\mathfrak{s}, \mathfrak{t}}$ 
does not hold in general.
 
To each sector 
$\mathfrak{s}=\{s_1<\cdots < s_{n-1}\} 
\in {\mathcal S}$, we associate 
\begin{align}
{\rm Spec}\,(\mathfrak{s})=
\text{multiset of eigenvalues of }\, H_\mathfrak{s},
\end{align}
where the multiplicity of an element represents, of course,  
the degree of its degeneracy.
This definition  is just a translation of \eqref{k:eq:spm}
into the notation \eqref{k:eq:hh}.
The property \eqref{k:eq:ccs} reads
\begin{align}\label{k:eq:scc}
{\rm Spec}(s_1,\ldots, s_{n-1})
={\rm Spec}(L-s_{n-1},\ldots, L-s_1).
\end{align}
One has $\sharp {\rm Spec}\,(\mathfrak{s}) 
= \dim V_\mathfrak{s}= \dim V_\mathfrak{s}^\ast$.
Lemma \ref{k:le:si} (2) and proposition \ref{k:pr:com} lead to
\begin{theorem}\label{k:cor:spe}
There is an embedding of the spectrum 
${\rm Spec}(\mathfrak{s}) \hookrightarrow 
{\rm Spec}(\mathfrak{t})$ 
for any pair of sectors such that $\mathfrak{s} \subset \mathfrak{t}$.
In particular, ${\rm Spec}(\Omega)$ contains 
the eigenvalues of the Hamiltonian 
$H_{\mathfrak s}$ of all the sectors 
${\mathfrak s} \in {\mathcal S}$.
\end{theorem}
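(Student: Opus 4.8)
The plan is to derive the embedding from Proposition~\ref{k:pr:com} together with Lemma~\ref{k:le:si}(2) by a short characteristic‑polynomial argument carried out on the bra side. Fix sectors $\mathfrak{s}\subset\mathfrak{t}$. First I would transpose the intertwining relation $\varphi_{\mathfrak{s},\mathfrak{t}}H_\mathfrak{t}=H_\mathfrak{s}\varphi_{\mathfrak{s},\mathfrak{t}}$ of Proposition~\ref{k:pr:com}; in view of \eqref{k:eq:gt} this says precisely that the bra‑side operator $\overset{\leftarrow}{\varphi}_{\mathfrak{s},\mathfrak{t}}\colon V_\mathfrak{s}^\ast\to V_\mathfrak{t}^\ast$ intertwines the actions of $H_\mathfrak{s}$ and $H_\mathfrak{t}$ on bra vectors, i.e.\ it is a homomorphism of $\C[X]$‑modules, where $X$ acts as the (bra action of the) Hamiltonian on each side. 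By Lemma~\ref{k:le:si}(2) this homomorphism is injective, so it identifies $(V_\mathfrak{s}^\ast,H_\mathfrak{s})$ with a submodule $W:=\overset{\leftarrow}{\varphi}_{\mathfrak{s},\mathfrak{t}}(V_\mathfrak{s}^\ast)$ of $(V_\mathfrak{t}^\ast,H_\mathfrak{t})$.

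Next I would read off the consequence for characteristic polynomials. Choosing any vector‑space complement $V_\mathfrak{t}^\ast=W\oplus W'$ and writing $H_\mathfrak{t}$ in a basis adapted to this splitting puts it in block‑triangular form whose $W$‑block is conjugate to $H_\mathfrak{s}$; hence
\begin{align*}
\det(\lambda\,{\rm Id}-H_\mathfrak{t})=\det(\lambda\,{\rm Id}-H_\mathfrak{s})\cdot g(\lambda),
\end{align*}
where $g$ is the monic characteristic polynomial of the map induced by $H_\mathfrak{t}$ on $V_\mathfrak{t}^\ast/W$. (Here one uses that a matrix and its transpose have the same characteristic polynomial, so passing to the bra side is harmless.) Since ${\rm Spec}(\mathfrak{s})$, as defined, is exactly the multiset of roots of $\det(\lambda\,{\rm Id}-H_\mathfrak{s})$, this factorization is the desired embedding ${\rm Spec}(\mathfrak{s})\hookrightarrow{\rm Spec}(\mathfrak{t})$. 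For the ``in particular'' clause one simply specialises $\mathfrak{t}=\Omega$: every $\mathfrak{s}\in{\mathcal S}$ satisfies $\mathfrak{s}\subseteq\Omega$, and the case $\mathfrak{s}=\Omega$ is trivial because $\varphi_{\Omega,\Omega}={\rm Id}$, so ${\rm Spec}(\mathfrak{s})\hookrightarrow{\rm Spec}(\Omega)$ for all $\mathfrak{s}$.

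I do not expect a serious obstacle here — the argument is elementary linear algebra once Proposition~\ref{k:pr:com} and Lemma~\ref{k:le:si} are in hand. The one point that needs care is the bookkeeping of degeneracies: ``embedding of the spectrum'' must be read as an inclusion of multisets, so the argument has to be run through characteristic polynomials (equivalently, through the submodule/quotient decomposition above) rather than through bare eigenvalue sets, and one must not assume diagonalisability of $H_\mathfrak{t}$. An equivalent route, if one prefers to work with kets, is to use the surjection $\overset{\rightarrow}{\varphi}_{\mathfrak{s},\mathfrak{t}}\colon V_\mathfrak{t}\to V_\mathfrak{s}$ of Lemma~\ref{k:le:si}(1): its kernel is $H_\mathfrak{t}$‑invariant by Proposition~\ref{k:pr:com}, the induced action on $V_\mathfrak{t}/\ker\overset{\rightarrow}{\varphi}_{\mathfrak{s},\mathfrak{t}}\cong V_\mathfrak{s}$ is $H_\mathfrak{s}$, and the same block‑triangular factorization of $\det(\lambda\,{\rm Id}-H_\mathfrak{t})$ results.
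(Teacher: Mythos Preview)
Your proposal is correct and follows essentially the same route as the paper: the paper derives the theorem in one line from Lemma~\ref{k:le:si}(2) and Proposition~\ref{k:pr:com}, and your argument simply spells out the linear‑algebraic content of that implication via the block‑triangular/characteristic‑polynomial factorization (consistent with the paper's footnote that the result does not rely on diagonalizability).
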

See figure \ref{spec} for example.

\begin{figure}[h]
\begin{center}
\includegraphics{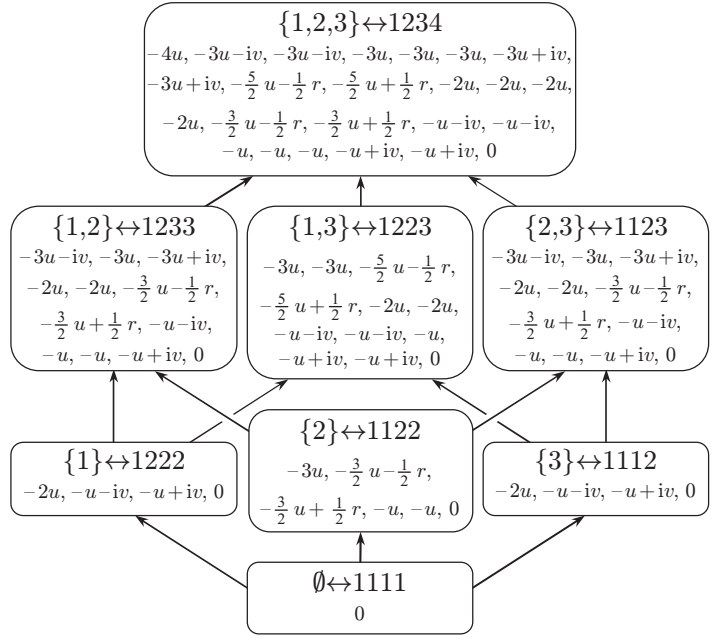}
 \caption{${\rm Spec}({\mathfrak s})$ for $L=4$.
$u=p+q,\ v=p-q,\ r=\sqrt{-7p^2+18pq-7q^2}$.
The symmetry \eqref{k:eq:scc} can be also observed.}
 \label{spec}
\end{center}
\end{figure}

\subsection{\mathversion{bold}
Spectral duality in the maximal sector $\Omega$}
\label{subsec:maxsec}

As indicated in theorem \ref{k:cor:spe},
the structure of the spectrum in 
the maximal sector $\Omega \in {\mathcal S}$
is of basic importance.
In this subsection we concentrate on this
sector and elucidate a duality.

Define a $\C$-linear map $\omega$ by
\begin{align}\label{k:eq:oma}
\begin{array}{cccc}
\omega: &  
V_{\scriptscriptstyle \Omega}^\ast & \overset{\sim}{\longrightarrow} 
& V_{\scriptscriptstyle \Omega} \\
&\langle k_1,\ldots, k_L\vert & \mapsto &
{\rm sgn}(k)\;
\vert k_L, \ldots, k_1\rangle,
\end{array}
\end{align}
where ${\rm sgn}(k)={\rm sgn}(k_1,\ldots, k_L)$ stands for the 
signature of the permutation.
(Note that 
${\mathcal P}(\Omega)$ is the set of permutations of 
$(1, 2,\ldots, L)$.)
Obviously, $\omega$ is bijective.

It turns out that $\omega$ interchanges the eigenvalues of Hamiltonian 
as $E \leftrightarrow -L(p+q)-E$.

\begin{theorem}\label{k:pr:ten}
Let $\langle \phi \vert \in V_{\scriptscriptstyle \Omega}^\ast$ be an 
eigenvector such that
$\langle \phi \vert H_{\scriptscriptstyle \Omega} 
= E\langle \phi \vert$.
Set $\vert \psi \rangle 
= \omega(\langle \phi \vert) \in V_{\scriptscriptstyle \Omega}$.
Then $H_{\scriptscriptstyle \Omega} \vert \psi \rangle
=(-L(p+q)-E)\vert \psi \rangle$ holds.
\end{theorem}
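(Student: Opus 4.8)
The plan is to compute directly how $H_{\scriptscriptstyle \Omega}^{\mathrm T}$ acts on a ket $\vert k_L,\ldots,k_1\rangle$ (equivalently, how $\langle\phi\vert H_{\scriptscriptstyle\Omega}$ produces coefficients) and compare it with the action of $H_{\scriptscriptstyle\Omega}$ on $\omega(\langle k_1,\ldots,k_L\vert)={\rm sgn}(k)\vert k_L,\ldots,k_1\rangle$. The reflection $k\mapsto(k_L,\ldots,k_1)$ turns a hopping at the bond $(i,i+1)$ into a hopping at a bond of the reversed ring, so up to relabelling bonds the two operators are built from the same local moves; the only bookkeeping is (a) the signature factor picked up when two \emph{adjacent} entries are swapped, which is $-1$, and (b) the fact that on the maximal sector $\Omega$ every base vector is a permutation of $(1,\ldots,L)$, so every neighbouring pair $(x,y)$ has $x\neq y$ and exactly one of $\Theta(x-y),\Theta(y-x)$ equals $p$ and the other $q$.

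First I would write, for $\langle\phi\vert=\sum_{k}c_k\langle k_1,\ldots,k_L\vert$, the eigenvector equation $\langle\phi\vert H_{\scriptscriptstyle\Omega}=E\langle\phi\vert$ explicitly using \eqref{k:eq:gt}: the coefficient of $\langle j\vert$ on the left is $\sum_i\bigl(\Theta(j_{i+1}-j_i)c_{j^{(i)}}-\Theta(j_i-j_{i+1})c_j\bigr)$ where $j^{(i)}$ denotes $j$ with entries $j_i,j_{i+1}$ swapped — i.e. $\langle\phi\vert H_{\scriptscriptstyle\Omega}$ agrees with the ket action already written down for $H_{\mathfrak s}$ applied formally to the coefficient array. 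Next I would apply $\omega$: $\vert\psi\rangle=\sum_k c_k\,{\rm sgn}(k)\vert \bar k\rangle$ with $\bar k=(k_L,\ldots,k_1)$. Then I compute $H_{\scriptscriptstyle\Omega}\vert\psi\rangle$ using the formula $H_{\mathfrak s}\vert k\rangle=\sum_i\Theta(k_i-k_{i+1})(\vert k^{(i)}\rangle-\vert k\rangle)$, re-index the bond sum via $i\mapsto L-i$ to align with the unreflected ring, and use ${\rm sgn}(\bar k^{(i)})=-{\rm sgn}(\bar k)$ (adjacent transposition on the cyclic ring; the wrap-around bond is handled the same way since cyclic rotation doesn't change the sign of the reflection permutation in the relevant comparison, or one checks it separately). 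Collecting terms, the off-diagonal part of $H_{\scriptscriptstyle\Omega}\vert\psi\rangle$ reproduces the off-diagonal part of $E\vert\psi\rangle$ with the help of the identity $\Theta(a-b)+\Theta(b-a)=p+q$ valid whenever $a\neq b$, which is exactly the situation in $\Omega$; the sign flip from ${\rm sgn}$ converts the subtracted diagonal term $-\Theta(k_i-k_{i+1})$ into $+\Theta(k_{i+1}-k_i)$, and the mismatch between $\Theta(k_i-k_{i+1})$ appearing in $H_{\scriptscriptstyle\Omega}\vert\psi\rangle$ and $\Theta(k_{i+1}-k_i)$ wanted for $E\vert\psi\rangle$ is precisely absorbed into the shift $E\mapsto E+L(p+q)$, since summing $\bigl(\Theta(k_i-k_{i+1})-\Theta(k_{i+1}-k_i)\bigr)$-type corrections over the $L$ bonds and using $\Theta(a-b)+\Theta(b-a)=p+q$ produces the constant $L(p+q)$.

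The cleanest way to organize the bookkeeping, which I would adopt, is to introduce the reflection operator $R$ of \eqref{k:eq:cs}ff.\ and the operator $\Sigma$ acting diagonally by ${\rm sgn}$, so that $\omega=R\,\Sigma\circ(\text{transpose})$ on $V_{\scriptscriptstyle\Omega}^\ast$, then prove the operator identity $H_{\scriptscriptstyle\Omega}\,R\,\Sigma = R\,\Sigma\,\bigl(-L(p+q)-H_{\scriptscriptstyle\Omega}^{\mathrm T}\bigr)$ on $V_{\scriptscriptstyle\Omega}$ bond by bond — equivalently, conjugating the local two-body operator $h$ of \eqref{k:eq:H} (restricted to the two-dimensional span $\C\vert xy\rangle\oplus\C\vert yx\rangle$ with $x\neq y$, which is all that matters on $\Omega$) by the flip-times-sign map sends $h\mapsto -(p+q)\,\mathrm{Id}-h^{\mathrm T}$. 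That $2\times2$ computation is completely routine: on $\C\vert xy\rangle\oplus\C\vert yx\rangle$ with $x>y$, $h=\begin{pmatrix}-q&p\\ q&-p\end{pmatrix}$ in the basis $(\vert yx\rangle,\vert xy\rangle)$, and conjugating by $\begin{pmatrix}0&-1\\-1&0\end{pmatrix}$ (swap entries and negate, since the swap is an adjacent transposition) and transposing indeed yields $-(p+q)\,\mathrm{Id}-h$; summing over the $L$ bonds gives the global $-L(p+q)$.

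\textbf{The main obstacle.} The genuinely delicate point is the interaction between the \emph{cyclic} bond structure and the signature. A bond $(i,i+1)$ for $1\le i\le L-1$ reflects to an ordinary adjacent bond of the reversed word, but the wrap-around bond $(L,1)$ reflects to the bond $(1,L)$, which is again the wrap-around bond — and one must verify that swapping the two entries joined by the wrap-around bond still changes ${\rm sgn}$ by $-1$ and that the re-indexing $i\mapsto L-i$ is consistent around the full cycle, including how the reflection $R$ commutes past this bond. I expect this to require either a small separate argument (a cyclic permutation of the word changes ${\rm sgn}$ by $(-1)^{L-1}$, and pairing this with the reflection and the transpose works out), or — cleaner — reducing everything to the translation-invariant, bond-by-bond operator identity above so that the cyclic bond is literally on the same footing as every other bond and no special case arises. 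I would pursue the latter. The remaining steps (the $2\times2$ conjugation, summing the constants, and checking that only the $x\neq y$ case occurs because $\Omega$ consists of permutations) are mechanical.
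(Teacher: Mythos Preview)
Your first, direct-computation outline is essentially the paper's own proof: write the left-eigenvector equation in coefficients, add $L(p+q)f(k)$ to both sides, use $\Theta(a-b)+\Theta(b-a)=p+q$ (valid on $\Omega$ since all $k_i$ are distinct), and multiply by $-{\rm sgn}(k)={\rm sgn}(k^{(i)})$. That is all there is to it, and it works without subtlety.

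Two corrections to the rest. First, your ``main obstacle'' is not one: swapping $k_L$ and $k_1$ is a transposition like any other, so ${\rm sgn}$ flips by $-1$ there as well; no separate argument is needed, and the paper does not distinguish this bond. Second, the $2\times 2$ verification you propose for the operator identity is miscomputed. Conjugating $h=\bigl(\begin{smallmatrix}-q&p\\ q&-p\end{smallmatrix}\bigr)$ by $\bigl(\begin{smallmatrix}0&-1\\-1&0\end{smallmatrix}\bigr)$ and transposing gives $\bigl(\begin{smallmatrix}-p&p\\ q&-q\end{smallmatrix}\bigr)$, not $-(p+q)\,\mathrm{Id}-h=\bigl(\begin{smallmatrix}-p&-p\\-q&-q\end{smallmatrix}\bigr)$. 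The reason is that $R$ is not bond-local: it permutes bonds and swaps the two tensor factors, so globally $RHR^{-1}=H|_{p\leftrightarrow q}$; what remains is the $\Sigma$-conjugation, which on the two-dimensional bond space is $\mathrm{diag}(1,-1)$ (sign flip between $k$ and $k^{(i)}$), not $\bigl(\begin{smallmatrix}0&-1\\-1&0\end{smallmatrix}\bigr)$. The correct local identity is
\[
\mathrm{diag}(1,-1)\,\bigl(h|_{p\leftrightarrow q}\bigr)\,\mathrm{diag}(1,-1)
=\begin{pmatrix}-p&-q\\-p&-q\end{pmatrix}
=-(p+q)\,\mathrm{Id}-h^{\mathrm T},
\]
which does sum over the $L$ bonds to the global statement $H_\Omega R\Sigma=R\Sigma\bigl(-L(p+q)-H_\Omega^{\mathrm T}\bigr)$. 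So your operator-identity route is a legitimate alternative once the local conjugating matrix is fixed.
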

\begin{proof}
Let $\langle \phi \vert 
= \sum_{k \in {\mathcal P}(\Omega)}
f(k_1,\ldots, k_L)\langle k_1,\ldots, k_L \vert$.
Then $\langle \phi \vert H_{\scriptscriptstyle \Omega} 
= E\langle \phi \vert$ 
is expressed as
\begin{align*}
\sum_{i \in {\mathbb Z}_L}\Theta(k_i-k_{i+1})
\bigl(f(k^{(i)})-f(k)\bigr)= Ef(k),
\end{align*}
where we have used the shorthand
$k=(k_1,\ldots, k_i, k_{i+1},\ldots, k_L)$ and 
$k^{(i)}=(k_1,\ldots, k_{i+1}, k_i, \ldots, k_L)$.
Adding $(p+q)L f(k)$ to the both sides we get
\begin{align*}
\sum_{i \in {\mathbb Z}_L}\Theta(k_i-k_{i+1})f(k^{(i)})
+\sum_{i \in {\mathbb Z}_L}(p+q-\Theta(k_i-k_{i+1}))f(k)
= (E+ L(p+q))f(k),
\end{align*}
Since $k_i$'s are all distinct in the sector $\Omega$ 
under consideration, 
the coefficient in the second term equals $\Theta(k_{i+1}-k_i)$.
Multiplication of $-{\rm sgn}(k) = {\rm sgn}(k^{(i)})$ on 
the both sides leads to
\begin{align*}
&\sum_{i \in {\mathbb Z}_L}\Theta(k_i-k_{i+1})
{\rm sgn}(k^{(i)})f(k^{(i)})
-\sum_{i \in {\mathbb Z}_L}\Theta(k_{i+1}-k_i)
{\rm sgn}(k)f(k)\\
&= (-E-L(p+q)){\rm sgn}(k)f(k).
\end{align*}
This coincides with the equation 
$H_{\scriptscriptstyle \Omega} \vert \psi \rangle
=(-L(p+q)-E)\vert \psi \rangle$ on 

\noindent
$\vert \psi \rangle = \sum_{k \in {\mathcal P}(\Omega)}
{\rm sgn}(k)f(k_1,\ldots, k_L)\vert k_L,\ldots, k_1\rangle$.
\end{proof}

\begin{remark}
It is easy to see that 
$\langle \phi | = \sum_{k\in {\mathcal P}(\Omega)}
\langle k_1,\ldots, k_L| \in 
V_{\scriptscriptstyle \Omega}^\ast$ is the
eigen bra vector with the largest eigenvalue $E=0$.
It follows that
$\omega(\langle \phi | ) \in V_{\scriptscriptstyle \Omega}$ is 
the eigen ket vector with the smallest eigenvalue $-L(p+q)$.
Namely, one has
\begin{align}
\left(H_{\scriptscriptstyle \Omega}+L(p+q)\right)
\sum_{k\in {\mathcal P}(\Omega)}
{\rm sgn}(k)|k_L,\ldots, k_1\rangle = 0.
\end{align}
\end{remark}

In view of conjecture \ref{k:con:dep} and the remark following it,
we assume the diagonalizability 
of the Hamiltonian $H_\Omega$\footnote{
Theorem \ref{k:cor:spe} is derived on the basis of generalized 
eigenvectors hence its validity is independent of the 
diagonalizability of the Hamiltonian.} .
Then every eigenvalue in ${\rm Spec}(\Omega)$ 
is associated with an eigenvector in 
$V_{\scriptscriptstyle \Omega}^\ast$.
Therefore theorem \ref{k:pr:ten} implies
\begin{corollary}\label{k:cor:ome}
${\rm Spec}(\Omega) = -L(p+q)-{\rm Spec}(\Omega)$.
\end{corollary}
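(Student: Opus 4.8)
The plan is to upgrade Theorem \ref{k:pr:ten} from a statement about individual eigenvectors to an identity of multisets, the only extra input being the assumed diagonalizability of $H_{\scriptscriptstyle\Omega}$. First I would record that $H_{\scriptscriptstyle\Omega}$ and its transpose share the same characteristic polynomial, so the multiset of eigenvalues of the bra-action $\overset{\leftarrow}{H}_{\scriptscriptstyle\Omega}$ on $V_{\scriptscriptstyle\Omega}^\ast$ is again ${\rm Spec}(\Omega)$; moreover diagonalizability of $H_{\scriptscriptstyle\Omega}$ forces diagonalizability of its transpose, so $V_{\scriptscriptstyle\Omega}^\ast$ admits a basis of eigenbras and, for every $E$, the number of such basis eigenbras with eigenvalue $E$ equals the multiplicity $\mu(E)$ of $E$ in ${\rm Spec}(\Omega)$.

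Next I would push this basis through $\omega$. Since $\omega\colon V_{\scriptscriptstyle\Omega}^\ast\to V_{\scriptscriptstyle\Omega}$ is a linear bijection, the images form a basis of $V_{\scriptscriptstyle\Omega}$, and by Theorem \ref{k:pr:ten} the image of an eigenbra with eigenvalue $E$ is an eigenket of $H_{\scriptscriptstyle\Omega}$ with eigenvalue $-L(p+q)-E$. Hence $V_{\scriptscriptstyle\Omega}$ possesses a basis of eigenkets of $H_{\scriptscriptstyle\Omega}$ in which the eigenvalue $-L(p+q)-E$ occurs exactly $\mu(E)$ times. Reading off the eigenvalues of this basis and comparing with ${\rm Spec}(\Omega)$ gives ${\rm Spec}(\Omega)=-L(p+q)-{\rm Spec}(\Omega)$ as multisets, which is the claim.

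The point that needs care — and essentially the only obstacle — is the multiplicity bookkeeping, i.e. ensuring that $\omega$ produces an equality of multisets and not merely of the underlying sets. Diagonalizability is exactly what makes this work: it lets one represent each multiplicity as the dimension of an honest eigenspace (on either side), and then the fact that $\omega$ restricts to an injection from the left $E$-eigenspace into the right $(-L(p+q)-E)$-eigenspace, together with $\dim V_{\scriptscriptstyle\Omega}^\ast=\dim V_{\scriptscriptstyle\Omega}=L!$ forcing all these injections to be bijective, yields $\dim(\text{left }E\text{-eigenspace})=\dim(\text{right }(-L(p+q)-E)\text{-eigenspace})$ for each $E$. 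Without diagonalizability one would instead have to compare the Jordan structures of $H_{\scriptscriptstyle\Omega}$ and $-L(p+q)-H_{\scriptscriptstyle\Omega}$, which Theorem \ref{k:pr:ten} does not directly control; this is why the corollary is stated under the diagonalizability hypothesis (cf. conjecture \ref{k:con:dep} and the footnote to theorem \ref{k:cor:spe}), whereas theorem \ref{k:cor:spe} does not need it.
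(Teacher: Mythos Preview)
Your argument is correct and follows essentially the same route as the paper: assume diagonalizability of $H_{\scriptscriptstyle\Omega}$, so that the bra-space $V_{\scriptscriptstyle\Omega}^\ast$ is spanned by eigenbras, then apply Theorem~\ref{k:pr:ten} via the bijection $\omega$ to conclude the multiset identity. The paper states this in one sentence without spelling out the multiplicity bookkeeping; your version makes explicit why the multiplicities match (the eigenbra basis is carried by the bijection $\omega$ to an eigenket basis), which is a welcome elaboration but not a different idea.
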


\begin{figure}[h]
\begin{center}
\includegraphics{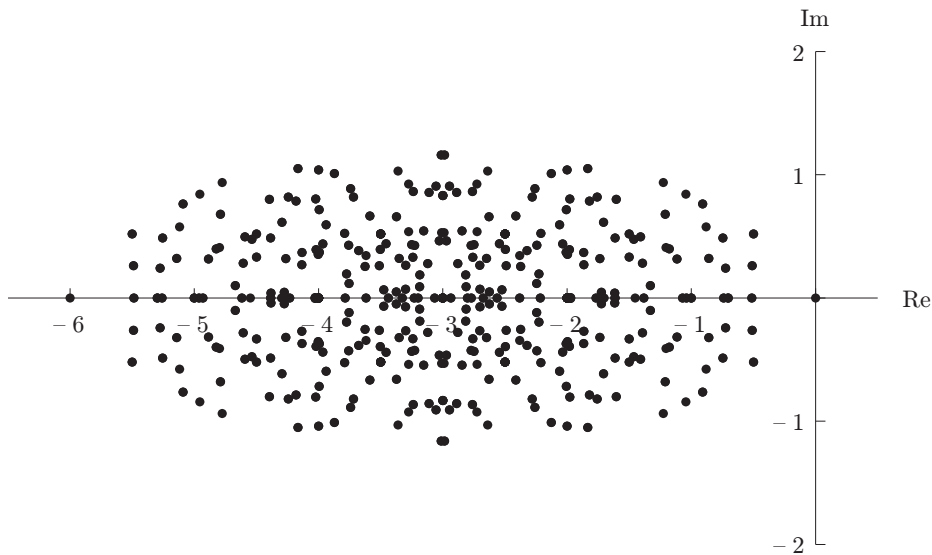}
 \caption{${\rm Spec}(\Omega)$ for 
$L=6$ and $(p,q)=(0.8, 0.2)$.
The symmetry with respect to 
$-L(p+q)/2=-3$ can be observed.}
\label{1-6}
\end{center}
\end{figure}

Figure \ref{1-6} is a plot showing this property.
The property of 
interchanging the eigenvalues of Hamiltonian 
$E \leftrightarrow -L(p+q)-E$ will be 
referred as {\em spectrum reversing}.
Our main task in the sequel is to extend $\omega$ to
a spectrum reversing operator between 
general sectors, and to identify the 
``genuine components" that are 
in bijective correspondence thereunder.
This will be achieved as $\omega^\circ$ 
in theorem \ref{k:th:sd}.

\subsection{\mathversion{bold} Genuine components
$X_\mathfrak{s}^\ast$ and $Y_\mathfrak{s}$}
\label{subsec:genuine}

Theorem \ref{k:cor:spe} motivates us to classify the
eigenvalues ${\rm Spec}\, (\mathfrak{s})$ 
in a sector $\mathfrak{s}$ into two kinds.
One is those coming from the smaller sectors 
$\mathfrak{u} \subset \mathfrak{s}$
through the embedding 
${\rm Spec}(\mathfrak{u}) 
\hookrightarrow {\rm Spec}(\mathfrak{s})$.
The other is the {\em genuine eigenvalues} that 
are born at $\mathfrak{s}$ without such an origin. 
Having this feature in mind 
we introduce a quotient $X_\mathfrak{s}^\ast$ 
of $V_\mathfrak{s}^\ast$ and 
a subspace $Y_\mathfrak{s}$ of $V_\mathfrak{s}$ as
\begin{align}\label{k:eq:xy}
X^\ast_\mathfrak{s} 
= V_\mathfrak{s}^\ast/
\sum_{\mathfrak{u} \subset \mathfrak{s}}{\rm Im }\, 
\overset{\leftarrow}{\varphi}_{\mathfrak{u},\mathfrak{s}},
\quad
Y_\mathfrak{s} 
= \bigcap_{\mathfrak{u}\subset \mathfrak{s}}{\rm Ker }\,
\overset{\rightarrow}{\varphi}_{\mathfrak{u}, \mathfrak{s}}.
\end{align}
We call $X_\mathfrak{s}^\ast$ and $Y_\mathfrak{s}$ 
the {\em genuine component} of 
$V_\mathfrak{s}^\ast$ and $V_\mathfrak{s}$, respectively.
(We set $X_\emptyset^\ast = V_\emptyset^\ast 
= \C \langle 1,\ldots, 1 \vert$
and 
$Y_\emptyset = V_\emptyset = \C\vert 1, \ldots, 1 \rangle$.)
The Hamiltonian $H_\mathfrak{s}$ acts on each 
$X_\mathfrak{s}^\ast$ and $Y_\mathfrak{s}$ 
owing to proposition \ref{k:pr:com}.
The vector spaces 
$X_\mathfrak{s}^\ast$ and $Y_\mathfrak{s}$ 
are dual to each other canonically,  
therefore 
\begin{align}\label{k:eq:dxy}
\dim X_\mathfrak{s}^\ast = \dim Y_\mathfrak{s}.
\end{align}

We wish to focus on the spectrum 
that are left after excluding the embedding structure
explained above and in theorem \ref{k:cor:spe}.
This leads us to define 
the set of genuine eigenvalues of a sector $\mathfrak{s}$ as
\begin{align}\label{k:eq:sp0}
\begin{split}
{\rm Spec}^\circ(\mathfrak{s})&=
\text{multiset of eigenvalues of }\, H_\mathfrak{s}
\vert_{X_\mathfrak{s}^\ast}\\
&=
\text{multiset of eigenvalues of }\, H_\mathfrak{s}
\vert_{Y_\mathfrak{s}}.
\end{split}
\end{align}

Let us write the image of $\langle\phi| \in V^\ast_{\mathfrak s}$ in 
$X^\ast_{\mathfrak s} $ under the natural projection
by $[\langle\phi|]$.
Fix an embedding of 
$X^\ast_{\mathfrak s}$ into $V^\ast_{\mathfrak s}$
sending each eigenvector $[\langle\phi|] \in X^\ast_{\mathfrak s}$
to an eigenvector $\langle\phi'|\in V^\ast_{\mathfrak s}$
with the same eigenvalue
satisfying $[\langle\phi|]=[\langle\phi'|]$.
The image of the embedding is complementary to 
$\sum_{\mathfrak{u} \subset \mathfrak{s}}{\rm Im }\, 
\overset{\leftarrow}{\varphi}_{\mathfrak{u},\mathfrak{s}}$,
therefore we can treat the first relation in \eqref{k:eq:xy} as 
$V^\ast_{\mathfrak s}=X^\ast_{\mathfrak s}\oplus
 \sum_{\mathfrak{u} \subset \mathfrak{s}}{\rm Im }\, 
\overset{\leftarrow}{\varphi}_{\mathfrak{u},\mathfrak{s}}$.
Then the following decomposition holds:
\begin{align}
V_\mathfrak{s}^\ast 
= \bigoplus_{\mathfrak{u}\subseteq \mathfrak{s}}
X_\mathfrak{u}^\ast \,
\overset{\leftarrow}\varphi_{\mathfrak{u},\mathfrak{s}}.
\label{k:eq:vss}
\end{align}
From theorem \ref{k:cor:spe} and 
(\ref{k:eq:vss}) we have
\begin{align}\label{k:eq:sus}
{\rm Spec}(\mathfrak{s}) 
= \bigcup_{\mathfrak{u}\subseteq \mathfrak{s}}
{\rm Spec}^\circ(\mathfrak{u}),
\end{align}
where the multiplicity is taken into account 
for the union of the multisets.
In terms of the cardinality, this amounts to
\begin{align}\label{k:eq:dvx}
\dim V_\mathfrak{s}^\ast 
= \sum_{\mathfrak{u} \subseteq \mathfrak{s}}
\dim X_\mathfrak{u}^\ast.
\end{align}

\begin{theorem}[Dimensional duality]
\label{k:th:dd}
For any sector $\mathfrak{s} \in {\mathcal S}$, the 
following equality is valid:
\begin{align*}
\dim X^\ast_\mathfrak{s} = \dim X^\ast_{\ol{\mathfrak{s}}},
\end{align*}
or equivalently 
$\sharp{\rm Spec}^\circ(\mathfrak{s})
=\sharp{\rm Spec}^\circ(\overline{\mathfrak{s}})$.
Here $\overline{\mathfrak{s}}$ denotes the 
complement sector \eqref{k:eq:cmp}.
\end{theorem}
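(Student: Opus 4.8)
The plan is to reduce the statement to a classical fact about the descent statistic of permutations. First I would invoke the decomposition \eqref{k:eq:dvx}, i.e.\ $\dim V_\mathfrak{s}^\ast=\sum_{\mathfrak{u}\subseteq\mathfrak{s}}\dim X_\mathfrak{u}^\ast$. Since this is a sum over the order ideal generated by $\mathfrak{s}$ in the Boolean lattice $\mathcal{S}$, whose Möbius function is $\mu(\mathfrak{u},\mathfrak{s})=(-1)^{\sharp\mathfrak{s}-\sharp\mathfrak{u}}$, Möbius inversion gives the closed form
\[
\dim X_\mathfrak{s}^\ast=\sum_{\mathfrak{u}\subseteq\mathfrak{s}}(-1)^{\sharp\mathfrak{s}-\sharp\mathfrak{u}}\,\dim V_\mathfrak{u},
\]
with $\dim V_\mathfrak{u}$ the multinomial coefficient \eqref{k:eq:ddv}. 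Because $\sharp{\rm Spec}^\circ(\mathfrak{s})=\dim X_\mathfrak{s}^\ast$ by \eqref{k:eq:sp0} and \eqref{k:eq:dxy}, it suffices to show this alternating sum is invariant under $\mathfrak{s}\mapsto\overline{\mathfrak{s}}$.

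Next I would give the summand and then the whole sum a combinatorial reading. For a sector $\mathfrak{s}=\{s_1<\cdots<s_{n-1}\}$ with associated composition $(m_1,\ldots,m_n)$, $m_j=s_j-s_{j-1}$, the coefficient $\dim V_\mathfrak{s}=L!/(m_1!\cdots m_n!)$ counts exactly the permutations $\sigma\in\mathfrak{S}_L$ that are increasing on each block $[s_{j-1}+1,s_j]$: one chooses which $m_j$ of the values $1,\ldots,L$ fill the $j$th block and then lists each block increasingly. Such $\sigma$ are precisely those whose descent set $D(\sigma)=\{i\mid\sigma(i)>\sigma(i+1)\}$ is contained in $\mathfrak{s}$. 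By inclusion--exclusion over the subsets of $\mathfrak{s}$, the alternating sum above therefore counts the permutations of $\{1,\ldots,L\}$ whose descent set is \emph{exactly} $\mathfrak{s}$; denote this number $\beta_L(\mathfrak{s})$. Hence $\dim X_\mathfrak{s}^\ast=\beta_L(\mathfrak{s})$.

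The duality then follows from an explicit involution. The value-complementation map $\sigma\mapsto\hat\sigma$ on $\mathfrak{S}_L$ defined by $\hat\sigma(i)=L+1-\sigma(i)$ reverses every local comparison, so $i$ is a descent of $\hat\sigma$ iff it is an ascent of $\sigma$; thus $D(\hat\sigma)=\Omega\setminus D(\sigma)=\overline{D(\sigma)}$, and $\hat{\phantom{\sigma}}$ is an involution. Restricting it to permutations with descent set exactly $\mathfrak{s}$ yields a bijection onto those with descent set exactly $\overline{\mathfrak{s}}$, so $\beta_L(\mathfrak{s})=\beta_L(\overline{\mathfrak{s}})$, which is the claim. (This is, in effect, the dimensional shadow of the charge conjugation $Q$ of \eqref{k:eq:cc}.)

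The main point requiring care is the first identification: checking that \eqref{k:eq:dvx} is precisely a down-set sum so that Möbius inversion on the hypercube $\mathcal{S}$ applies verbatim, and that the multinomial coefficient \eqref{k:eq:ddv} is exactly the descent-bounded count for the composition attached to $\mathfrak{s}$ via \eqref{k:eq:bi}. Once the quantity is rephrased as $\beta_L(\mathfrak{s})$ there is essentially nothing left. The same content could be packaged via ribbon standard Young tableaux (the ribbon of $\overline{\mathfrak{s}}$ is the transpose of that of $\mathfrak{s}$ and carries equally many SYT) or proved by a direct generating-function manipulation of the alternating sum, but I would present the descent-set version as it is the most transparent.
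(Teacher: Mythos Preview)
Your argument is correct and takes a genuinely different route from the paper's proof in Appendix~\ref{app:dd}. Both start from the same M\"obius inversion on the Boolean lattice $\mathcal S$, arriving at
\[
\dim X^\ast_{\mathfrak s}=\sum_{\mathfrak u\subseteq\mathfrak s}(-1)^{\sharp\mathfrak s-\sharp\mathfrak u}\dim V_{\mathfrak u},
\]
but then diverge. The paper stays at the level of the multinomial expressions $\mc{L}{m_1,\dots,m_n}$ and proves the duality by a double induction on $(L,n)$, supported by four preparatory lemmas: a generating-function identity for the base case $(L)\leftrightarrow(1^L)$, a two-term recursion coming from whether the first part $m_1$ is contracted, and two further recursions that close the induction. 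Your proof instead recognises the alternating sum as the number $\beta_L(\mathfrak s)$ of permutations of $\{1,\dots,L\}$ with descent set exactly $\mathfrak s$, after which the single involution $\sigma\mapsto(L+1-\sigma(\cdot))$ finishes the job in one stroke. Your approach is shorter and more conceptual, and it exposes that $\dim X^\ast_{\mathfrak s}$ is a ribbon number; the paper's approach is self-contained and yields explicit recursions for $\mc{L}{m_1,\dots,m_n}$ along the way, which may be of independent use. One small caveat: your closing remark that this is ``the dimensional shadow of the charge conjugation $Q$'' is a bit loose---on sector labels $Q$ sends $\{s_1,\dots,s_{n-1}\}$ to $\{L-s_{n-1},\dots,L-s_1\}$, not to the complement $\overline{\mathfrak s}$---though it is true that on the maximal sector $\Omega$ the map $Q$ coincides with your value-complementation involution on permutations.
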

See figure \ref{dimdata} for example with $L=4$.
The proof is due to the standard M\"obius inversion
in the poset ${\mathcal S}$ 
and available in appendix \ref{app:dd}.

\begin{figure}[h]
\begin{center}
\includegraphics{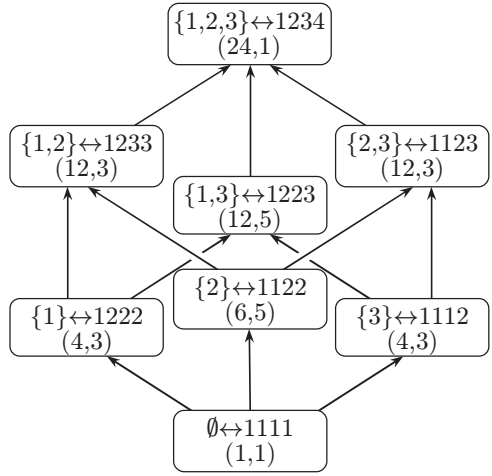}
\caption{The data 
$(\dim V_\mathfrak{s}^\ast, \dim X_\mathfrak{s}^\ast) = 
(\sharp {\rm Spec}(\mathfrak{s}), 
\sharp {\rm Spec}^\circ(\mathfrak{s}))$ is presented 
for each $\mathfrak{s}$ in the same Hasse diagram (c)
in figure \ref{Hasse}.
The dimensional duality (theorem \ref{k:th:dd}) 
can be observed.
For a systematic calculation of these data, see appendix \ref{app:dd}.
}
\label{dimdata}
\end{center}
\end{figure}

The following lemma, although slightly technical, plays a key role in our
subsequent argument.

\begin{lemma}\label{k:le:key}
$ $

(1) $\overset{\rightarrow}{\varphi}_{
{\scriptscriptstyle \Omega}\setminus\{r\}, 
{\scriptscriptstyle \Omega}}\,
\omega({\rm Im} 
\overset{\leftarrow}{\varphi}_{
{\scriptscriptstyle \Omega}\setminus\{r\}, 
{\scriptscriptstyle \Omega}})=0$ for any 
$r \in \Omega$.

(2) $\overset{\rightarrow}{\varphi}_{\mathfrak{s}, 
{\scriptscriptstyle \Omega}}\,
\omega({\rm Im} 
\overset{\leftarrow}{\varphi}_{\mathfrak{u}, 
{\scriptscriptstyle \Omega}})=0$ unless
$\mathfrak{u} \supseteq \overline{\mathfrak{s}}$.
\end{lemma}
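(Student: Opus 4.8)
\textbf{Proof plan for Lemma \ref{k:le:key}.}
The plan is to reduce part (2) to part (1) by the factorization property of Lemma \ref{k:le:fac}, and then to establish part (1) by a direct computation on generators of ${\rm Im}\,\overset{\leftarrow}{\varphi}_{{\scriptscriptstyle \Omega}\setminus\{r\},{\scriptscriptstyle \Omega}}$. First I would fix $r \in \Omega$ and analyze the composite $\overset{\rightarrow}{\varphi}_{{\scriptscriptstyle \Omega}\setminus\{r\},{\scriptscriptstyle \Omega}}\circ\omega\circ\overset{\leftarrow}{\varphi}_{{\scriptscriptstyle \Omega}\setminus\{r\},{\scriptscriptstyle \Omega}}$ acting on a basis bra $\langle k_1,\ldots,k_L\vert \in V^\ast_{{\scriptscriptstyle \Omega}\setminus\{r\}}$. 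By \eqref{k:eq:pvt}, $\overset{\leftarrow}{\varphi}_{{\scriptscriptstyle \Omega}\setminus\{r\},{\scriptscriptstyle \Omega}}$ sends this bra to a signed sum over the two permutations $j \in {\mathcal P}(\Omega)$ that project down to $(k_1,\ldots,k_L)$; these two differ precisely by swapping the entries equal to $r$ and $r+1$ wherever they occur. Next, $\omega$ reverses the word and multiplies by the signature; the two preimages $j$ and $j'$ have opposite signature (they differ by a transposition of the values $r,r+1$ in their one-line notation), so after applying $\omega$ we obtain a difference of two reversed kets. Finally, $\overset{\rightarrow}{\varphi}_{{\scriptscriptstyle \Omega}\setminus\{r\},{\scriptscriptstyle \Omega}}$ collapses $r+1 \mapsto r$ (by the rule $x' = x - \sharp\{i_j \mid i_j < x\}$ with the single removed index), which identifies the images of the two reversed kets — so their signed difference is $0$. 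This proves (1).

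For part (2), I would use the Hasse-diagram structure. Write $\mathfrak{u}$ and $\mathfrak{s}$ as subsets of $\Omega$. If $\mathfrak{u}\not\supseteq\overline{\mathfrak{s}}$, there is an element $r \in \overline{\mathfrak{s}}\setminus\mathfrak{u}$, i.e. $r \notin \mathfrak{s}$ and $r \notin \mathfrak{u}$. Then $\mathfrak{u}\subseteq\Omega\setminus\{r\}$ and $\mathfrak{s}\subseteq\Omega\setminus\{r\}$. By Lemma \ref{k:le:fac}, $\overset{\leftarrow}{\varphi}_{\mathfrak{u},{\scriptscriptstyle \Omega}}$ factors through $\overset{\leftarrow}{\varphi}_{{\scriptscriptstyle \Omega}\setminus\{r\},{\scriptscriptstyle \Omega}}$ (pick an intermediate chain passing through $\Omega\setminus\{r\}$), so ${\rm Im}\,\overset{\leftarrow}{\varphi}_{\mathfrak{u},{\scriptscriptstyle \Omega}}\subseteq{\rm Im}\,\overset{\leftarrow}{\varphi}_{{\scriptscriptstyle \Omega}\setminus\{r\},{\scriptscriptstyle \Omega}}$; likewise $\overset{\rightarrow}{\varphi}_{\mathfrak{s},{\scriptscriptstyle \Omega}}$ factors through $\overset{\rightarrow}{\varphi}_{{\scriptscriptstyle \Omega}\setminus\{r\},{\scriptscriptstyle \Omega}}$ via the same intermediate sector. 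Hence
\begin{align*}
\overset{\rightarrow}{\varphi}_{\mathfrak{s},{\scriptscriptstyle \Omega}}\,\omega\bigl({\rm Im}\,\overset{\leftarrow}{\varphi}_{\mathfrak{u},{\scriptscriptstyle \Omega}}\bigr)
\subseteq \overset{\rightarrow}{\varphi}_{\mathfrak{s},{\scriptscriptstyle \Omega}\setminus\{r\}}\,\overset{\rightarrow}{\varphi}_{{\scriptscriptstyle \Omega}\setminus\{r\},{\scriptscriptstyle \Omega}}\,\omega\bigl({\rm Im}\,\overset{\leftarrow}{\varphi}_{{\scriptscriptstyle \Omega}\setminus\{r\},{\scriptscriptstyle \Omega}}\bigr) = 0
\end{align*}
by part (1), which gives (2).

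I expect the main obstacle to be the careful bookkeeping in part (1): one has to verify that $\omega$ really does flip the relative sign of the two preimages under $\overset{\leftarrow}{\varphi}_{{\scriptscriptstyle \Omega}\setminus\{r\},{\scriptscriptstyle \Omega}}$ (this uses that $r$ and $r+1$ occur exactly once each in $j$ since we are in $\Omega$, so swapping those two values is a single transposition), and simultaneously that the word-reversal built into $\omega$ is compatible with the subsequent collapse $\overset{\rightarrow}{\varphi}_{{\scriptscriptstyle \Omega}\setminus\{r\},{\scriptscriptstyle \Omega}}$ identifying the two resulting kets — i.e. reversing then collapsing gives the same ket from $j$ and from $j'$. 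Once the sign flip and the collapse-identification are both checked on a general basis bra, the cancellation is immediate, and the passage to (2) is purely formal via Lemma \ref{k:le:fac}. A minor care point is the edge case $\mathfrak{s}=\emptyset$ or $\mathfrak{u}=\emptyset$, which is handled directly from the conventions ${\mathcal P}(\emptyset)=\{(1,\ldots,1)\}$ and $X^\ast_\emptyset=V^\ast_\emptyset$ fixed just before the lemma.
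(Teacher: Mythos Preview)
Your proposal is correct and follows essentially the same route as the paper: part~(1) by the symmetrize/antisymmetrize/merge cancellation on the two preimages in $\mathcal{P}(\Omega)$ differing by the values $r$ and $r+1$, and part~(2) by choosing $r\in\overline{\mathfrak{s}}\cap\overline{\mathfrak{u}}$ and factoring both $\varphi$'s through $\Omega\setminus\{r\}$ via Lemma~\ref{k:le:fac}. One small slip: the image of $\langle k\vert$ under $\overset{\leftarrow}{\varphi}_{\Omega\setminus\{r\},\Omega}$ is an \emph{unsigned} sum of the two preimages (see \eqref{k:eq:pvt}); the sign enters only through $\omega$, exactly as you then use it.
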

\begin{proof}
(1) For brevity we write $\Omega_r=\Omega\setminus\{r\}$.
We illustrate an example $L=5, \Omega=\{1,2,3,4\}, 
\Omega_2=\{1,3,4\}$, 
from which the general case is easily understood.
Recall the scheme as in (\ref{k:eq:ppex}):
\begin{align*}
\begin{split}
{\mathcal P}(\Omega=\{1,2,3,4\}): 
&\quad1\overset{1}{\vert}2\overset{2}{\vert}3\overset{3}{\vert}
4\overset{4}{\vert}5,\\
{\mathcal P}(\Omega_2=\{1,3,4\}):
&\quad1\overset{1}{\vert}\,2\,2
\overset{3}{\vert}3\overset{4}{\vert}4.
\end{split}
\end{align*}
Thus $\overset{\leftarrow}
{\varphi}_{{\scriptscriptstyle \Omega}_2, 
{\scriptscriptstyle \Omega}}$ is 
the operator replacing the local states
$3 \rightarrow 4, 4 \rightarrow 5$ and moreover changes 
$\langle . . . 2, . . . , 2, . . . \vert$ into the symmetric sum 
$\langle . . . 3, . . . , 2, . . . \vert + \langle . . . 2, . . . , 3, . . . \vert$. 
At the next stage, 
$\omega$ in (\ref{k:eq:oma})
attaches the factor ${\rm sgn}(k)$ which 
makes the above sum antisymmetric.
Finally, $\overset{\rightarrow}
{\varphi}_{{\scriptscriptstyle \Omega}_2, 
{\scriptscriptstyle \Omega}}$
makes the antisymmetrized letters $2$ and $3$ merge into $2$ again 
(and also does $4\rightarrow 3, 5\rightarrow 4$),
which therefore kills the vector. For example, 
\begin{align*}
\langle 4 2 3 1 2 \vert 
&\overset{\overset{\leftarrow}
{\varphi}_{{\scriptscriptstyle \Omega}_2, {\scriptscriptstyle \Omega}}}
{\longmapsto}
\langle 5 2 4 1 3 \vert + \langle 5 3 4 1 2 \vert\\
&\;\,\overset{\omega}{\longmapsto}
-\vert 31425 \rangle + \vert 2 1 4 3 5 \rangle\\
&\overset{\overset{\rightarrow}
{\varphi}_{{\scriptscriptstyle \Omega}_2, {\scriptscriptstyle \Omega}}}
{\longmapsto}
-\vert 21324 \rangle + \vert 2 1 3 2 4 \rangle = 0.
\end{align*}

(2) Note that
${\rm Im} \overset{\leftarrow}{\varphi}_{\mathfrak{u}, 
{\scriptscriptstyle \Omega}}
= V_\mathfrak{u}^\ast \overset{\leftarrow}
{\varphi}_{\mathfrak{u}, {\scriptscriptstyle \Omega}}$.
Thus we are to ask when 
$\overset{\rightarrow}{\varphi}_{\mathfrak{s}, 
{\scriptscriptstyle \Omega}}\,
\omega(
V_\mathfrak{u}^\ast \overset{\leftarrow}
{\varphi}_{\mathfrak{u}, {\scriptscriptstyle \Omega}})$
vanishes.
It is helpful to view this as a process in the Hasse diagram going 
from $V_\mathfrak{u}^\ast$ to $V_\mathfrak{s}$ via the maximal sector 
$\Omega$ as in figure \ref{shcheme},
where
$\overline{\mathfrak{u}}=
\{\overline{u}_1,\ldots, \overline{u}_a\}$ and 
$\overline{\mathfrak{s}}=
\{\overline{s}_1,\ldots, \overline{s}_b\}$. 
\begin{figure}
\begin{center}
\includegraphics{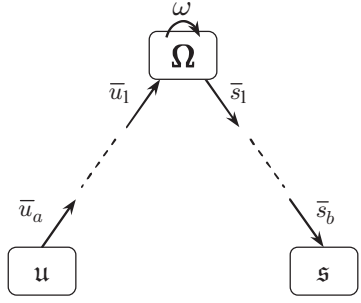}
\caption{A conceptual scheme of the proof of lemma \ref{k:le:key} (2).}
\label{shcheme}
\end{center}
\end{figure}
In figure \ref{shcheme}, the arrows $\nearrow$ represent 
the factorization 
$\overset{\leftarrow}{\varphi}_{\mathfrak{u}, 
{\scriptscriptstyle \Omega}}
=\overset{\leftarrow}{\varphi}_{\mathfrak{u},
\mathfrak{u}\cup\{\overline{u}_a\}}
\cdots\overset{\leftarrow}{\varphi}_{
{\scriptscriptstyle \Omega}\setminus\{\overline{u}_1\}, 
{\scriptscriptstyle \Omega}}$
due to lemma \ref{k:le:fac}
growing $\mathfrak{u}$ up to $\Omega$ by adding $\overline{u}_i$'s 
one by one.
Similarly the arrows $\searrow$ stand for
$\overset{\rightarrow}{\varphi}_{\mathfrak{s}, 
{\scriptscriptstyle \Omega}}
=\overset{\rightarrow}{\varphi}_{
\mathfrak{s},\mathfrak{s}\cup\{\overline{s}_b\}} \cdots
\overset{\rightarrow}{\varphi}_{
{\scriptscriptstyle \Omega}\setminus\{\overline{s}_1\}, 
{\scriptscriptstyle \Omega}}$
shrinking $\Omega$ down to $\mathfrak{s}$ by removing 
$\overline{s}_i$'s one by one.
(The arrows attached to $\overline{u}_i$ ($\overline{s}_i$)
are the same (opposite) 
as those in the Hasse diagram.) 
In this way 
\begin{align*}
\overset{\rightarrow}{\varphi}_{\mathfrak{s}, 
{\scriptscriptstyle \Omega}}\,
\omega(
V_\mathfrak{u}^\ast \overset{\leftarrow}
{\varphi}_{\mathfrak{u}, {\scriptscriptstyle \Omega}})
&= \cdots 
\overset{\rightarrow}{\varphi}_{
{\scriptscriptstyle \Omega}\setminus\{\overline{s}_1\},
{\scriptscriptstyle \Omega}}\,
\omega(\cdots 
\overset{\leftarrow}{\varphi}_{
{\scriptscriptstyle \Omega}\setminus\{\overline{u}_1\}, 
{\scriptscriptstyle \Omega}})\\
&= \cdots 
\overset{\rightarrow}{\varphi}_{
{\scriptscriptstyle \Omega}\setminus\{\overline{s}_i\}, 
{\scriptscriptstyle \Omega}}\,
\omega(\cdots 
\overset{\leftarrow}{\varphi}_{
{\scriptscriptstyle \Omega}\setminus\{\overline{u}_j\}, 
{\scriptscriptstyle \Omega}})\quad
\text{for any } 1\!\le\!i\!\le\!b,\; 1\!\le\!j\!\le\!a,
\end{align*}
where the second equality is due to 
lemma \ref{k:le:fac} which assures that 
the factorization is possible in arbitrary orders.
{}From the assertion (1) we thus find that  
this vanishes if 
$\overline{\mathfrak{s}} \cap \overline{\mathfrak{u}}
\neq \emptyset$.
In other words, 
$\overset{\rightarrow}{\varphi}_{\mathfrak{s}, 
{\scriptscriptstyle \Omega}}\,
\omega(
V_\mathfrak{u}^\ast \overset{\leftarrow}{\varphi}_{\mathfrak{u}, 
{\scriptscriptstyle \Omega}})
=0$ unless $\emptyset = 
\overline{\mathfrak{s}} \cap \overline{\mathfrak{u}}$, or
equivalently $\mathfrak{u} \supseteq \overline{\mathfrak{s}}$.
\end{proof}

\begin{proposition}\label{k:pr:vy}
 $ $ 

(1)
$V_\mathfrak{s} = \bigoplus_{\mathfrak{u}\supseteq 
\overline{\mathfrak{s}}}
\overset{\rightarrow}{\varphi}_{\mathfrak{s},
{\scriptscriptstyle \Omega}}\,
\omega(X_\mathfrak{u}^\ast
\overset{\leftarrow}{\varphi}_{\mathfrak{u},
{\scriptscriptstyle \Omega}})$.

(2)
$Y_\mathfrak{s} =
\overset{\rightarrow}{\varphi}_{\mathfrak{s},
{\scriptscriptstyle \Omega}}\,
\omega(X_{\overline{\mathfrak{s}}}^\ast\,
\overset{\leftarrow}{\varphi}_{
\overline{\mathfrak{s}}, {\scriptscriptstyle \Omega}})$.
\end{proposition}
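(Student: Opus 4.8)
The plan is to prove part (1) first and then derive part (2) as an immediate consequence, exploiting the duality between $V_\mathfrak{s}^\ast$ and $V_\mathfrak{s}$ and the injectivity/surjectivity statements of lemma \ref{k:le:si}. First I would start from the decomposition \eqref{k:eq:vss} of $V_{\scriptscriptstyle\Omega}^\ast$, namely $V_{\scriptscriptstyle\Omega}^\ast=\bigoplus_{\mathfrak{u}\subseteq{\scriptscriptstyle\Omega}}X_\mathfrak{u}^\ast\,\overset{\leftarrow}{\varphi}_{\mathfrak{u},{\scriptscriptstyle\Omega}}$. Applying the bijection $\omega$ of \eqref{k:eq:oma} gives $V_{\scriptscriptstyle\Omega}=\bigoplus_{\mathfrak{u}\subseteq{\scriptscriptstyle\Omega}}\omega(X_\mathfrak{u}^\ast\,\overset{\leftarrow}{\varphi}_{\mathfrak{u},{\scriptscriptstyle\Omega}})$, still a direct sum since $\omega$ is an isomorphism. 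Next I would apply $\overset{\rightarrow}{\varphi}_{\mathfrak{s},{\scriptscriptstyle\Omega}}$, which is surjective onto $V_\mathfrak{s}$ by lemma \ref{k:le:si}(1); hence $V_\mathfrak{s}=\sum_{\mathfrak{u}\subseteq{\scriptscriptstyle\Omega}}\overset{\rightarrow}{\varphi}_{\mathfrak{s},{\scriptscriptstyle\Omega}}\,\omega(X_\mathfrak{u}^\ast\,\overset{\leftarrow}{\varphi}_{\mathfrak{u},{\scriptscriptstyle\Omega}})$. By lemma \ref{k:le:key}(2), every summand with $\mathfrak{u}\not\supseteq\overline{\mathfrak{s}}$ is zero, so the sum collapses to $V_\mathfrak{s}=\sum_{\mathfrak{u}\supseteq\overline{\mathfrak{s}}}\overset{\rightarrow}{\varphi}_{\mathfrak{s},{\scriptscriptstyle\Omega}}\,\omega(X_\mathfrak{u}^\ast\,\overset{\leftarrow}{\varphi}_{\mathfrak{u},{\scriptscriptstyle\Omega}})$.

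The remaining work for part (1) is to upgrade this spanning statement to a direct sum. The clean way is a dimension count: by \eqref{k:eq:dvx} applied to $\overline{\mathfrak{s}}$ one has $\dim V_{\overline{\mathfrak{s}}}^\ast=\sum_{\mathfrak{u}\subseteq\overline{\mathfrak{s}}}\dim X_\mathfrak{u}^\ast$, and since $\{\mathfrak{u}\mid\mathfrak{u}\supseteq\overline{\mathfrak{s}}\}$ is in bijection with $\{\mathfrak{v}\mid\mathfrak{v}\subseteq\mathfrak{s}\}$ via complementation, together with the dimensional duality $\dim X_\mathfrak{u}^\ast=\dim X_{\overline{\mathfrak{u}}}^\ast$ (theorem \ref{k:th:dd}) I get $\sum_{\mathfrak{u}\supseteq\overline{\mathfrak{s}}}\dim X_\mathfrak{u}^\ast=\sum_{\mathfrak{v}\subseteq\mathfrak{s}}\dim X_{\mathfrak{v}}^\ast=\dim V_\mathfrak{s}^\ast=\dim V_\mathfrak{s}$. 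Because $\overset{\leftarrow}{\varphi}_{\mathfrak{u},{\scriptscriptstyle\Omega}}$ is injective (lemma \ref{k:le:si}(2)) and $\omega$ is a bijection, each $\omega(X_\mathfrak{u}^\ast\,\overset{\leftarrow}{\varphi}_{\mathfrak{u},{\scriptscriptstyle\Omega}})$ has dimension $\dim X_\mathfrak{u}^\ast$; one still needs that $\overset{\rightarrow}{\varphi}_{\mathfrak{s},{\scriptscriptstyle\Omega}}$ does not collapse these summands, but this follows because the total dimension of the (possibly non-direct) sum is at most $\sum_{\mathfrak{u}\supseteq\overline{\mathfrak{s}}}\dim X_\mathfrak{u}^\ast=\dim V_\mathfrak{s}$, which already equals $\dim V_\mathfrak{s}$; hence every inequality is an equality, the sum is direct, and each $\overset{\rightarrow}{\varphi}_{\mathfrak{s},{\scriptscriptstyle\Omega}}$ restricted to $\omega(X_\mathfrak{u}^\ast\,\overset{\leftarrow}{\varphi}_{\mathfrak{u},{\scriptscriptstyle\Omega}})$ is injective. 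This proves (1).

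For part (2), I would identify $Y_\mathfrak{s}=\bigcap_{\mathfrak{u}\subset\mathfrak{s}}{\rm Ker}\,\overset{\rightarrow}{\varphi}_{\mathfrak{u},\mathfrak{s}}$ as exactly the summand $\overset{\rightarrow}{\varphi}_{\mathfrak{s},{\scriptscriptstyle\Omega}}\,\omega(X_{\overline{\mathfrak{s}}}^\ast\,\overset{\leftarrow}{\varphi}_{\overline{\mathfrak{s}},{\scriptscriptstyle\Omega}})$ (the minimal index $\mathfrak{u}=\overline{\mathfrak{s}}$ in the direct sum of (1)). One inclusion should be checked directly: for $\mathfrak{u}\supsetneq\overline{\mathfrak{s}}$, the vector $\overset{\rightarrow}{\varphi}_{\mathfrak{s},{\scriptscriptstyle\Omega}}\,\omega(X_\mathfrak{u}^\ast\,\overset{\leftarrow}{\varphi}_{\mathfrak{u},{\scriptscriptstyle\Omega}})$ is \emph{not} killed by all $\overset{\rightarrow}{\varphi}_{\mathfrak{w},\mathfrak{s}}$ with $\mathfrak{w}\subset\mathfrak{s}$ — concretely, taking $\mathfrak{w}=\overline{\mathfrak{u}}\subsetneq\mathfrak{s}$ and using lemma \ref{k:le:fac} to factor $\overset{\rightarrow}{\varphi}_{\mathfrak{w},{\scriptscriptstyle\Omega}}=\overset{\rightarrow}{\varphi}_{\mathfrak{w},\mathfrak{s}}\overset{\rightarrow}{\varphi}_{\mathfrak{s},{\scriptscriptstyle\Omega}}$, one sees that $\overset{\rightarrow}{\varphi}_{\mathfrak{w},\mathfrak{s}}$ restricted to that summand is, up to the isomorphisms $\omega$ and $\overset{\leftarrow}{\varphi}$, essentially the identity on $X_\mathfrak{u}^\ast$ and hence nonzero. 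Conversely, for $\mathfrak{w}\subset\mathfrak{s}$ the composite $\overset{\rightarrow}{\varphi}_{\mathfrak{w},\mathfrak{s}}\overset{\rightarrow}{\varphi}_{\mathfrak{s},{\scriptscriptstyle\Omega}}\,\omega(X_{\overline{\mathfrak{s}}}^\ast\,\overset{\leftarrow}{\varphi}_{\overline{\mathfrak{s}},{\scriptscriptstyle\Omega}})=\overset{\rightarrow}{\varphi}_{\mathfrak{w},{\scriptscriptstyle\Omega}}\,\omega(X_{\overline{\mathfrak{s}}}^\ast\,\overset{\leftarrow}{\varphi}_{\overline{\mathfrak{s}},{\scriptscriptstyle\Omega}})$ vanishes by lemma \ref{k:le:key}(2), because $\overline{\mathfrak{s}}\not\supseteq\overline{\mathfrak{w}}$ when $\mathfrak{w}\subsetneq\mathfrak{s}$. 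Thus $\overset{\rightarrow}{\varphi}_{\mathfrak{s},{\scriptscriptstyle\Omega}}\,\omega(X_{\overline{\mathfrak{s}}}^\ast\,\overset{\leftarrow}{\varphi}_{\overline{\mathfrak{s}},{\scriptscriptstyle\Omega}})\subseteq Y_\mathfrak{s}$, and comparing with the direct sum in (1) — every other summand fails to lie in $Y_\mathfrak{s}$ — forces equality. The main obstacle I anticipate is the bookkeeping in the last step: making precise, via lemma \ref{k:le:fac}, that $\overset{\rightarrow}{\varphi}_{\overline{\mathfrak{u}},\mathfrak{s}}$ acts invertibly on the $\mathfrak{u}$-summand (so that it is genuinely not in $Y_\mathfrak{s}$), which is what pins down $Y_\mathfrak{s}$ to the single minimal summand rather than some larger partial sum.
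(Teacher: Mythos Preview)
Your proof of part (1) is correct and follows the paper's argument essentially step for step: surjectivity of $\overset{\rightarrow}{\varphi}_{\mathfrak{s},\Omega}$, the decomposition \eqref{k:eq:vss} pushed through $\omega$, lemma \ref{k:le:key}(2) to kill the terms with $\mathfrak{u}\not\supseteq\overline{\mathfrak{s}}$, and then the dimension count via theorem \ref{k:th:dd} to force the sum to be direct.

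For part (2), your inclusion $\overset{\rightarrow}{\varphi}_{\mathfrak{s},\Omega}\,\omega(X_{\overline{\mathfrak{s}}}^\ast\,\overset{\leftarrow}{\varphi}_{\overline{\mathfrak{s}},\Omega})\subseteq Y_\mathfrak{s}$ is fine and agrees with the paper. The gap is in the reverse direction. Showing that each other summand (for $\mathfrak{u}\supsetneq\overline{\mathfrak{s}}$) is \emph{not contained} in $Y_\mathfrak{s}$ does not by itself force $Y_\mathfrak{s}$ to equal the single $\overline{\mathfrak{s}}$-summand: $Y_\mathfrak{s}$ is just a subspace of $V_\mathfrak{s}$ and could a priori intersect the other summands nontrivially without containing any of them. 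To make your route work you would need the stronger fact that $\ker\overset{\rightarrow}{\varphi}_{\mathfrak{w},\mathfrak{s}}$ is itself a direct sum of whole $\mathfrak{u}$-summands (namely those with $\mathfrak{u}\not\supseteq\overline{\mathfrak{w}}$), which does follow from applying part (1) to $\mathfrak{w}$ and factoring $\overset{\rightarrow}{\varphi}_{\mathfrak{w},\Omega}=\overset{\rightarrow}{\varphi}_{\mathfrak{w},\mathfrak{s}}\overset{\rightarrow}{\varphi}_{\mathfrak{s},\Omega}$; intersecting over all $\mathfrak{w}\subsetneq\mathfrak{s}$ then leaves only $\mathfrak{u}=\overline{\mathfrak{s}}$. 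You gesture at this but do not state it, and the phrase ``forces equality'' hides exactly this missing step.

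The paper bypasses all of this with a one-line dimension count: once $\tilde{Y}_\mathfrak{s}:=\overset{\rightarrow}{\varphi}_{\mathfrak{s},\Omega}\,\omega(X_{\overline{\mathfrak{s}}}^\ast\,\overset{\leftarrow}{\varphi}_{\overline{\mathfrak{s}},\Omega})\subseteq Y_\mathfrak{s}$ is known, simply note
\[
\dim\tilde{Y}_\mathfrak{s}\overset{(1)}{=}\dim X_{\overline{\mathfrak{s}}}^\ast
\overset{\text{Th.~\ref{k:th:dd}}}{=}\dim X_\mathfrak{s}^\ast
\overset{\eqref{k:eq:dxy}}{=}\dim Y_\mathfrak{s},
\]
and equality follows. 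You already have every ingredient for this (the injectivity on each summand established in part (1), theorem \ref{k:th:dd}, and \eqref{k:eq:dxy}); use them instead of trying to analyze the other summands.
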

\begin{proof}
(1)
\begin{align}\label{k:eq:ren}
\begin{split}
V_\mathfrak{s} &\overset{\text{Lem}. \ref{k:le:si}(1)}{=}
\overset{\rightarrow}{\varphi}_{\mathfrak{s},
{\scriptscriptstyle \Omega}}\,
V_{\scriptscriptstyle \Omega}
= \overset{\rightarrow}{\varphi}_{\mathfrak{s},
{\scriptscriptstyle \Omega}}\,
\omega(V_{\scriptscriptstyle \Omega}^\ast)
\overset{(\ref{k:eq:vss})}{=} 
\overset{\rightarrow}{\varphi}_{\mathfrak{s},
{\scriptscriptstyle \Omega}}\,
\omega\left(
\bigoplus_{\mathfrak{u}}X_\mathfrak{u}^\ast \,
\overset{\leftarrow}\varphi_{\mathfrak{u},
{\scriptscriptstyle \Omega}}\right)\\
&=\sum_{\mathfrak{u}}
\overset{\rightarrow}{\varphi}_{\mathfrak{s},
{\scriptscriptstyle \Omega}}\,
\omega\left(X_\mathfrak{u}^\ast \,
\overset{\leftarrow}\varphi_{\mathfrak{u},
{\scriptscriptstyle \Omega}}\right)
\overset{\text{Lem}. \ref{k:le:key} (2)}{=}
\sum_{\mathfrak{u} \supseteq \overline{\mathfrak{s}}}
\overset{\rightarrow}{\varphi}_{\mathfrak{s},
{\scriptscriptstyle \Omega}}\,
\omega\left(X_\mathfrak{u}^\ast \,
\overset{\leftarrow}\varphi_{\mathfrak{u},
{\scriptscriptstyle \Omega}}\right).
\end{split}
\end{align}
Taking the dimensions, we have
\begin{align*}
\begin{split}
&\dim V_\mathfrak{s} = \dim \sum_{\mathfrak{u} \supseteq 
\overline{\mathfrak{s}}}
\overset{\rightarrow}{\varphi}_{\mathfrak{s},
{\scriptscriptstyle \Omega}}\,
\omega\left(X_\mathfrak{u}^\ast \,
\overset{\leftarrow}\varphi_{\mathfrak{u},
{\scriptscriptstyle \Omega}}\right)
\le \sum_{\mathfrak{u} \supseteq \overline{\mathfrak{s}}}\dim 
\overset{\rightarrow}{\varphi}_{\mathfrak{s},
{\scriptscriptstyle \Omega}}\,
\omega\left(X_\mathfrak{u}^\ast \,
\overset{\leftarrow}\varphi_{\mathfrak{u},
{\scriptscriptstyle \Omega}}\right)\\
&\le \sum_{\mathfrak{u} \supseteq 
\overline{\mathfrak{s}}}\dim X_\mathfrak{u}^\ast
\overset{{\rm Th}. \ref{k:th:dd}}{=}
\sum_{\mathfrak{u} \supseteq \overline{\mathfrak{s}}}
\dim X_{\overline{\mathfrak{u}}}^\ast
=\sum_{\overline{\mathfrak{u}} \subseteq \mathfrak{s}}
\dim X_{\overline{\mathfrak{u}}}^\ast
\overset{(\ref{k:eq:dvx})}{=} \dim V_{\mathfrak{s}}^\ast
= \dim V_{\mathfrak{s}}.
\end{split}
\end{align*}
Thus all the inequalities $\le$ here are actually the equality $=$.
Moreover, all the sums $\sum$ in (\ref{k:eq:ren}) must be 
the direct sum $\oplus$, finishing the proof.

(2)
   Let 
${\tilde Y}_\mathfrak{s} = \overset{\rightarrow}
{\varphi}_{\mathfrak{s},{\scriptscriptstyle \Omega}}\,
\omega(X_{\overline{\mathfrak{s}}}^\ast\,
\overset{\leftarrow}{\varphi}_{
\overline{\mathfrak{s}}, {\scriptscriptstyle \Omega}})$.
By an argument similar to the proof of lemma \ref{k:le:key} (2),
one can easily show that ${\tilde Y}_\mathfrak{s}$ is killed by 
$\overset{\rightarrow}{\varphi}_{\mathfrak{s}\setminus\{n\},
\mathfrak{s}}$
for any $n \in \mathfrak{s}$.
In view of lemma \ref{k:le:fac}, this implies 
${\tilde Y}_\mathfrak{s} \subseteq Y_\mathfrak{s}$.
The proof is finished by noting 
$\dim {\tilde Y}_\mathfrak{s} 
= \dim \overset{\rightarrow}{\varphi}_{\mathfrak{s},
{\scriptscriptstyle \Omega}}\,
\omega(X_{\overline{\mathfrak{s}}}^\ast\,
\overset{\leftarrow}{\varphi}_{
\overline{\mathfrak{s}}, {\scriptscriptstyle \Omega}})
\overset{(1)}{=} \dim X_{\overline{\mathfrak{s}}}^\ast 
\overset{{\rm Th. } \ref{k:th:dd}}{=}
\dim X_\s^\ast \overset{(\ref{k:eq:dxy})}{=}
\dim Y_\s$.
\end{proof}

Combining proposition \ref{k:pr:vy} (2) and theorem \ref{k:pr:ten}, 
we arrive at our main result in this section.
\begin{theorem}[Spectral duality]
\label{k:th:sd}
For any sector $\mathfrak{s} \in {\mathcal S}$ and its 
complementary sector $\overline{\mathfrak{s}}$,  
there is a spectrum reversing bijection $\omega^\circ$ between 
their genuine components: 
\begin{align}\label{k:eq:sd}
\begin{array}{cccc}
\omega^\circ:
& X_{\overline{\mathfrak{s}}}^\ast &
\overset{\sim}{\longrightarrow} & Y_\mathfrak{s}\\
 &\langle \phi \vert & \mapsto &
\overset{\rightarrow}{\varphi}_{\mathfrak{s},
{\scriptscriptstyle \Omega}}\,
\omega(\langle \phi \vert \,
\overset{\leftarrow}{\varphi}_{\overline{\mathfrak{s}}, 
{\scriptscriptstyle \Omega}}\,).
\end{array}
\end{align}
In particular, the genuine spectrum 
enjoys the following duality: 
\begin{align}\label{k:eq:kore}
{\rm Spec}^\circ(\overline{\mathfrak{s}}) = -L(p+q)
-{\rm Spec}^\circ(\mathfrak{s}).
\end{align}
\end{theorem}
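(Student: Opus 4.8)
The plan is to realize $\omega^\circ$ as the composite
\[
X^\ast_{\overline{\mathfrak s}}\;\hookrightarrow\; V^\ast_{\overline{\mathfrak s}}
\;\xrightarrow{\;\cdot\,\overset{\leftarrow}{\varphi}_{\overline{\mathfrak s},\Omega}\;}\;
V^\ast_\Omega \;\xrightarrow{\;\omega\;}\; V_\Omega \;\xrightarrow{\;\overset{\rightarrow}{\varphi}_{\mathfrak s,\Omega}\;}\; V_{\mathfrak s},
\]
and to check three things in turn: (i) the composite lands in $Y_{\mathfrak s}$ and is surjective onto it; (ii) it is injective, hence an isomorphism; (iii) it carries the action of $H_{\overline{\mathfrak s}}$ into that of $-L(p+q)-H_{\mathfrak s}$. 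Essentially all of the substance is already isolated in the earlier results: Proposition \ref{k:pr:vy}(2) delivers (i), Theorem \ref{k:th:dd} together with \eqref{k:eq:dxy} gives the dimension count for (ii), and Proposition \ref{k:pr:com} (used twice) together with Theorem \ref{k:pr:ten} gives (iii). What remains is to glue these together while keeping track of the quotient $X^\ast_{\overline{\mathfrak s}}$ versus a lift inside $V^\ast_{\overline{\mathfrak s}}$, and of which $\varphi$'s act on bras and which on kets.

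For (i), I would first confirm that the composite descends to the quotient $X^\ast_{\overline{\mathfrak s}}$, i.e. annihilates $\sum_{\mathfrak u\subsetneq\overline{\mathfrak s}}{\rm Im}\,\overset{\leftarrow}{\varphi}_{\mathfrak u,\overline{\mathfrak s}}$: by the bra form of Lemma \ref{k:le:fac}, right-multiplication by $\overset{\leftarrow}{\varphi}_{\overline{\mathfrak s},\Omega}$ sends ${\rm Im}\,\overset{\leftarrow}{\varphi}_{\mathfrak u,\overline{\mathfrak s}}$ into ${\rm Im}\,\overset{\leftarrow}{\varphi}_{\mathfrak u,\Omega}$, which is then killed by $\overset{\rightarrow}{\varphi}_{\mathfrak s,\Omega}\circ\omega$ via Lemma \ref{k:le:key}(2) because $\mathfrak u\not\supseteq\overline{\mathfrak s}$. (Alternatively one just keeps the fixed embedding of $X^\ast$ chosen before \eqref{k:eq:vss}.) With this understood, ${\rm Im}\,\omega^\circ=\overset{\rightarrow}{\varphi}_{\mathfrak s,\Omega}\,\omega\!\bigl(X^\ast_{\overline{\mathfrak s}}\,\overset{\leftarrow}{\varphi}_{\overline{\mathfrak s},\Omega}\bigr)=Y_{\mathfrak s}$ by Proposition \ref{k:pr:vy}(2). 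For (ii), $\dim X^\ast_{\overline{\mathfrak s}}=\dim X^\ast_{\mathfrak s}$ by Theorem \ref{k:th:dd} and $\dim X^\ast_{\mathfrak s}=\dim Y_{\mathfrak s}$ by \eqref{k:eq:dxy}, so a surjection between spaces of equal finite dimension is an isomorphism.

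For (iii), the cleanest route is at the operator level. Transposing Proposition \ref{k:pr:com} gives $(\langle\phi|H_{\overline{\mathfrak s}})\overset{\leftarrow}{\varphi}_{\overline{\mathfrak s},\Omega}=(\langle\phi|\overset{\leftarrow}{\varphi}_{\overline{\mathfrak s},\Omega})H_\Omega$, i.e. right-multiplication by $\overset{\leftarrow}{\varphi}_{\overline{\mathfrak s},\Omega}$ intertwines the bra-actions of $H_{\overline{\mathfrak s}}$ and $H_\Omega$. The computation in the proof of Theorem \ref{k:pr:ten} is linear in its input, so it in fact establishes the operator identity $\omega\circ(\,\cdot\,H_\Omega)=\bigl(-L(p+q)-H_\Omega\bigr)\circ\omega$ on all of $V^\ast_\Omega$. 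Finally Proposition \ref{k:pr:com} again gives $\overset{\rightarrow}{\varphi}_{\mathfrak s,\Omega}H_\Omega=H_{\mathfrak s}\overset{\rightarrow}{\varphi}_{\mathfrak s,\Omega}$. Chaining these three identities yields $\omega^\circ\circ H_{\overline{\mathfrak s}}=\bigl(-L(p+q)-H_{\mathfrak s}\bigr)\circ\omega^\circ$ as maps $X^\ast_{\overline{\mathfrak s}}\to Y_{\mathfrak s}$. Since $\omega^\circ$ is an isomorphism, $H_{\overline{\mathfrak s}}\vert_{X^\ast_{\overline{\mathfrak s}}}$ and $\bigl(-L(p+q)-H_{\mathfrak s}\bigr)\vert_{Y_{\mathfrak s}}$ are conjugate, hence have the same characteristic polynomial; comparing roots with multiplicity gives ${\rm Spec}^\circ(\overline{\mathfrak s})=-L(p+q)-{\rm Spec}^\circ(\mathfrak s)$, which is \eqref{k:eq:kore}. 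Note this route uses Proposition \ref{k:pr:vy} but not the diagonalizability of $H_\Omega$; diagonalizability was only needed in Corollary \ref{k:cor:ome}.

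The main obstacle is bookkeeping rather than depth. The tempting approach is to push individual eigen-bras through the three maps, but a representative in $V^\ast_{\overline{\mathfrak s}}$ of an $H_{\overline{\mathfrak s}}$-eigenvector of the quotient $X^\ast_{\overline{\mathfrak s}}$ need not itself be an eigenvector in $V^\ast_{\overline{\mathfrak s}}$, so I would phrase (iii) as an operator identity rather than chase eigenvectors; and one must consistently track which $\varphi$'s act on bras and which on kets, and in which order they compose (Lemma \ref{k:le:fac}), when invoking Proposition \ref{k:pr:vy}(2) and Lemma \ref{k:le:key}(2). Everything beyond that is a direct appeal to Proposition \ref{k:pr:vy}, Theorem \ref{k:th:dd}, Theorem \ref{k:pr:ten} and Proposition \ref{k:pr:com}.
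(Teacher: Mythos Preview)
Your proposal is correct and follows the same approach as the paper, which simply states that the theorem follows by combining Proposition~\ref{k:pr:vy}(2) with Theorem~\ref{k:pr:ten}. You have unpacked this terse remark into its three natural pieces and invoked exactly the same ingredients (Proposition~\ref{k:pr:vy}, Theorem~\ref{k:th:dd}, \eqref{k:eq:dxy}, Proposition~\ref{k:pr:com}, Theorem~\ref{k:pr:ten}); your additional care about the quotient versus the fixed embedding, and your observation that the computation in Theorem~\ref{k:pr:ten} is linear and hence yields the operator identity $\omega\circ(\cdot\,H_\Omega)=(-L(p+q)-H_\Omega)\circ\omega$ without invoking diagonalizability, are welcome refinements but do not change the route.
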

This relation 
is a refinement of theorem \ref{k:th:dd}.

\begin{example}\label{k:ex:has3}
Figure \ref{specmaru} presents 
${\rm Spec}^\circ({\mathfrak s})$ for $L=4$ 
in the same format as figure \ref{spec}. 
All the genuine eigenvalues form pairs with those 
in the complementary sectors 
to add up to $-L(p+q)=-4u$ including the multiplicity.
The full spectrum ${\rm Spec}(\mathfrak{s})$ 
in figure \ref{spec} is reproduced 
from the data in figure \ref{specmaru}
 and (\ref{k:eq:sus}).

\begin{figure}[h]
\begin{center}
\includegraphics{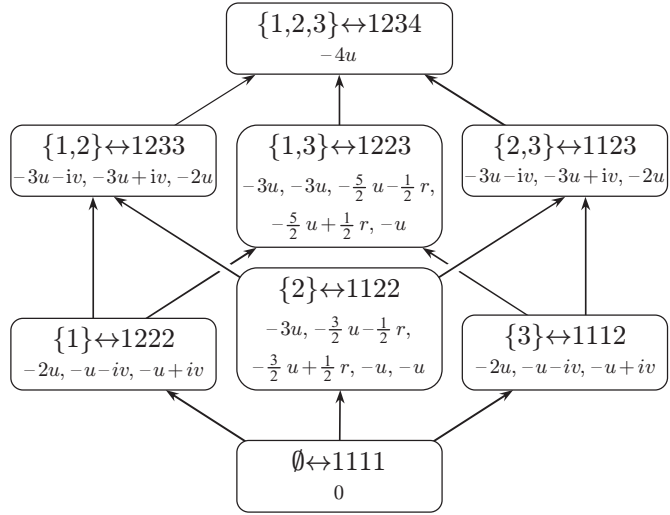}
 \caption{${\rm Spec}^\circ({\mathfrak s})$ 
for $L=4$. $u=p+q,\ v=p-q,\ r=\sqrt{-7p^2+18pq-7q^2}$.}
\label{specmaru}
\end{center}
\end{figure}

\end{example}

\begin{remark}
The genuine spectrum ${\rm Spec}^{\circ}$ also 
enjoys the symmetry \eqref{k:eq:scc}.
It follows that 
if a sector $\mathfrak{t}$ satisfies
$\mathfrak{t}\supset \s, \tilde{\s}$
with $\s =(s_1,\dots,s_{n-1})$
and $\tilde{\s}=(L-s_{n-1},\dots,L-s_1)$,
then $H_\mathfrak{t}$ is degenerated
because of 
$\Spec(\mathfrak{t})\supset\Spec^{\circ} (\s)
\cup\Spec^{\circ} (\tilde{\s})$
and $\Spec^{\circ}(\s)=\Spec^{\circ} (\tilde{\s})$.
\end{remark}

\section{Integrability of the model}\label{sec:5}
Our multi-species ASEP
is integrable in the sense that 
the eigenvalue formula of the Hamiltonian
can be derived by a nested Bethe ansatz \cite{Sc}. 
See also \cite{AB, BDV}.  

As mentioned in section \ref{sec:1},
our Hamiltonian is associated with 
the transfer matrix of the Perk-Schultz vertex model \cite{PS}.
In section~\ref{sec:5-1},
we derive the eigenvalues of 
the transfer matrix in a slightly more general way than \cite{Sc}. 
Namely we execute the nested Bethe ansatz in an arbitrary ``nesting order".
In section~\ref{sec:5-2}, we utilize it 
to give an alternative account of the spectral inclusion property 
(theorem~\ref{k:cor:spe}) in the Bethe ansatz framework. 
We also recall the original derivation of the asymptotic form of the
spectrum following \cite{K}.
In section~\ref{sec:5-3}, the Bethe ansatz results 
are presented in a more conventional parameterization 
with the spectral parameter having a difference property.

\subsection{Nested algebraic Bethe ansatz}\label{sec:5-1}

\subsubsection{Transfer matrix and eigenvalue formula}\label{k:sec:tme}
Let us derive the eigenvalues of the Hamiltonian $H$ \eqref{k:hdef}
for  the $(N-1)$-species ASEP on the ring ${\mathbb Z}_L$ 
by using the nested  algebraic Bethe ansatz. Let $W_j$ be a
vector space $W=\C^N$ at the $j$th site of the ring.
We define a matrix $R_{jk}(\lambda)\in 
\End(W_j\ot W_k)$ as
\begin{align}
R_{j k}(\lambda)=P_{j k}(1+\lambda h_{j k}),
\end{align} 
where $P_{j k}$  and $h_{jk}$ are, respectively, the 
permutation operator and the local Hamiltonian \eqref{k:eq:H} 
acting non-trivially on $W_j\ot W_k$. The non-zero 
elements are explicitly given by 
\begin{align}
R_{\alpha\alpha}^{\alpha\alpha}(\lambda)=1,
\,\,
R_{\alpha\beta}^{\alpha\beta}(\lambda)=
    \begin{cases}
   q\lambda  &\text{ for $\alpha<\beta$}, \\
   p \lambda &\text{ for $\alpha>\beta$},
    \end{cases}
\,\,
R_{\alpha\beta}^{\beta\alpha}(\lambda)=
    \begin{cases}
     1-q\lambda   &\text{ for $\alpha<\beta$}, \\
     1-p \lambda  &\text{ for $\alpha>\beta$}.
    \end{cases}
\label{s:R}
\end{align}
Here $\alpha, \beta\in\{1,2,\dots,N\}$, and 
$R_{\alpha\beta}^{\gamma \delta}(\lambda)$
stands for $R_{jk}(\lambda)(|\alpha\ket_j \otimes |\beta\ket_k)=
|\gamma\ket_j \otimes|\delta\ket_k 
R_{\alpha\beta}^{\gamma\delta}(\lambda)$
(summation over repeated indices will always be assumed).
The above  $R$-matrix satisfies the Yang-Baxter equation \cite{Ba}
\begin{align}
R_{23}(\lambda_2)R_{13}(\lambda_1)R_{12}(\lambda)=
     R_{12}(\lambda)R_{13}(\lambda_1)R_{23}(\lambda_2),
\label{s:YBE}
\end{align}
where the parameter $\lambda$ is given by
\begin{align}
\lambda=\xi(\lambda_1,\lambda_2)=
\frac{\lambda_1-\lambda_2}{1-(p+q)\lambda_2+p q \lambda_1 \lambda_2}.
\end{align}
This is not a simple difference $\lambda_1 - \lambda_2$. 
However, one can restore the difference property 
by changing variables as in section~\ref{sec:5-3}.
Thanks to \eqref{s:YBE}, the transfer matrix 
$T(\lambda) \in \End(W^{\ot L})$
\begin{align}
T(\lambda)=\tr_{W_0} [R_{0 L}(\lambda)\cdots R_{0 1}(\lambda)]
\label{s:TM}
\end{align}
constitutes a one-parameter commuting family
\begin{align}
[T(\lambda_1),T(\lambda_2)]=0. 
\end{align}
It means that $T(\lambda)$ is a generating function for
a set of mutually commuting 
``quantum integrals of motion" $\mathcal{I}_j$
($j=1,2,\dots$):
\begin{align}
\mathcal{I}_j=\left(\frac{\partial}{\partial \lambda}\right)^j
\ln T(\lambda) \biggr|_{\lambda=0}.
\label{s:IM}
\end{align}
$\mathcal{I}_0$ is the momentum operator 
related to the shift operator $C$ \eqref{k:eq:cs} 
by $C=\exp \mathcal{I}_0$.
$\mathcal{I}_1$ yields the ASEP Hamiltonian $H$ \eqref{k:hdef}:
\begin{align}
\mathcal{I}_1=\sum_{j \in {\mathbb Z}_L} R_{j j+1}(0) R_{jj+1}'(0)
=\sum_{j \in {\mathbb Z}_L} h_{j j+1}=H.
\label{s:baxter}
\end{align}
Thus the eigenvalue problem of  $H$
is contained in that of $T(\lambda)$.
To find the eigenvalues of $T(\lambda)$, 
we introduce the monodromy matrix $\mathcal{T(\lambda)} \in 
\End(W_0\ot W^{\ot L})$ by
\begin{align}
\mathcal{T(\lambda)}=R_{0 L}(\lambda)\cdots R_{0 1}(\lambda).
\end{align}
Its trace over the auxiliary space $W_0$ reproduces 
the transfer matrix \eqref{s:TM}
\begin{align}
T(\lambda)=\tr_{W_0} \mathcal{T(\lambda)}.
\end{align}
From the Yang-Baxter equation \eqref{s:YBE}, one sees the following is
valid:
\begin{align}
\T_2(\lambda_2)\T_1(\lambda_1)R_{12}(\lambda)=R_{12}(\lambda)\T_1(\lambda_1)\T_2(\lambda_2),
\label{s:YBR}
\end{align}
where $R_{12}(\lambda)$ here acts on the tensor product of 
two auxiliary spaces.

Let us  define the elements of the monodromy matrix in the auxiliary space
as $\mathcal{T}(\lambda)|\alpha\ket_0
=\mathcal{T}_{\alpha}^{\beta}(\lambda)|\beta\ket_0$,
where $\mathcal{T}_\alpha^\beta(\lambda)$ 
acts on the quantum space $W^{\ot L}$.
More explicitly,
\begin{align}
&\T(\lambda)=\begin{pmatrix}
              \T_{a_1}^{a_1}(\lambda) & B_{a_2}(\lambda) &\cdots & B_{a_N}(\lambda)       \\
               C^{a_2}(\lambda)   & \T_{a_2}^{a_2}(\lambda)  &\cdots & \T_{a_N}^{a_2}(\lambda)  \\
               \vdots      &   \vdots          &\ddots   &\vdots           \\
               C^{a_N}(\lambda)   &  \T_{a_2}^{a_N}(\lambda) & \cdots  &  \T_{a_N}^{a_N}(\lambda)
            \end{pmatrix}, \nn \\
&B_{a_j}(\lambda):=\mathcal{T}_{a_j}^{a_1}(\lambda), \quad
C^{a_j}(\lambda):=\mathcal{T}_{a_1}^{a_j}(\lambda) \quad \text{ for $j\in\{2,\dots,N\}$},
\label{s:B-op1}
\end{align}
Here we have introduced 
the indices $a_1, \ldots, a_N$ that are arbitrary as long as 
$\{a_j\}_{j=1}^{N}=\{1,\dots,N\}$.
They specify the nesting order 
$|a_1\ket,|a_2\ket,\dots, |a_N\ket$
\footnote{In the standard nested algebraic Bethe ansatz, the 
nesting order is chosen as $a_j=j$.}.

Let $|\vac\ket:=|a_1\ket_1\ot|a_1\ket_2\ot \cdots \ot |a_1\ket_L$ be
the ``vacuum state" in the quantum space. It immediately follows that 
the action of $\T(\lambda)$ on $|\vac \ket$ is given by
\begin{align}
\T(\lambda)|\vac \ket=\begin{pmatrix}
             1      & B_{a_2}(\lambda) &\cdots      & B_{a_N}(\lambda)       \\
             0      & d(\lambda)(q/p)^{L \theta_{12}}   &\cdots & 0\\
             \vdots & \vdots          &\ddots    & \vdots  \\
             0      & 0      & \cdots &      d(\lambda) (q/p)^{L\theta_{1N}}
            \end{pmatrix}
         |\vac \ket,
\label{s:vacuum}
\end{align}
where 
\begin{align}
d(\lambda):=(p \lambda)^L, \quad 
\theta_{ij}:=\theta(a_i-a_j):=
   \begin{cases}
       0  &\text{ for $a_i<a_j$}, \\
       1  &\text{ for $a_i>a_j$}.
   \end{cases}
\end{align}
Using the relation \eqref{s:YBR}, 
we can verify the following commutation relations:
\begin{align}
&B_{\alpha}(\lambda)B_{\beta}(\lambda')=\begin{cases}
                                    B_{\beta}(\lambda')B_{\alpha}(\lambda) & 
                                                       \text{ for $\alpha=\beta$}, \\     
                                    R_{\alpha \beta}^{\gamma \delta}(\xi(\lambda,\lambda'))
                                                                    B_{\delta}(\lambda')
                                                                    B_{\gamma}(\lambda)&
                                                       \text{ for $\alpha\ne\beta$},
                                  \end{cases} \nn \\
&\T_{a_1}^{a_1}(\lambda)B_{\alpha}(\lambda')                  
 =
    \begin{cases}
        f(\lambda',\lambda)B_{\alpha}(\lambda')\T_{a_1}^{a_1}(\lambda)+
                                g(\lambda,\lambda')B_{\alpha}(\lambda)\T_{a_1}^{a_1}(\lambda')
                                                         &\text{ for $a_1 <\alpha$}, \\
        \overline{f}(\lambda',\lambda)B_{\alpha}(\lambda')\T_{a_1}^{a_1}(\lambda)+
                                \overline{g}(\lambda,\lambda')B_{\alpha}(\lambda)
                                                        \T_{a_1}^{a_1}(\lambda')
                                                          &\text{ for $a_1 >\alpha$},
    \end{cases} \nn \\
&\T_{\gamma}^{\beta}(\lambda)B_{\alpha}(\lambda') 
  =
       \begin{cases}
                      f(\lambda,\lambda')
                          R_{\gamma \alpha}^{\varepsilon\delta}(\xi(\lambda,\lambda'))
                          B_{\delta}(\lambda')\T_{\varepsilon}^{\beta}(\lambda)-
            g(\lambda,\lambda')B_{\gamma}(\lambda)\T_{\alpha}^{\beta}(\lambda')  
                                                             &\text{for $a_1<\beta$}, \\
                     \overline{f}(\lambda,\lambda')
                          R_{\gamma \alpha}^{\varepsilon\delta}(\xi(\lambda,\lambda'))
                          B_{\delta}(\lambda')\T_{\varepsilon}^{\beta}(\lambda)-
 \overline{g}(\lambda,\lambda')B_{\gamma}(\lambda)\T_{\alpha}^{\beta}(\lambda')  
                                                               &\text{for $a_1>\beta$}.
       \end{cases}
\label{s:relation}
\end{align}
Here  $\alpha,\beta,\gamma,\delta\in\{a_j\}_{j=2}^N$, and the functions
$f$, $g$, $\overline{f}$ and $\overline{g}$ are defined by
\begin{align}
&f(\lambda,\mu)=\frac{1}{p \xi(\lambda,\mu)}=\frac{1-(p+q)\mu+ p q \lambda \mu}{p(\lambda-\mu)}, \nn \\
&g(\lambda,\mu)=1-f(\mu,\lambda)=f(\lambda,\mu)-\frac{q}{p}=
             \frac{(1-q \lambda)(1-p\mu)}{p(\lambda-\mu)}, \nn \\
&\overline{f}(\lambda,\mu)=f(\lambda,\mu)|_{p \leftrightarrow q}
        =\frac{p}{q}f(\lambda,\mu),  \quad
\overline{g}(\lambda,\mu)=g(\lambda,\mu)|_{p \leftrightarrow q}.
\end{align}

Consider the following state with the number of 
particles of the  $a_j$th kind being $m_{a_j}$:
\begin{align}
|\{\lambda^{(1)}\}\ket=F^{\alpha_1\cdots\alpha_{n_1}}
                   B_{\alpha_1}(\lambda_1^{(1)})
                            \cdots B_{\alpha_{n_1}}(\lambda_{n_1}^{(1)})|\vac \ket,
\;\;
\alpha_j\in \{a_2,\dots,a_N\} \quad (1\le j \le n_1),
\label{s:BV}
\end{align}
where 
\begin{align}
n_k:=\sum_{j=k+1}^N m_{a_j} \quad (1\le k \le N-1;\,n_0=L).
\label{s:m-number}
\end{align}
The sum over repeated indices in \eqref{s:BV} is restricted by the condition
\begin{align}
\sharp \{j|1\le j \le n_1, \alpha_j=a_k\}=m_{a_k} \quad (2\le k \le N).
\end{align}
Then the action of $T(\lambda)$ on $|\{\lambda^{(1)}\}\ket$ is calculated by 
using the relations
\eqref{s:relation} and \eqref{s:vacuum}:
\begin{align}
&T(\lambda)|\{\lambda^{(1)}\}\ket=\left[\T_{a_1}^{a_1}(\lambda)+\sum_{\alpha=2}^N       
                          \T_{a_\alpha}^{a_\alpha}(\lambda) \right]|\{\lambda^{(1)}\}\ket   
        =\left(\frac{p}{q}\right)^{\overline{n}_1}\prod_{j=1}^{n_1} f(\lambda_j^{(1)},\lambda) 
                                                             |\{\lambda^{(1)}\}\ket   \nn \\
&\qquad \quad +F^{\alpha_1\cdots\alpha_{n_1}}
  T_{\quad\,\alpha_1,\dots,\alpha_{n_1}}^{(1)\beta_1,\dots,\beta_{n_1}}
                                       (\lambda|\{\lambda^{(1)}\})
d(\lambda) \prod_{j=1}^{n_1}f(\lambda,\lambda_j^{(1)})
\prod_{j=1}^{n_1}B_{\beta_j}(\lambda_j^{(1)})|\vac \ket
+{\rm u.t.},
\label{s:nest1}
\end{align}
where the product $\prod_{j=1}^{n_1}B_{\beta_j}(\lambda_j^{(1)})$ is ordered from
left to right with increasing $j$;
$\alpha_j, \beta_j \in\{a_2,\dots,a_N\}$; $\overline{n}_k$ is an integer
given by
\begin{align}
\overline{n}_k:=\sum_{j=k+1}^N m_{a_j}\theta_{k j} \quad (1\le k \le N-1),
\end{align}
and 
$T_{\quad\,\alpha_1,\dots,\alpha_{n_1}}^{(1)\beta_1,\dots,\beta_{n_1}}
(\lambda|\{\lambda^{(1)}\})$ is a matrix element of 
$T^{(1)}(\lambda|\{\lambda^{(1)}\}) \in \End ((\mathbb{C}^{N-1})^{\otimes L})$ 
defined by
\begin{align}
T^{(1)}(\lambda|\{\lambda^{(1)}\})
=\sum_{\alpha=2}^N\left(\frac{q}{p}\right)^{(L-n_1)\theta_{1\alpha}}
                \left[R_{0 n_1}(\xi(\lambda,\lambda_{n_1}^{(1)}))\cdots
              R_{0 1}(\xi(\lambda,\lambda_1^{(1)}))\right]_{a_{\alpha}}^{a_{\alpha}}.
\end{align}
Note that the sum corresponds to the trace  over the $(N-1)$-dimensional
auxiliary space spanned by  the basis vectors $|a_j \ket_0$ ($2\le j \le N$), and
the quantum space acted on by $T^{(1)}(\lambda|\{\lambda^{(1)}\})$ is
spanned by the vector $\otimes_{j=1}^{n_1}|\alpha_j\ket_j$ where
$\alpha_j\in\{a_k\}_{k=2}^N$.

If we set $F^{\alpha_1\cdots \alpha_{n_1}}$ as the elements of
the eigenstate for $T^{(1)}(\lambda|\{\lambda^{(1)}\})$ and
choose the set of unknown numbers $\{\lambda^{(1)}_j\}_{j=1}^{n_1}$
so that the unwanted terms (u.t.) in \eqref{s:nest1} 
become zero ($\rm{u.t.}=0$),  the eigenvalue, written $\Lambda(\lambda)$,
of the transfer matrix $T(\lambda)$ is expressed as
\begin{align}
\Lambda(\lambda)=\left(\frac{p}{q}\right)^{\overline{n}_1}
                        \prod_{j=1}^{n_1} f(\lambda_j^{(1)},\lambda)+
     \Lambda^{(1)}(\lambda|\{\lambda^{(1)}\}) d(\lambda) 
                      \prod_{j=1}^{n_1}f(\lambda,\lambda_j^{(1)}).
\end{align}
Here $\Lambda^{(1)}(\lambda|\{\lambda^{(1)}\})$ is the eigenvalue of 
 $T^{(1)}(\lambda|\{\lambda^{(1)}\})$, which will be determined below.
Noting that
\begin{align}
\xi(\xi(\lambda_1,\mu),\xi(\lambda_2,\mu))=\xi(\lambda_1,\lambda_2),
\end{align}
and using the Yang-Baxter equation \eqref{s:YBE}, one finds that 
the transfer matrix
$T^{(1)}(\lambda|\{\lambda^{(1)}\})$ 
forms a commuting family
\begin{align}
[T(\lambda_1|\{\lambda^{(1)}\}),T(\lambda_2|\{\lambda^{(1)}\})]=0.
\end{align}
Hence the method similar to the above is also applicable to
the eigenvalue problem of $T^{(1)}(\lambda|\{\lambda^{(1)}\})$.
Namely constructing the state 
\begin{align}
|\{\lambda^{(2)}\}\ket =F^{(1)\alpha_1\cdots\alpha_{n_2}}B^{(1)}_{\alpha_1}(\lambda_1^{(2)})
\cdots B_{\alpha_{n_2}}^{(1)}(\lambda_{n_2}^{(2)}) |\vac^{(1)}\ket,\quad
|\vac^{(1)}\ket:=\bigotimes_{j=1}^{n_2}|a_2\ket _j,
\label{s:2nd-eigen}
\end{align}
where $\alpha_j\in\{3,\dots,N\}$ and
\begin{align}
B_{\alpha_j}^{(1)}(\lambda)= \left[R_{0 n_1}(\xi(\lambda,\lambda_{n_1}^{(1)}))\cdots
              R_{0 1}(\xi(\lambda,\lambda_1^{(1)}))\right]_{\alpha_j}^{a_2},
\label{s:B-op2}
\end{align}
we obtain 
\begin{align}
\Lambda^{(1)}(\lambda|\{\lambda_j^{(1)}\})
=&\left(\frac{q}{p}\right)^{(L-n_1)\theta_{12}}\left(\frac{p}{q}\right)^{\overline{n}_2}
     \prod_{j=1}^{n_2} f(\lambda_j^{(2)},\lambda)                      \nn \\
  &+
          \Lambda^{(2)}(\lambda|\{\lambda^{(2)}\}) 
               \prod_{j=1}^{n_1}\frac{1}{f(\lambda,\lambda_j^{(1)})}
                                       \prod_{j=1}^{n_2}f(\lambda,\lambda_j^{(2)}).
\end{align}
Note that the coefficient $F^{(1)\alpha_1\cdots\alpha_{n_2}}$ in \eqref{s:2nd-eigen}
and 
$ \Lambda^{(2)}(\lambda|\{\lambda^{(2)}\}) $ in the above are, respectively,
the elements of the eigenstate and the eigenvalue for the transfer matrix
\begin{align}
 T^{(2)}(\lambda|\{\lambda^{(2)}\}) 
   =\sum_{\alpha=3}^{N} 
    \left(\frac{q}{p}\right)^{(L-n_1)\theta_{1\alpha}+(n_1-n_2)\theta_{2 \alpha}}
             \left[R_{\alpha n_2} (\xi(\lambda,\lambda_{n_2}^{(2)}))
         \cdots
              R_{\alpha 1}(\xi(\lambda,\lambda_1^{(2)}))\right]_{a_{\alpha}}^{a_{\alpha}}.
\label{s:nest-2}
\end{align}
The sum corresponds to the trace  over  the $(N-2)$-dimensional auxiliary 
space spanned by  the basis vectors $|a_j \ket_0$ ($3\le j \le N$), and
the space acted on by $T^{(2)}(\lambda|\{\lambda^{(2)}\})$ is spanned
by $\otimes_{j=1}^{n_2}|\alpha_j\ket_j$ where
$\alpha_j\in\{a_k\}_{k=3}^N$.
Repeating this procedure, one obtains
\begin{align}
\Lambda^{(l)}(\lambda|\{\lambda^{(l)}\})=& 
      \left(\frac{q}{p}\right)^{\sum_{j=1}^{l}(n_{j-1}-n_j)\theta_{j l+1}-\overline{n}_{l+1}}
      \prod_{j=1}^{n_{l+1}} f(\lambda_j^{(l+1)},\lambda)                \nn \\
    &+
         \Lambda^{(l+1)}(\lambda|\{\lambda^{(l+1)}\}) 
               \prod_{j=1}^{n_l}\frac{1}{f(\lambda,\lambda_j^{(l)})}
                                       \prod_{j=1}^{n_{l+1}}f(\lambda,\lambda_j^{(l+1)}),
\end{align}
 for $2\le l< N-2$, and
\begin{align}
\Lambda^{(N-2)}(\lambda|\{\lambda^{(N-2)}\})=&
      \left(\frac{q}{p}\right)^{\sum_{j=1}^{N-2}(n_{j-1}-n_j)\theta_{j N-1}-\overline{n}_{N-1}}
      \prod_{j=1}^{n_{N-1}} f(\lambda_j^{(N-1)},\lambda)                \nn \\
    & \hspace*{-1cm} +  
        \left(\frac{q}{p}\right)^{\sum_{j=1}^{N-1}(n_{j-1}-n_j)\theta_{j N}}
               \prod_{j=1}^{n_{N-2}}\frac{1}{f(\lambda,\lambda_j^{(N-2)})}
               \prod_{j=1}^{n_{N-1}}f(\lambda,\lambda_j^{(N-1)}).
\end{align}
Thus we finally arrive at the eigenvalue formula of the transfer matrix: 
\begin{align}
\Lambda(\lambda)=&\left(\frac{q}{p}\right)^{-\overline{n}_1}
                      \prod_{j=1}^{n_1}f(\lambda_j^{(1)},\lambda) 
\nn \\
        &+d(\lambda)\sum_{k=1}^{N-2}
          \left(\frac{q}{p}\right)^{\sum_{j=1}^{k}(n_{j-1}-n_j) \theta_{j k+1}-
                         \overline{n}_{k+1}}
  \prod_{j=1}^{n_k}f(\lambda,\lambda_j^{(k)})\prod_{j=1}^{n_{k+1}}f(\lambda_j^{(k+1)},\lambda) \nn \\
        &  +d(\lambda)
            \left(\frac{q}{p}\right)^{\sum_{j=1}^{N-1}(n_{j-1}-n_j)\theta_{j N}}
            \prod_{j=1}^{n_{N-1}}f(\lambda,\lambda_j^{(N-1)}).
\label{s:DVF}
\end{align}
The unwanted terms disappear when 
the set of  unknown numbers $\{\lambda^{(n)}_l\}_{j=1}^{n_l}$ 
($1\le l \le N-1$) satisfy the following Bethe equations, 
which are also derived
by imposing the pole free conditions on 
the eigenvalue formula:
\begin{align}
& \left(\frac{q}{p}\right)^{(L-n_1)\theta_{12}}d(\lambda_j^{(1)}) =
    -\left(\frac{q}{p}\right)^{-\overline{n}_1+\overline{n}_2}\prod_{k=1}^{n_1} 
       \frac{f(\lambda_k^{(1)},\lambda_j^{(1)})}{f(\lambda_j^{(1)},\lambda_k^{(1)})}
       \prod_{k=1}^{n_2}\frac{1}{f(\lambda_k^{(2)},\lambda_j^{(1)})},   \nn \\
&\left(\frac{q}{p}\right)^{\sum_{j=1}^{l}(n_{j-1}-n_j)\theta_{j l+1}-
                           \sum_{j=1}^{l-1}(n_{j-1}-n_j)\theta_{jl}}   \nn \\
& \qquad \qquad \qquad=  
    -\left(\frac{q}{p}\right)^{-\overline{n}_{l}+\overline{n}_{l+1}}
    \prod_{k=1}^{n_l}\frac{f(\lambda_k^{(l)},\lambda_j^{(l)})}{f(\lambda_j^{(l)},\lambda_k^{(l)})}
    \frac{\prod_{k=1}^{n_{l-1}}f(\lambda_j^{(l)},\lambda_k^{(l-1)})} 
         {\prod_{k=1}^{n_{l+1}}f(\lambda_k^{(l+1)},\lambda_j^{(l)})}   \quad    
                                          \text{\,\, ($2\le l \le N-2$)}, \nn \\
&\left(\frac{q}{p}\right)^{\sum_{j=1}^{N-1}(n_{j-1}-n_j)\theta_{j N}-
                           \sum_{j=1}^{N-2}(n_{j-1}-n_j)\theta_{jN-1}} \nn \\
&\qquad \qquad \qquad  =-
            \left(\frac{q}{p}\right)^{-\overline{n}_{N-1}}
              \prod_{k=1}^{n_{N-1}} 
              \frac{f(\lambda_k^{(N-1)},\lambda_j^{(N-1)})}
                   {f(\lambda_j^{(N-1)},\lambda_k^{(N-1)})}
             \prod_{k=1}^{n_{N-2}}
                   f(\lambda_j^{(N-1)},\lambda_k^{(N-2)}).
\label{s:BAE}
\end{align}
Inserting the expression \eqref{s:DVF}
into \eqref{s:baxter}, one finds the 
spectrum of the Hamiltonian:
\begin{align}
E=\frac{\partial}{\partial \lambda}
\ln \Lambda(\lambda)\biggr|_{\lambda=0}=
\sum_{j=1}^{n_1}\frac{(1-p\lambda_j^{(1)})
(1-q\lambda_j^{(1)})}{\lambda_j^{(1)}}.
\label{s:spectrum}
\end{align}

Though the explicit form of the Bethe equation \eqref{s:BAE}
and its solutions depend on the nesting order in general,
the spectrum of the Hamiltonian \eqref{s:spectrum}, 
of course,  does not depend on it.

\subsubsection{Completeness of the Bethe ansatz}\label{k:subs:cp}

In this sub-subsection we 
exclusively consider the standard nesting order $a_j=j\,(1\le j \le N)$. 
Let us recall our setting and definitions.
We consider the transfer matrix $T(\lambda)$ \eqref{s:TM}
acting on the sector $V(m)$. See \eqref{k:eq:m-sum}.
The data $m=(m_1,\ldots, m_N) \in {\mathbb Z}^N_{\ge 0}$
specifies the number $m_j$ of the particles of the $j$th kind
and $m_1+\cdots + m_N = L$.
The sector 
$m$ is basic if it has the form 
$m=(m_1,\ldots, m_n,0,\ldots,0)$ with $m_1,\ldots, m_n$ 
all positive for some $1\le n\le L$.
The basic sectors are labeled either with 
${\mathcal M}$ \eqref{k:eq:M} or ${\mathcal S}$ \eqref{k:eq:I}
by the one to one correspondence 
${\mathcal M} \ni m \leftrightarrow {\mathfrak s}
\in {\mathcal S}$ \eqref{k:eq:bi}.
For a basic sector $m$, we write $V(m)$ also as $V_{\mathfrak s}$
as in \eqref{k:eq:hh}. 
$Y_{\mathfrak s}$ is the genuine component of $V_{\mathfrak s}$ 
defined in \eqref{k:eq:xy}.
${\rm Spec}^\circ({\mathfrak s})$ is the multiset of 
genuine eigenvalues of $H$ in $Y_{\mathfrak s}$ \eqref{k:eq:sp0}.

\begin{conjecture}\label{k:con:dep}
Suppose that $p \neq q$ are generic.
Then for any sector $V(m)$ which is not necessarily basic,
there exist $d=\dim V(m)$ distinct polynomials 
$\Lambda_1(\lambda), \ldots, \Lambda_d(\lambda)$ 
in $\lambda$ such that 
$\det(\zeta - T(\lambda))=\prod_{g=1}^d(\zeta - \Lambda_g(\lambda))$.
\end{conjecture}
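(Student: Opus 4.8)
\emph{Plan.} The statement is equivalent to the assertion that the $\zeta$-discriminant $D(\lambda)$ of $\det(\zeta-T(\lambda))$ --- a polynomial in $\lambda$ whose coefficients are polynomials in $p,q$ --- is not identically zero in $\lambda$ for generic $p\neq q$. Indeed, suppose one can exhibit, for generic $(p,q)$, a value $\lambda_0$ for which $T(\lambda_0)$ restricted to $V(m)$ has $d=\dim V(m)$ pairwise distinct eigenvalues. Then $T(\lambda_0)$ has $d$ one-dimensional eigenlines spanning $V(m)$, and since $[T(\lambda),T(\lambda_0)]=0$ every $T(\lambda)$ preserves these lines; writing $T(\lambda)v_g=\Lambda_g(\lambda)v_g$ for a fixed basis vector $v_g$ of the $g$th line, the left-hand side has entries polynomial in $\lambda$, so each $\Lambda_g(\lambda)$ is a polynomial, $\det(\zeta-T(\lambda))=\prod_{g=1}^d(\zeta-\Lambda_g(\lambda))$, and the $\Lambda_g$ are distinct as polynomials because they are already distinct at $\lambda=\lambda_0$. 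So the whole problem reduces to proving that $T(\lambda)$ has simple spectrum on $V(m)$ at generic $(p,q,\lambda)$; and since $D$ is a polynomial in $(p,q,\lambda)$, for each fixed $L$ and $m$ it is enough to produce a single point $(p_0,q_0,\lambda_0)$ at which it does not vanish, the content then being to do this uniformly in $L$ and $m$.

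\emph{Producing a point with simple spectrum.} I would take the totally asymmetric limit $q\to 0$ (equivalently $p\to 0$). There the $R$-matrix \eqref{s:R} and the nested Bethe equations \eqref{s:BAE} degenerate: the couplings between roots at the top level drop out, the equation for the roots $\{\lambda^{(1)}_j\}$ reduces to a single polynomial condition shared by all of them (as in one-species TASEP), and the lower-level equations become combinatorial selection rules. Counting admissible solutions then telescopes: choosing $n_1$ roots from the common top-level set, then $n_2$ of the induced candidates at the next level, and so on, gives $\binom{L}{n_1}\binom{n_1}{n_2}\cdots=L!/(m_{a_1}!\,m_{a_2}!\cdots)=\dim V(m)$ by \eqref{k:eq:ddv}. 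Distinct admissible solutions yield distinct polynomials $\Lambda(\lambda)$ through \eqref{s:DVF} --- hence distinct energies \eqref{s:spectrum} --- once the surviving rate $p$ is kept generic, and each gives a nonzero Bethe vector; so these $d$ eigenvectors force $T(\lambda)$ to be diagonalizable with simple spectrum at $q=0$, generic $p$. The discriminant is then a nonzero polynomial, and the conjecture follows for generic $p\neq q$. A complementary route to the same intermediate statement would bootstrap from the known completeness of the one-species ($N=2$) Bethe ansatz together with the identities \eqref{k:eq:dvx}, \eqref{k:eq:sus} and induction on $n$; but theorem \ref{k:cor:spe} controls $H$, not $T(\lambda)$, so this would pin down the energies only and would need to be supplemented by the higher integrals of motion \eqref{s:IM} to separate states.

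\emph{Main obstacle.} The hard part is exactly the combinatorial completeness used above: rigorously proving, uniformly in $L$ and in the (not necessarily basic) sector, that \eqref{s:BAE} has precisely $\dim V(m)$ admissible solutions with pairwise distinct $\Lambda(\lambda)$ and nonzero Bethe vectors. Even in the TASEP limit the top-level decoupling must be made precise --- the limit is singular, several entries of \eqref{s:R} vanishing, so one should either work at small nonzero $q$ with a careful expansion of $D(\lambda)$ or treat the TASEP transfer matrix directly --- and at the lower nesting levels the description of which root configurations are admissible has no uniformly closed form for general $N$. The excluded point $p=q$ is genuinely different: there $T(\lambda)$ carries the symmetries of the $sl(N)$-invariant Heisenberg chain and hence large forced degeneracies, so $D(\lambda)\equiv 0$ there, consistent with the genericity hypothesis. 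The tables in appendix \ref{sec:appC} are the finite-size instances of precisely the count one would have to establish in general.
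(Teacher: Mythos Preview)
This statement is labeled a \emph{conjecture} in the paper, not a theorem; the authors do not attempt to prove it and only support it with the explicit $L=4$ tables in appendix~\ref{sec:appC}. So there is no paper-proof to compare your proposal against.

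Your reduction is sound and worth recording: if $T(\lambda_0)$ has $d$ simple eigenvalues on $V(m)$ for a single $(p_0,q_0,\lambda_0)$, then $[T(\lambda),T(\lambda_0)]=0$ forces every $T(\lambda)$ to preserve the same $d$ eigenlines, the eigenvalues $\Lambda_g(\lambda)$ are polynomials because the matrix entries are, and they are distinct because they already are at $\lambda_0$. Since the $\zeta$-discriminant of $\det(\zeta-T(\lambda))$ is a polynomial in $(p,q,\lambda)$, one witness point suffices. This part is a genuine simplification of the problem.

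The gap is exactly the one you flag, and it is not a technicality but the whole difficulty. Your witness point is produced via Bethe-ansatz completeness at $q=0$: the claim that \eqref{s:BAE} has precisely $\dim V(m)$ admissible solutions with pairwise distinct $\Lambda(\lambda)$ and nonzero Bethe vectors. The paper isolates essentially this statement as its own open problem, conjecture~\ref{k:con:cp}, and treats conjectures~\ref{k:con:dep} and~\ref{k:con:cp} as a pair of unproved companions. Your TASEP heuristic is plausible at the top level --- with $q=0$ each $z_j=p\lambda_j^{(1)}$ satisfies the common relation $z^L/(1-z)^{n_1}=\mathrm{const}$, whence the $\binom{L}{n_1}$ count --- but the limit is singular (several entries of \eqref{s:R} vanish), nonzero Bethe vectors are not automatic, and at the nested levels there is no closed-form classification of admissible root configurations uniform in $N$. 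So as written you have reduced conjecture~\ref{k:con:dep} to a TASEP instance of conjecture~\ref{k:con:cp}, which is an honest reformulation but not a proof.
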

We call $\Lambda_1(\lambda), \Lambda_2(\lambda), 
\ldots, \Lambda_d(\lambda)$ the {\em eigen-polynomials} of $T(\lambda)$.
(It should not be confused with the characteristic polynomial  
$\det(\zeta - T(\lambda))$.)
A direct consequence of conjecture \ref{k:con:dep} is 
that the transfer matrix $T(\lambda)$ hence the Hamiltonian $H$
are diagonalizable in arbitrary sectors.
(At $p=q$, the diagonalizability still holds but 
$\Lambda_g(\lambda)$'s are no longer distinct due to 
degeneracy caused by $sl(N)$-invariance.)

Now we turn to the completeness of the Bethe ansatz.
In the remainder of this subsection and section \ref{sec:5-2-2}, by 
the Bethe equations we mean 
the {\em polynomial equations} 
on $\{\lambda^{(l)}_j\mid 1 \le l \le N-1, 1 \le j \le n_j\}$ 
obtained from \eqref{s:BAE} 
by multiplying a polynomial in them so that 
the resulting two sides do not share a nontrivial common factor. 
We say that a set of complex numbers $\{\lambda^{(l)}_j\}$ is 
a Bethe root if it satisfies the Bethe equations.
Bethe roots $\{\lambda^{(l)}_j\}$ and 
$\{\lambda^{(l)\prime}_j\}$ are identified if 
$\lambda^{(l)}_j=\lambda^{(l)\prime}_{k_j}$ 
for some permutation $k_1, \ldots, k_{n_l}$ of 
$1,\ldots, n_l$ for each $l$. 
We say that a Bethe root $\{\lambda^{(l)}_j\}$ is {\em regular}
if none of them is equal to $1/p$ 
and two sides of any Bethe equation are nonzero.
Using the same notation as in conjecture \ref{k:con:dep}, we propose
\begin{conjecture}[Completeness]\label{k:con:cp}
$ $Suppose $p\neq q$ are generic.
\begin{itemize}
\item[(1)] 
For any sector, all the eigen-polynomials $\Lambda_g(\lambda)$ 
are expressed in the form \eqref{s:DVF} in 
terms of some Bethe root.
\item[(2)]   
For a basic sector ${\mathfrak s} \in {\mathcal S}$, 
there exist exactly $\dim Y_{\mathfrak s}$ regular Bethe roots and the associated $\dim Y_{\mathfrak s}$ eigen-polynomials
among $\Lambda_1(\lambda), \ldots, \Lambda_d(\lambda)$.
\item[(3)]The $\dim Y_{\mathfrak s}$ eigen-polynomials in (2) 
give ${\rm Spec}^\circ({\mathfrak s})$ 
by the logarithmic derivative \eqref{s:spectrum}.
\end{itemize}
\end{conjecture}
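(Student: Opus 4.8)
The natural strategy is induction on the number of parts of the basic sector, i.e.\ on $n$ for $\mathfrak s=\{s_1<\cdots<s_{n-1}\}\in{\mathcal S}$, the one-species case $n=2$ --- for which \eqref{s:BAE} reduces to the ordinary ASEP Bethe equations, as analysed in \cite{K} --- serving as the base. The whole argument is to be organised around the two structural identities already established in Section~\ref{sec:4}: the spectral decomposition ${\rm Spec}(\mathfrak s)=\bigcup_{\mathfrak u\subseteq\mathfrak s}{\rm Spec}^\circ(\mathfrak u)$ of \eqref{k:eq:sus} and its dimensional shadow $\dim V_\mathfrak s^\ast=\sum_{\mathfrak u\subseteq\mathfrak s}\dim X_\mathfrak u^\ast$ of \eqref{k:eq:dvx}. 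Because of these it suffices to: (i) attach to every \emph{regular} Bethe root of $\mathfrak s$ a genuine eigenvector, that is, a nonzero vector of $Y_\mathfrak s$; and (ii) show that the \emph{non-regular} Bethe roots of $\mathfrak s$ reproduce exactly the Bethe data of the proper sub-sectors $\mathfrak u\subset\mathfrak s$ under the embeddings ${\rm Spec}(\mathfrak u)\hookrightarrow{\rm Spec}(\mathfrak s)$ of Theorem~\ref{k:cor:spe}. Part~(1) then follows by summing over $\mathfrak u\subseteq\mathfrak s$, and parts~(2) and~(3) are the $\mathfrak u=\mathfrak s$ case of~(i).

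For the easy half of~(i): the nested algebraic Bethe ansatz of Section~\ref{sec:5-1} already produces, for each regular Bethe root, a Bethe vector $|\{\lambda^{(1)}\}\rangle$ as in \eqref{s:BV} annihilated by the unwanted terms, hence an eigenvector of $T(\lambda)$ with eigen-polynomial \eqref{s:DVF} and Hamiltonian eigenvalue \eqref{s:spectrum}. One must verify that this vector is nonzero --- standard for generic $p\neq q$ when the $\lambda^{(l)}_j$ are pairwise distinct --- and that it lies in $Y_\mathfrak s=\bigcap_{\mathfrak u\subset\mathfrak s}\ker\overset{\rightarrow}{\varphi}_{\mathfrak u,\mathfrak s}$. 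The latter is where ``regular'' does its work: by Lemma~\ref{k:le:fac} it is enough to see that $\overset{\rightarrow}{\varphi}_{\mathfrak s\setminus\{r\},\mathfrak s}$ kills the Bethe vector, and inspection of \eqref{s:B-op1}--\eqref{s:vacuum} should show that the image of such a contraction is, up to normalisation, the Bethe vector of $\mathfrak s\setminus\{r\}$ only when a top-level block of roots degenerates --- which regularity forbids, so the image vanishes. Dually, the embedded genuine spectra ${\rm Spec}^\circ(\mathfrak u)$ with $\mathfrak u\subset\mathfrak s$ should be matched with the \emph{frozen} Bethe roots in which a block of $\lambda^{(l)}_j$'s is sent to $1/p$ (equivalently $d(\lambda^{(l)}_j)=0$ in \eqref{s:BAE}): this collapses the $l$-th Bethe equation and leaves precisely the Bethe system of a lower-rank sector, and tracking which blocks are frozen should reproduce exactly the downward arrows of the Hasse diagram. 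Establishing that the frozen-root locus of the $n$-part equations is in bijection with $\bigsqcup_{\mathfrak u\subset\mathfrak s}(\text{regular roots of }\mathfrak u)$ is the inductive step, and here the arbitrariness of the nesting order established in Section~\ref{sec:5-1} is genuinely used, since it lets one peel off any chosen species.

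The crux, and the main obstacle, is the exact \emph{count} underlying~(ii)--(iii): that the polynomial Bethe equations \eqref{s:BAE} for a basic sector $\mathfrak s$ have exactly $\dim Y_\mathfrak s$ regular solutions, modulo the permutation identifications. The plan is to bound this number from above by a B\'ezout/resultant estimate for the polynomial system obtained after clearing denominators as prescribed before the conjecture, and to match it from below by exhibiting that many solutions at a tractable specialisation --- for instance $q\to0$ (totally asymmetric limit), where the nested equations decouple level by level and their solutions admit an explicit combinatorial enumeration in the spirit of the one-species analysis --- and then to invoke the constancy of the number of isolated solutions of a flat family of zero-dimensional polynomial systems over the generic locus $p\neq q$, together with a check that no regular solution runs off to infinity or into the non-regular locus as the parameters vary generically. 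A possible alternative is a purely inductive count: for each eigenvalue of the inner transfer matrix $T^{(1)}(\lambda\,|\,\{\lambda^{(1)}\})$, whose completeness is the inductive hypothesis applied to an inhomogeneous $(n-1)$-part problem, the outer roots $\{\lambda^{(1)}_j\}$ satisfy an XXZ-type Bethe equation whose solutions are counted by the one-species analysis --- but the genuine coupling between the outer roots and the inhomogeneities of $T^{(1)}$ is exactly what makes this estimate hard, and is realistically the reason the statement remains a conjecture.

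To assemble the conclusion: distinct regular Bethe roots give distinct eigen-polynomials, which one argues from the nested $TQ$-relations underlying \eqref{s:DVF} --- the eigen-polynomial should determine the whole tower $\{\lambda^{(l)}_j\}$ up to the allowed permutations --- so the $\dim Y_\mathfrak s$ genuine eigen-polynomials are pairwise distinct; combined with Conjecture~\ref{k:con:dep} (distinctness of all eigen-polynomials and diagonalisability) and with \eqref{k:eq:sus}, the regular roots of $\mathfrak s$ must then exhaust ${\rm Spec}^\circ(\mathfrak s)$ while the non-regular ones exhaust $\bigcup_{\mathfrak u\subset\mathfrak s}{\rm Spec}^\circ(\mathfrak u)$, giving~(2) and~(3); ranging over all $\mathfrak u\subseteq\mathfrak s$, and reducing an arbitrary sector to a basic one by the sector equivalence of Section~\ref{a:sec:2}, yields~(1).
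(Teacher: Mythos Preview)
The statement you are attempting to prove is labelled in the paper as a \emph{conjecture} (Conjecture~\ref{k:con:cp}), not a theorem, and the paper offers no proof whatsoever: the only support given is the explicit numerical verification for $N=L=4$, $(p,q)=(2/3,1/3)$ collected in Appendix~\ref{sec:appC}. So there is no ``paper's own proof'' against which to compare your proposal; what you have written is a sketch of a possible attack on an open problem.

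As a sketch, your outline identifies the right structural ingredients --- the decomposition \eqref{k:eq:sus}, the freezing $\lambda^{(l)}_j\to 1/p$ that collapses the Bethe system to a sub-sector (Section~\ref{sec:5-2-2}), and the freedom in nesting order --- but several of the steps you flag with ``should show'' or ``one argues'' are precisely the unproved core. In particular: (a) the claim that a regular Bethe vector lies in $Y_\mathfrak{s}$ is not obvious, and your argument for it (that $\overset{\rightarrow}{\varphi}_{\mathfrak{s}\setminus\{r\},\mathfrak{s}}$ applied to a Bethe vector yields either zero or a lower Bethe vector) would require a genuine computation relating the action of $\varphi$ to the $B$-operators, which the paper does not supply; (b) the exact count of regular solutions is, as you yourself concede, the main obstruction, and neither the B\'ezout bound nor the $q\to 0$ deformation argument you propose is carried out --- indeed controlling what happens to solutions as parameters vary (ruling out escape to infinity or collision with the non-regular locus) is notoriously delicate for nested Bethe systems; (c) the injectivity statement ``distinct regular Bethe roots give distinct eigen-polynomials'' via $TQ$-relations is asserted, not shown. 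You are right that these gaps are exactly why the statement remains a conjecture; your proposal is a reasonable roadmap but not a proof, and it goes well beyond anything the paper attempts.
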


In view of section \ref{subsec:genuine}, 
it is natural to call the (conjectural) 
$\dim Y_{\mathfrak s}$ eigen-polynomials in 
conjecture \ref{k:con:cp} (2) the {\em genuine} eigen-polynomials
of the basic sector ${\mathfrak s}$.
Then conjecture \ref{k:con:cp} (3) is rephrased as claiming  
that the spectrum ${\rm Spec}({\mathfrak s})$ and 
the genuine spectrum 
${\rm Spec}^\circ({\mathfrak s})$ are obtained by the 
logarithmic derivatives of the $\dim V_{\mathfrak s}$ eigen-polynomials 
and the $\dim Y_{\mathfrak s}$ genuine eigen-polynomials, respectively.

Some examples supporting conjectures \ref{k:con:dep} and 
\ref{k:con:cp} are presented in appendix \ref{sec:appC}. 
In conjecture \ref{k:con:cp} (1),  
the Bethe roots corresponding to a non-genuine 
eigen-polynomial are not necessarily unique.
See the 2nd and the 3rd examples from the last in appendix \ref{sec:appC}. 
We expect that the Bethe vectors associated with 
the regular Bethe roots form a basis of $Y_{\mathfrak s}$. 

Theorem \ref{k:cor:spe} and conjecture \ref{k:con:cp} (2) 
bear some analogy with the $sl(N)$-invariant Heisenberg chain ($p=q$).
There, the number of the Bethe roots are conjecturally 
the Kostka numbers \cite{KKR} and 
the spectral embedding is induced by $sl(N)$ actions.
Here, the analogous roles are played by 
$\dim Y_{\mathfrak s}$ and 
$\overset{\leftarrow}{\varphi}_{\mathfrak{s}, \mathfrak{t}}$,
respectively.
%
\subsection{Properties of the spectrum}\label{sec:5-2}
%
Now we derive some consequences of the eigenvalue formula
\eqref{s:DVF}, \eqref{s:BAE} and \eqref{s:spectrum}. 
Sections \ref{KPZuc} and \ref{EWuc} are reviews of known 
derivation for reader's convenience.

\subsubsection{Spectral inclusion property}\label{sec:5-2-2}
First we rederive the spectral inclusion property
(theorem~\ref{k:cor:spe}) in the Bethe ansatz framework.
Consider the sector ${\mathfrak t} \in {\mathcal S}$
where the number of particles 
of the $j$th kind is $m_j \ge 1$ for any $j$.
In the notation \eqref{k:eq:bi}, ${\mathfrak t}$ reads
\begin{align}
\mathfrak{t}&=\{m_1,m_1+m_2,\dots,m_1+m_2+\dots+m_{N-1}\} \nn \\
            &\leftrightarrow 1^{m_1}2^{m_2}\cdots{N^{m_N}}
            =a_1^{m_{a_1}}\cdots a_{N}^{m_{a_N}},  
\label{s:sector-t}
\end{align}
where $m_1+\cdots+m_N=L\ge 2$.
Set
\begin{align}\label{k:aap}
a_{N}=a_{N-1}+1,  \quad 
\lambda_j^{(N-1)}\to
\frac{1}{p} \quad (1\le j \le n_{N-1}).
\end{align}
Due to the relations 
\begin{align}
\overline{n}_{N-1}=0, 
\quad f\left(\frac{1}{p},\lambda \right)
=\frac{p}{q}f\left(\lambda, \frac{1}{p} \right)=1
\,\,
\text{ for $\lambda \ne \frac{1}{p}$},
\end{align}
the following reduction relation holds:
\begin{align}
\Lambda(\lambda)=\overline{\Lambda}(\lambda)+d(\lambda)
           \left(\frac{q}{p}\right)^{\sum_{j=1}^{N-2}(n_{j-1}-n_j)\theta_{j N}+n_{N-1}}.
\label{s:inclusion}
\end{align}
Here $\overline{\Lambda}(\lambda)$ stands for the eigenvalue 
formula of the $(N-2)$-species case in the sector
\begin{align}
\tilde{a}_1^{m_{a_1}}\cdots \tilde{a}_{N-1}^{m_{a_{N-1}}+m_{a_{N}}}
            &= 1^{m_1}\cdots a_{N-1}^{m_{a_{N-1}}+m_{a_{N}}}
              a_N^{m_{a_N+1}}\cdots (N-1)^{m_{N}}                   \nn \\
            &\leftrightarrow \mathfrak{t}\setminus \{m_1+\cdots+m_{a_{N-1}}\}
            =\mathfrak{u},
\end{align}
where
\begin{align}
\tilde{a}_j=\begin{cases}
              a_j  &\text{ for $a_j< a_{N}$} \\
              a_j-1 &\text{ for $a_j \ge  a_{N}$}
            \end{cases}  \quad
(1\le j \le N-1).
\end{align}
If $\{\lambda^{(l)}_j \mid 1 \le j \le n_l,\, 1 \le l \le N-1\}$
is a solution of the Bethe equation in the sector 
${\mathfrak t}$,
so is $\{\lambda^{(l)}_j \mid 1 \le j \le n_l,\, 1 \le l \le N-2\}$
left after the substitution \eqref{k:aap}
in the sector ${\mathfrak u}$.
This is because 
the last Bethe equation in \eqref{s:BAE} 
for $(N-1)$-species case
becomes trivial, or alternatively 
one may say that the resulting $(N-2)$-species 
Bethe equation guarantees that
$\overline{\Lambda}(\lambda)$ is pole-free.

Inserting \eqref{s:inclusion} into \eqref{s:spectrum}, and
using $d(0)=d'(0)=0$, one thus finds the set of 
eigenvalues of the Hamiltonian for the sector 
$\mathfrak{t}$ includes that for the sector $\mathfrak{u}$.
Applying this argument repeatedly, one can see 
\begin{align}
\Spec(\mathfrak{s}) \subset \Spec(\mathfrak{t}) 
\quad \text{ for $\mathfrak{s}\subset \mathfrak{t}$}.
\end{align}
That is theorem~\ref{k:cor:spe}.
Since the solutions of the Bethe equation \eqref{s:BAE} 
depend on the nesting order, the set of solutions characterizing
the above $\Spec(\mathfrak{s})$ are, in general, not 
included in the original set of solutions 
characterizing $\Spec(\mathfrak{t})$.

\subsubsection{Stationary state}\label{sec:stationary-eigenvalue}
%
One of the direct consequences of the above 
property is that the stationary state 
$E=0$ for an arbitrary sector $\mathfrak{t}$ \eqref{s:sector-t} is given 
by setting all the Bethe roots to $1/p$, i.e.  
\begin{align}
\lambda_j^{(l)}\to \frac{1}{p}
\quad (1\le j \le n_l, \,\, 1 \le l \le N).
\end{align} 
It immediately follows that the eigen-polynomial of the stationary 
state is given by
\begin{align}
\Lambda(\lambda)=1+\sum_{k=1}^{N-1}\left(\frac{q}{p}\right)^{n_k}d(\lambda),
\end{align}
where $n_k$ is defined by \eqref{s:m-number}, and we consider
the standard nesting order $a_j=j\,(1\le j \le N)$.

On the other hand, in the framework of the Bethe ansatz,
the calculation of the corresponding eigenstate is rather
cumbersome.
It will be sketched in appendix \ref{sec:stationary}.
\subsubsection{KPZ universality class}\label{KPZuc}
%
From section \ref{sec:5-2-2}, one immediately
sees that the set of spectrum for the sector $\mathfrak{t}$ 
\eqref{s:sector-t} includes  those for 
sectors consisting of single particles:  
\begin{align}
\Spec(\mathfrak{s}_l)\subset \Spec(\mathfrak{t}), \quad 
\mathfrak{s}_l=\{m_1+\cdots+m_l \}\leftrightarrow 
1^{m_1+\cdots+m_l} 2^{m_{l+1}+\cdots+m_{N}}
\quad
(1\le l \le N-1).
\end{align}

As discussed in section \ref{sec:3}, the relaxation spectrum 
characterizing the universality class are 
the eigenvalues in the sector $\mathfrak{s}_l$, 
whose real parts have the second largest value. 
As described below or in section \ref{sec:3},
these eigenvalues form a complex-conjugate pair.
We denote them by $E_l^{\pm}$ hereafter.
The Bethe equation \eqref{s:BAE} describing 
$\Spec(\mathfrak{s}_l)$ reduces to
\begin{align}
(p\lambda_j)^L=(-1)^{n_1-1}\prod_{k=1}^{n_1}
  \frac{1-(p+q)\lambda_j+p q \lambda_j \lambda_k}
       {1-(p+q)\lambda_k+p q \lambda_j \lambda_k},
\quad n_1=L-(m_1+\cdots+m_l),
\label{s:BAE-2nd}
\end{align}
where we set the nesting order as $(a_1,a_2)=(1,2)$.
Since the spectrum of the Hamiltonian
is invariant under the change of the nesting order,
and under the transformation $p \leftrightarrow q$\footnote{
This can be easily seen from the
fact that the Bethe equation \eqref{s:BAE}
is invariant under the transformation $p\leftrightarrow q$
and $(a_1,\dots,a_N)\leftrightarrow(N+1-a_1,\dots,N+1-a_N)$.
},
it is enough to consider the case  $p \ge q$ and $n_1 \le L/2$

For $p\ne q$, $E_l^{\pm}$ characterizes the KPZ universality 
class. The corresponding solutions to \eqref{s:BAE-2nd} are
determined as follows \cite{K}. Changing the variable $\lambda$
as 
\begin{align}
p \lambda_j=\frac{1-x_j}{1-\e^{-2\eta} x_j}, \quad \frac{p}{q}=\e^{-2\eta},
\end{align}
we modify \eqref{s:BAE-2nd} and \eqref{s:spectrum}:
\begin{align}
\left(\frac{1-x_j}{1-\e^{-2\eta} x_j}\right)^L=
(-1)^{n_1-1}\prod_{k=1}^{n_1}\frac{x_j-\e^{-2\eta} x_k}
                                  {x_k-\e^{-2\eta} x_j}, \quad
 E=(p-q)\sum_{j=1}^{n_1}
  \left(\frac{x_j}{1-x_j}-\frac{\e^{-2\eta} x_j}{1-\e^{-2\eta}x_j}\right).
\end{align}
The meaning of $\eta$ in the above will be revealed in 
section~\ref{sec:5-3}.
Taking logarithm of both sides, one has
\begin{align}
\frac{L}{2 \pi \i}\log\frac{1-x_j}{x_j^{\rho}(1-\e^{-2\eta} x_j)}
=I_j-\frac{1}{2\pi \i}\sum_{k=1}^{n_1}\left(
     \log x_k-\log\frac{1-\e^{-2\eta}x_k/x_j}{1-\e^{-2\eta}x_j/x_k}\right),
\end{align}
where $\rho=n_1/L$ and $I_j \in {\mathbb Z}+\frac{1+(-1)^{n_1}}{4}$.
In fact, for sufficiently large $n_1$ and $L$, 
the following choice
\begin{align}
I_j^{-}=\begin{cases}
  -\frac{n_1+1}{2}+j \,\, &\text{ for $1\le j\le n_1-1$}\\
  \frac{n_1+1}{2} \,\,    &\text{ for $j=n_1$}
        \end{cases},
\quad
I_j^{+}=-I_j^{-}
\end{align}
gives the solution corresponding to $E_l^{\pm}$.

By carefully taking into account  finite size corrections, 
the asymptotic form of $E_l^{\pm}$ for $L\gg 1$ is determined as
\begin{align}
E_l^{\pm}=\pm 2|(p-q)(1-2\rho)|\pi \i L^{-1}-2 C |p-q|\sqrt{\rho(1-\rho)} 
             L^{-\frac{3}{2}}+O(L^{-2}),
\end{align}
where $C=6.50918933794\ldots$ \cite{K}.
Thus we conclude that the system for $p\ne q$ belongs to the
KPZ universality class whose dynamical exponent is $z=3/2$. 
%
\subsubsection{EW universality class}\label{EWuc}
%
For $p=q$, the set of eigenvalues of the 
Hamiltonian for an arbitrary sector $\mathfrak{t}$ contains 
the relaxation spectrum corresponding to the ``one-magnon" states.
This can be seen by setting all the roots in \eqref{s:BAE-2nd}
except for $\lambda_1$ to $1/p$. Thus  
$E_l^{\pm}$ are given by the second largest eigenvalues
for this one-magnon states,
and obviously do not  depend on $l$. The Bethe
ansatz equation determining the unknown 
$\lambda_1$ simply reduces to
\begin{align}
(p \lambda_1)^L=1.
\end{align}
Solving this and substituting the solutions 
\begin{align}
p \lambda_1=\exp\left(\pm \frac{2\pi k \i}{L}\right) 
\quad
1\le k \le \frac{L}{2}.
\end{align}
 into
\eqref{s:spectrum}, we have $E=-4 p \sin^2(2\pi k/L)$. Obviously
the case $k=1$ gives  the second largest eigenvalues:
\begin{align}
E_l^{\pm}=-4 p \sin^2\left(\frac{\pi}{L}\right) 
        = -4p \pi^2 L^{-2} +O(L^{-4}),
\end{align}
which gives the EW exponent $z=2$.

%
%
\subsection{Parameterization with difference property}\label{sec:5-3}
%
Here we present the Bethe ansatz results 
in a more conventional parameterization \cite{Sc,PS} 
with the spectral parameter having a difference property.

First we treat the one-species case
($N=2$) whose spectrum is given by \eqref{s:spectrum} via
the Bethe ansatz \eqref{s:BAE-2nd}, where $0 \le n_1 \le L$ and
the nesting order is $(a_1,a_2)=(1,2)$. 
Changing the variables  as 
\begin{align}
p \lambda_j^{(1)} \to \exp(\i p_j+\eta), \quad \frac{q}{p}=\e^{-2\eta},
\end{align}
we transform \eqref{s:spectrum} and \eqref{s:BAE-2nd} to
\begin{align}
&E=2\sqrt{p q}\sum_{j=1}^{n_1} (\cos p_j-\Delta), \nn \\
&\e^{\i L p_j}=(-1)^{n_1-1} \e^{-\eta L} \prod_{k=1}^{n_1}
         \frac{1+\e^{\i(p_j+p_k)}-2 \Delta \e^{\i p_j}}
              {1+\e^{\i(p_j+p_k)}-2 \Delta \e^{\i p_k}}, 
\quad \Delta=\cosh\eta.
\label{s:BAE-momentum}
\end{align} 
This is nothing but the eigenvalue of the Hamiltonian 
for the XXZ chain threaded by a ``magnetic flux" $-\i \eta L$:
\begin{align}
H=\sqrt{p q}\sum_{k\in {\mathbb Z}_L} 
\left\{\e^{\eta} \sigma_k^+ \sigma_{k+1}^-
  +\e^{-\eta} \sigma^+_{k+1}\sigma_k^-+\frac{\Delta}{2}(\sigma_k^z \sigma_{k+1}^z-1)
\right\}.
\label{s:h-flux}
\end{align}
The variable $p_j$ in \eqref{s:BAE-momentum} is called the quasi-momentum 
of the Bethe wave function.
Introducing the transformation (see \cite{T} for example)
\begin{alignat}{2}
&\widetilde{p}(u):=
 \frac{1}{\i}\log \frac{\sh\frac{\eta}{2} u}{\sh\frac{\eta}{2}(u+2)},
\quad &&
p_j=\widetilde{p}(\i u_j^{(1)}-1), \nn \\
&\widetilde{\lambda}(u):=\frac{1}{p}\exp(\i \widetilde{p}(u)+\eta), 
&&\lambda_j^{(1)}=\widetilde{\lambda}(\i u_j^{(1)}-1),
\label{s:CV}
\end{alignat}
we rewrite \eqref{s:BAE-momentum} in terms of the ``rapidities" $u_j^{(1)}$:
\begin{align}
E=2\sqrt{p q}\sum_{j=1}^{n_1}\frac{ \sh^2\eta}{\ch(\eta u_j^{(1)})-\ch \eta}, 
\quad
\phi(u_j^{(1)})=-\e^{-\eta L}
                \frac{q_1(u_j^{(1)}+2\i)}{q_1(u_j^{(1)}-2\i)},
\label{s:BAE-rapidity}
\end{align}
where the two functions $\phi(u)$ and $q_1(u)$ are defined by
\begin{align}
\phi(u)=
\left(
\frac{\sin\frac{\eta}{2}(u+\i)}{\sin\frac{\eta}{2}(u-\i)}
\right)^L,
\quad
q_i(u)=\prod_{j=1}^{n_i}\sin\frac{\eta}{2}(u-u_j^{(i)}).
\end{align}

Applying the momentum-rapidity transformation \eqref{s:CV} to 
the $R$-matrix \eqref{s:R} (we write $R(\lambda(u))=\widetilde{R}(u)$),
we find the non-zero elements of $\widetilde{R}(u)$
can be written as
\begin{align}
\widetilde{R}_{\alpha\alpha}^{\alpha\alpha}(u)=1,
\,\,
\widetilde{R}_{\alpha\beta}^{\alpha\beta}(u)=
    \begin{cases}
  \frac{ \e^{-\eta}\sh \frac{\eta}{2}u}{\sh\frac{\eta}{2}(u+2)}   
&\text{ for $\alpha<\beta$}, \\
  \frac{ \e^{\eta}\sh \frac{\eta}{2}u}{\sh\frac{\eta}{2}(u+2)}     
&\text{ for $\alpha>\beta$},
    \end{cases}
\,\,
\widetilde{R}_{\alpha\beta}^{\beta\alpha}(u)=
    \begin{cases}
     \frac{ \e^{\frac{\eta}{2}u}\sh \eta}{\sh\frac{\eta}{2}(u+2)}
                                             &\text{ for $\alpha<\beta$}, \\
     \frac{ \e^{-\frac{\eta}{2}u}\sh \eta}{\sh\frac{\eta}{2}(u+2)}
                               &\text{ for $\alpha>\beta$}.
    \end{cases}
\label{s:R2}
\end{align}
where $\alpha,\beta\in\{1,2\}$ in the present case. 
Up to the asymmetric factors $\e^{\pm \eta}$ and $\e^{\pm \eta u/2}$,
these are the Boltzmann weights for the well-known six vertex model \cite{Ba}
associated with the quantum group $U_q(\widehat{sl}(2))$.
For the gauge factors, see \cite{PS, OY}.
The $R$-matrix \eqref{s:R2} satisfies the Yang-Baxter equation
\begin{align}
\widetilde{R}_{23}(u_2)\widetilde{R}_{13}(u_1)
\widetilde{R}_{12}(u_1-u_2)=
     \widetilde{R}_{12}(u_1-u_2)\widetilde{R}_{13}(u_1)
\widetilde{R}_{23}(u_2),
\label{s:ybe2}
\end{align}
which possesses the difference property.
The Hamiltonian \eqref{s:h-flux} is expressed as the logarithmic
derivative of the transfer matrix $\widetilde{T}(u)$ (cf. \eqref{s:IM}):
\begin{align}
&\widetilde{T}(u)=\tr_{W_0} [\widetilde{R}_{0 L}(\i u-1)
              \cdots\widetilde{R}_{0 1}(\i u-1)], \nn \\
&H=-\frac{2\i\sqrt{p q}\sh \eta}{\eta} 
      \frac{\partial}{\partial u}\ln \widetilde{T}(u) \biggl|_{u=-\i},
\label{s:baxter2}
\end{align}
where $N=2$ in the present case.
Noting that
\begin{align}
&d(\lambda(\i u-1))=(p\lambda(\i u-1))^L=\e^{\eta L}\phi(u), \nn \\
&f(\lambda(\i u-1),\lambda(\i v-1))=\frac{1}{p \xi(\lambda(\i u-1),\lambda(\i v-1))}=
          \e^{-\eta} \frac{\sin\frac{\eta}{2}(u-v-2\i)}{\sin\frac{\eta}{2}(u-v)},
\end{align}
the eigenvalue of the $\widetilde{T}(u)$ for the 
nesting order  $(a_1,a_2)=(1,2)$ 
is given by
\begin{align}
\widetilde{\Lambda}(u)=\frac{q_1(u+2\i)}{q_1(u)}\e^{-\eta n_1 }+
      \phi(u) \frac{q_{1}(u-2\i)}{q_{1}(u)}\e^{\eta L-\eta n_1},
\label{s:DVF2}
\end{align}
via the Bethe equation \eqref{s:BAE-rapidity}.

The extension to the general $(N-1)$-species case is straightforward.
We just let the local states $\alpha, \beta$ in \eqref{s:R2} range over 
$\alpha,\beta\in\{1,\dots, N \}$.  
Finally, we write down the explicit form
of the eigenvalues for an arbitrary nesting order:
\begin{align}
\widetilde{\Lambda}(u)=&\frac{q_1(u+2\i)}{q_1(u)}\e^{-\eta (n_1-2\overline{n}_1) }  \nn \\
    &+\phi(u)
     \sum_{k=1}^{N-2}\frac{q_{k}(u-2\i)}{q_k(u)}
                     \frac{q_{k+1}(u+2\i)}{q_{k+1}(u)}
                \e^{\eta L-\eta(n_k+n_{k+1}+2\sum_{j=1}^{k}(n_{j-1}-n_j)\theta_{j k+1}-
                           2\overline{n}_{k+1})} \nn \\
    &+\phi(u) \frac{q_{N-1}(u-2\i)}{q_{N-1}(u)}
     \e^{\eta L-\eta( n_{N-1}+2\sum_{j=1}^{N-1}(n_{j-1}-n_j)\theta_{j N}}.
\end{align}
Correspondingly the
Bethe equation \eqref{s:BAE} is transformed to
\begin{align}
& \e^{-2\eta (L-n_1)\theta_{12}}\phi(u_j^{(1)}) 
                 =-\e^{-\eta L+\eta (n_2+2(\overline{n}_1-\overline{n}_2))} 
                                      \frac{q_1(u_j^{(1)}+2\i)}
                                           {q_1(u_j^{(1)}-2 \i)}
                                      \frac{q_2(u_j^{(1)})}{q_2(u_j^{(1)}+2\i)}, \nn \\
& \e^{-2\eta(\sum_{j=1}^{l}(n_{j-1}-n_j)\theta_{j l+1}-
                           \sum_{j=1}^{l-1}(n_{j-1}-n_j)\theta_{jl})}        \nn \\
   & \quad             =-\e^{\eta(n_{l+1}-n_{l-1}+2(\overline{n}_{l}-
                                                     \overline{n}_{l+1}))} 
                             \frac{q_l(u_j^{(l)}+2\i)}
                                     {q_l(u_j^{(l)}-2\i)}
                                \frac{q_{l-1}(u_j^{(l)}-2\i)}
                                     {q_{l-1}(u_j^{(l)})}
                                \frac{q_{l+1}(u_j^{(l)})}
                                     {q_{l+1}(u_j^{(l)}+2\i)} \quad 
                                          (2\le l\le N-2),\nn \\
& \e^{-2\eta(\sum_{j=1}^{N-1}(n_{j-1}-n_j)\theta_{j N}-
                           \sum_{j=1}^{N-2}(n_{j-1}-n_j)\theta_{jN-1})}  \nn \\
& \qquad
         =-\e^{\eta (-n_{N-2}+2\overline{n}_{N-1})}          \frac{q_{N-1}(u_j^{(N-1)}+2\i)}
                                     {q_{N-1}(u_j^{(N-1)}-2\i)}
                                \frac{q_{N-2}(u_j^{(N-1)}-2\i)}
                                     {q_{N-2}(u_j^{(N-1)})}.
\end{align}
The spectrum of the Hamiltonian $H$ is then determined by
\begin{align}
E=2\sqrt{p q}\sum_{j=1}^{n_1} \frac{\sh^2\eta}{\ch(\eta u_j^{(1)})-\ch \eta}.
\end{align}

\section*{Acknowledgements}
This work is partially supported by Grants-in-Aid for 
Scientific Research No.~19$\cdot$7744,
(B) No.~18340112 and (C) No.~19540393 from
JSPS.
The author C. A. is grateful 
to Professor H. Hinrichsen,
Professor J. Lebowitz,
Professor A. Schadschneider
and Professor E. R. Speer
for fruitful discussion.

\appendix
\def\thesection{\Alph{section}}
\def\reference{\relax\refpar}

\section{\mathversion{bold}Proof of 
theorem \ref{k:th:dd}}\label{app:dd}

\subsection{M\"obius inversion}
The power set ${\mathcal S}$ 
(\ref{k:eq:I}) is equipped with the natural poset
structure whose partial order is just $\subseteq$.
In this appendix 
the partial order in ${\mathcal M}$  \eqref{k:eq:M}
induced via (\ref{k:eq:bi}) 
will be denoted by $\preceq$.
Thus one has 
$(4) \preceq (1,3) \preceq (1,2,1) 
\preceq (1,1,1,1)$, etc. for $L=4$.
The description of 
$\preceq$ in ${\mathcal M}$ is pretty simple.
In fact, those 
$m'$ satisfying $m' \preceq m=(m_1,\ldots, m_n)$ 
are obtained  from $m$ 
by successive contractions
\begin{align}\label{k:eq:cont}
(\ldots, m_j, m_{j+1},\ldots) \mapsto (\ldots, m_j+m_{j+1},\ldots).
\end{align}
 
Let $\zeta = \bigl(\zeta(\mathfrak{s}',\mathfrak{s})
\bigr)_{\mathfrak{s}', \mathfrak{s} \in {\mathcal S}}$ be the 
$\vert {\mathcal S} \vert \times \vert {\mathcal S} \vert$ 
matrix defined by
\begin{align}
\zeta(\mathfrak{s}', \mathfrak{s}) 
= \begin{cases}1 & \mathfrak{s}' \subseteq \mathfrak{s},\\
0 & \text{otherwise}.
\end{cases}
\end{align}
Since $\zeta$ is a triangular matrix whose diagonal elements are
all 1, it has the inverse 
$\mu = \bigl(\mu(\mathfrak{s}',\mathfrak{s})
\bigr)_{\mathfrak{s}', \mathfrak{s} \in {\mathcal S}}$.
$\mu$ is called the M\"obius function of ${\mathcal S}$, and 
is again a triangular (i.e., $\mu(\mathfrak{s}',\mathfrak{s}) = 0$ 
unless $\mathfrak{s}' \subseteq \mathfrak{s}$) 
integer matrix.
 
Suppose $f, g: {\mathcal S} \rightarrow {\mathbb C}$
are the functions on ${\mathcal S}$.
By the definition, the two relations
\begin{align}\label{k:eq:fg}
f(\mathfrak{s}) = \sum_{\mathfrak{s}' \subseteq \mathfrak{s}}
g(\mathfrak{s}'),\quad
g(\mathfrak{s}) = \sum_{\mathfrak{s}' \subseteq 
{\mathcal \mathfrak{s}}}
\mu(\mathfrak{s}',\mathfrak{s})f(\mathfrak{s}')
\quad (\mathfrak{s} \in {\mathcal S})
\end{align}
are equivalent, where the latter is the M\"obius inversion 
formula.  
In a matrix notation, they are just 
$f = g\zeta$ and $g = f \mu$.
In particular the sum involving $\mu(\mathfrak{s}',\mathfrak{s})$ 
can be restricted to 
$\mathfrak{s}' \subseteq \mathfrak{s}$.
The M\"obius function contains all the information on the
poset structure.
In our case of the power set ${\mathcal S}$, it is a classical result 
(the inclusion-exclusion principle) that 
\begin{align}\label{k:eq:mu}
\mu(\mathfrak{s}',\mathfrak{s}) 
= (-1)^{\sharp \mathfrak{s}' - \sharp \mathfrak{s}},
\end{align}
where $\sharp \mathfrak{s}$ denotes the cardinality of $\mathfrak{s}$. 

The M\"obius inversion formula (\ref{k:eq:fg}) and (\ref{k:eq:mu}) on 
${\mathcal S}$ can be translated into those on ${\mathcal M}$ via 
the bijective correspondence (\ref{k:eq:bi}). The result reads as follows:
\begin{align}
f(m) &= \sum_{m' \preceq m}g(m')\quad (m \in {\mathcal M})
,\label{k:eq:fgm}\\
g(m_1,\ldots, m_n) &
= \sum
(-1)^{n-l}f(i_1,\ldots, i_l)\quad 
((m_1,\ldots, m_n) \in {\mathcal M}),
\label{k:eq:gfm}
\end{align}
where the sum in (\ref{k:eq:gfm}) extends over 
$(i_1,\ldots, i_l) \in {\mathcal M}$ such that 
$(i_1,\ldots, i_l) \preceq (m_1,\ldots, m_n)$.
(We have written $g(m)$ with 
$m=(m_1,\ldots, m_n)$ as $g(m_1,\ldots, m_n)$
rather than $g((m_1,\ldots, m_n))$, and similarly for $f$.)

For $m=(m_1,\ldots, m_n) \in {\mathcal M}$ corresponding to 
$\mathfrak{s} \in {\mathcal S}$, we let  
$\overline{m}$ denote the element in ${\mathcal M}$
that corresponds to the complement 
$\overline{\mathfrak{s}} =\Omega\setminus \mathfrak{s} 
\in {\mathcal S}$.
Thus for $L=4$, $\overline{\phantom{a}}$ acts as the involution 
\begin{align*}
\emptyset \rightleftarrows \{1,2,3\},\;\;
\{1\} \rightleftarrows \{2,3\},\;\;
\{2\} \rightleftarrows \{1,3\},\;\;
\{3\} \rightleftarrows \{1,2\}
\end{align*} 
on ${\mathcal S}$, and similarly 
\begin{align*}
(4) \rightleftarrows (1,1,1,1),\;\;
(1,3) \rightleftarrows (2,1,1),\;\;
(2,2) \rightleftarrows (1,2,1), \;\;
(3,1) \rightleftarrows (1,1,2)
\end{align*}   
on ${\mathcal M}$.
It is an easy exercise to check
\begin{align}\label{k:eq:di}
\overline{(m_1,\ldots, m_n)} = 
(1^{m_1-1}21^{m_2-2}21^{m_3-2}\ldots 
21^{m_{n-1}-2}21^{m_n-1})
\in {\mathcal M},
\end{align}
where ``$a1^{-1}b$" should be understood as $a+b-1$.

\subsection{Theorem \ref{k:th:dd}}

We keep assuming the one to one correspondence 
(\ref{k:eq:bi}) of the labels $m \in {\mathcal M}$ 
and $\mathfrak{s} \in {\mathcal S}$ and use the former. 
In view of (\ref{k:eq:ddv}) we have
$\dim V_\mathfrak{s}^\ast = f(m)$ by the choice:
\begin{align}\label{k:eq:fdef}
f(m) = \mn{L}{m_1,\ldots, m_n}
:=\frac{L!}{m_1!\cdots m_n!}
\quad \text{for}\;\;
m = (m_1,\ldots, m_n) \in {\mathcal M}.
\end{align}
Denote the $g(m)$ determined from this and (\ref{k:eq:gfm}) by
\begin{align}\label{k:eq:gdef}
g(m) = \mc{L}{m_1,\ldots, m_n}.
\end{align}
Namely, we define 
\begin{align}\label{k:eq:gdef2}
\mc{L}{m_1,\ldots, m_n} = \sum (-1)^{n-l}
\mn{L}{i_1,\ldots, i_l}\quad \text{for }
(m_1,\ldots, m_n) \in {\mathcal M},
\end{align}
where the sum runs over 
$(i_1,\ldots, i_l) \in {\mathcal M}$ such that 
$(i_1,\ldots, i_l) \preceq (m_1,\ldots, m_n)$.
{}From (\ref{k:eq:ddv}), 
(\ref{k:eq:dvx}), (\ref{k:eq:fgm}) and (\ref{k:eq:fdef}), 
we find $\dim X_\mathfrak{s}^\ast = g(m)$.
The function $g(m)$ has the invariance
\begin{align*}
\mc{L}{m_1,\ldots, m_n} = 
\mc{L}{m_n,\ldots, m_1},
\end{align*}
but it is {\em not} symmetric under general
permutations of $m_1,\ldots, m_n$ in contrast to $f(m)$.

Now theorem \ref{k:th:dd} is translated into
\begin{theorem}\label{k:th:add}
\begin{align*}
g(m) = g(\overline{m}) \;\;\text{for any } m \in {\mathcal M}.
\end{align*}
\end{theorem}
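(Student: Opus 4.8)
The plan is to reinterpret $g(m)$ as counting permutations by their descent set, which makes the symmetry $g(m)=g(\overline m)$ immediate. For $\pi\in S_L$ write $\mathrm{Des}(\pi)=\{i\in\Omega\mid \pi(i)>\pi(i+1)\}$. The first step is the classical observation that, for $m'=(i_1,\dots,i_l)\in{\mathcal M}$ corresponding to $\mathfrak{s}'=\{s'_1<\dots<s'_{l-1}\}\in{\mathcal S}$ via \eqref{k:eq:bi},
\begin{align*}
f(m')=\mn{L}{i_1,\dots,i_l}=\sharp\{\pi\in S_L\mid \mathrm{Des}(\pi)\subseteq\mathfrak{s}'\}.
\end{align*}
Indeed, a permutation with descent set contained in $\mathfrak{s}'$ is exactly one that is increasing on each of the consecutive blocks of sizes $i_1,\dots,i_l$ (i.e.\ on $[1,s'_1],[s'_1+1,s'_2],\dots$), and such a permutation is determined by how the value set $\{1,\dots,L\}$ is distributed among these blocks, giving $\mn{L}{i_1,\dots,i_l}$ choices.

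Next I would feed this into the M\"obius inversion \eqref{k:eq:gfm}--\eqref{k:eq:gdef2}. Since the coarsenings $m'\preceq m$ correspond under \eqref{k:eq:bi} exactly to the divider subsets $\mathfrak{s}'\subseteq\mathfrak{s}$ (cf.\ \eqref{k:eq:cont}), and $\mu(\mathfrak{s}',\mathfrak{s})=(-1)^{\sharp\mathfrak{s}'-\sharp\mathfrak{s}}$ by \eqref{k:eq:mu}, the formula \eqref{k:eq:gdef2} becomes an inclusion--exclusion identity over the Boolean lattice ${\mathcal S}$:
\begin{align*}
g(m)=\sum_{\mathfrak{s}'\subseteq\mathfrak{s}}(-1)^{\sharp\mathfrak{s}-\sharp\mathfrak{s}'}\,\sharp\{\pi\in S_L\mid \mathrm{Des}(\pi)\subseteq\mathfrak{s}'\}=\sharp\{\pi\in S_L\mid \mathrm{Des}(\pi)=\mathfrak{s}\}.
\end{align*}
Thus $g(m)$ is the number of permutations of $\{1,\dots,L\}$ with descent set precisely $\mathfrak{s}$; applying the same computation to $\overline m\leftrightarrow\overline{\mathfrak{s}}=\Omega\setminus\mathfrak{s}$ yields $g(\overline m)=\sharp\{\pi\in S_L\mid \mathrm{Des}(\pi)=\overline{\mathfrak{s}}\}$. (Note that the explicit form \eqref{k:eq:di} of $\overline m$ is not needed for this route.)

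Finally, the involution $\pi\mapsto\widehat\pi$ of $S_L$ given by $\widehat\pi(i)=L+1-\pi(i)$ satisfies $\widehat\pi(i)>\widehat\pi(i+1)\iff\pi(i)<\pi(i+1)$, hence $\mathrm{Des}(\widehat\pi)=\Omega\setminus\mathrm{Des}(\pi)$; it therefore restricts to a bijection between $\{\pi\mid\mathrm{Des}(\pi)=\mathfrak{s}\}$ and $\{\pi\mid\mathrm{Des}(\pi)=\overline{\mathfrak{s}}\}$, giving $g(m)=g(\overline m)$ and hence Theorems \ref{k:th:add} and \ref{k:th:dd}. I do not expect a genuine obstacle: the only points requiring care are the bookkeeping that $\preceq$ on ${\mathcal M}$ matches $\subseteq$ on divider sets (so that \eqref{k:eq:gdef2} really is inclusion--exclusion over a Boolean lattice) and the check that the block sizes attached to $\mathfrak{s}'$ are precisely the parts of $m'$.
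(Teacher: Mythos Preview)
Your argument is correct and takes a genuinely different route from the paper. You identify $g(m)$ with the number of permutations of $\{1,\dots,L\}$ having descent set exactly $\mathfrak{s}$ (the classical ``ribbon number''), and then exploit the involution $\pi\mapsto(i\mapsto L{+}1{-}\pi(i))$, which complements the descent set, to obtain $g(m)=g(\overline m)$ in one stroke. By contrast, the paper proceeds purely algebraically: it establishes the base case $\mc{L}{L}=\mc{L}{1^L}$ via a generating-function computation (Lemma~\ref{k:le:mae}), proves a two-term decomposition $\mc{L}{m_1,m_2,\dots,m_n}=\binom{L}{m_1}\mc{L-m_1}{m_2,\dots,m_n}-\mc{L}{m_1+m_2,\dots,m_n}$ (Lemma~\ref{k:le:decom}), derives two further recursion identities (Lemmas~\ref{k:le:rec} and~\ref{k:le:owari}), and finally runs a double induction on $(L,n)$ using the explicit form \eqref{k:eq:di} of $\overline m$.

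Your route is considerably shorter and more conceptual: it explains \emph{why} the duality holds by revealing that $g(m)$ is a descent-class cardinality, a quantity whose symmetry under complementation is immediate. The paper's route, while longer, is entirely self-contained within multinomial arithmetic and does not invoke any outside combinatorial interpretation; it also yields as byproducts the recursion of Lemma~\ref{k:le:decom}, which may be of independent use. The only care points you flag---that the partial order $\preceq$ on $\mathcal M$ matches inclusion of divider sets, and that the block sizes are the parts of $m'$---are exactly the content of \eqref{k:eq:bi} and \eqref{k:eq:cont}, so no obstacle remains.
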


\begin{example}\label{k:ex:dd1}
Take $m=(L)$ hence $\ol{m}=(1^L)$. 
Then $g(m)=\mc{L}{L}=\mn{L}{L}=1$.
On the other hand, $g(\ol{m})$ with $L=3$ and $4$ 
are calculated as
\begin{align*}
&\left(\!\!\begin{array}{c} 3 \\  \! 1,1,1\! \end{array}\!\!\right)
-\left(\!\!\begin{array}{c} 3 \\  \! 2,1\! \end{array}\!\!\right)
-\left(\!\!\begin{array}{c} 3 \\  \! 1,2 \! \end{array}\!\!\right)
+\left(\!\!\begin{array}{c} 3 \\  \! 3\! \end{array}\!\!\right) = 1,\\
&\left(\!\!\begin{array}{c} 4 \\  \! 1,1,1,1\! \end{array}\!\!\right)
-\left(\!\!\begin{array}{c} 4 \\  \! 2,1,1\! \end{array}\!\!\right)
-\left(\!\!\begin{array}{c} 4 \\  \! 1,2,1 \! \end{array}\!\!\right)
-\left(\!\!\begin{array}{c} 4 \\  \! 1,1,2 \! \end{array}\!\!\right)
+\left(\!\!\begin{array}{c} 4 \\  \! 3,1\! \end{array}\!\!\right)
+\left(\!\!\begin{array}{c} 4 \\  \! 2,2\! \end{array}\!\!\right)
+\left(\!\!\begin{array}{c} 4 \\  \! 1,3\! \end{array}\!\!\right)
-\left(\!\!\begin{array}{c} 4 \\  \! 4\! \end{array}\!\!\right)=1.
\end{align*}
\end{example}

\begin{example}
Take $L=5$ and 
$m=(1,2,1,1)$ hence $\overline{m} = (2,3)$. 
Then one has 
\begin{align}
\begin{split}
\mc{5}{1,2,1,1}&=\mn{5}{1,2,1,1} 
- \mn{5}{1,3,1}-\mn{5}{1,2,2}+\mn{5}{1,4}\\
&-\mn{5}{3,1,1} + \mn{5}{3,2} + \mn{5}{4,1} - \mn{5}{5},
\end{split}\label{k:eq:wake}\\
\mc{5}{2,3}&=\mn{5}{2,3}-\mn{5}{5}.
\end{align}
The both of these sums yield $9$.
\end{example}

\subsection{Proof}
We first generalize example \ref{k:ex:dd1} to 
\begin{lemma}\label{k:le:mae}
\begin{align*}
\mc{L}{L} = \mc{L}{1^{L}}\;\;\text{for any} \;\;L \ge 1.
\end{align*}
\end{lemma}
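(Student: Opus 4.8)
The lemma compares the value of $g$ at the two extreme sectors of $({\mathcal M},\preceq)$: the minimal element $(L)$, which corresponds to $\emptyset\in{\mathcal S}$, and the maximal element $(1^L)$, corresponding to $\Omega$. For the minimal side there is nothing to do: no composition other than $(L)$ itself satisfies $(i_1,\dots,i_l)\preceq(L)$, so the sum \eqref{k:eq:gdef2} reduces to its single term and $\mc{L}{L}=\mn{L}{L}=1$. Hence the whole task is to prove $\mc{L}{1^L}=1$.

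Since $(1^L)$ is the maximal element of ${\mathcal M}$, \emph{every} composition of $L$ precedes it, so by \eqref{k:eq:gdef2}
\[
a_L:=\mc{L}{1^L}=\sum(-1)^{L-l}\mn{L}{i_1,\dots,i_l},
\]
the sum ranging over all compositions $(i_1,\dots,i_l)$ of $L$, with $l$ the number of parts. The plan is to prove $a_L=1$ by induction on $L$, the base case being $a_1=\mn{1}{1}=1$. I would split the sum according to the value $k=i_1$ of the first part: the term $k=L$ (composition $(L)$, $l=1$) contributes $(-1)^{L-1}$; for $1\le k\le L-1$ the tail $(i_2,\dots,i_l)$ ranges over all compositions of $L-k$, and using $\mn{L}{i_1,\dots,i_l}=\binom{L}{k}\mn{L-k}{i_2,\dots,i_l}$ together with $(-1)^{L-l}=(-1)^{k-1}(-1)^{(L-k)-(l-1)}$, this block equals $(-1)^{k-1}\binom{L}{k}a_{L-k}$. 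By the induction hypothesis $a_{L-k}=1$, so $a_L=\sum_{k=1}^{L-1}(-1)^{k-1}\binom{L}{k}+(-1)^{L-1}$.

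What remains is the elementary binomial identity: from $\sum_{k=0}^{L}(-1)^k\binom{L}{k}=0$ one gets $\sum_{k=1}^{L-1}(-1)^{k-1}\binom{L}{k}=1+(-1)^{L}$, whence $a_L=1+(-1)^L+(-1)^{L-1}=1$, closing the induction. The only point requiring care is the sign bookkeeping in the recursion; there is no genuine difficulty. (An alternative, non-inductive route: $\sum\mn{L}{i_1,\dots,i_l}$ over compositions of $L$ into $l$ parts is the number of surjections $\{1,\dots,L\}\twoheadrightarrow\{1,\dots,l\}$, and $\sum_{l\ge1}(-1)^{L-l}$ times this count equals $1$ upon extracting the coefficient of $x^L/L!$ from $\sum_{l\ge0}(-1)^l(\e^{x}-1)^l=\e^{-x}$.)
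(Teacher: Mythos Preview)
Your proof is correct. Your main argument proceeds by induction on $L$, splitting the sum over compositions according to the first part $k=i_1$; this reduces $a_L$ to the alternating binomial sum $\sum_{k=1}^{L-1}(-1)^{k-1}\binom{L}{k}+(-1)^{L-1}=1$. The paper instead argues directly via exponential generating functions: grouping by the number of parts $l$, it computes $\sum_{L\ge l}\frac{A_{L,l}}{L!}z^L=(1-\e^{-z})^l$ and sums over $l$ to obtain $\e^z-1$, from which $a_L=1$ follows at once. Your inductive route is slightly more elementary and self-contained; the paper's generating-function route is shorter and avoids the recursion entirely. Note that the parenthetical alternative you sketch at the end---interpreting the inner sum as counting surjections and summing the EGFs---is essentially the paper's method, though your stated identity $\sum_{l\ge0}(-1)^l(\e^x-1)^l=\e^{-x}$ is only formal as a power series and you have suppressed the factor $(-1)^L$ needed to pass from $(-1)^l$ to $(-1)^{L-l}$ (it cancels against the coefficient of $x^L/L!$ in $\e^{-x}$, so the conclusion is fine).
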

\begin{proof}
The left hand side is $1$.
{}From (\ref{k:eq:gdef2}), the right hand side is given by
\begin{align*}
\mc{L}{1^L} = \sum_{l=1}^LA_{L,l},\qquad
A_{L,l} = (-1)^{L-l}\sum_{i_1,\ldots, i_l}
\mn{L}{i_1,\ldots, i_l},
\end{align*}
where the latter sum 
extends over $i_1,\ldots, i_l \in {\mathbb Z}_{\ge 1}$ 
such that $i_1+\cdots + i_l = L$.
Thus we have the following evaluation of the generating functions:
\begin{align*}
\sum_{L\ge l}\frac{A_{L,l}}{L!}z^L &
=(-1)^{l}\sum_{i_1,\ldots, i_l}
\frac{(-z)^{i_1}\cdots(-z)^{i_l}}{i_1!\cdots i_l!}= (1-\e^{-z})^{l},\\
\sum_{L\ge 1}\frac{z^L}{L!}\sum_{l=1}^LA_{L,l} &= 
\sum_{l\ge 1}\sum_{L\ge l}\frac{A_{L,l}}{L!}z^L
= \sum_{l\ge 1}(1-\e^{-z})^l = \e^z-1.
\end{align*}
The last relation tells that $\sum_{l=1}^LA_{L,l}=1$ for $L \ge 1$.
\end{proof}

In (\ref{k:eq:wake}), note that the first line of the right hand side equals 
$\mn{5}{1,4}\mc{4}{2,1,1}$, whereas the second line is nothing but
$-\mc{5}{3,1,1}$, therefore one has
\begin{align*}
\mc{5}{1,2,1,1} = \mn{5}{1,4}\mc{4}{2,1,1} -\mc{5}{3,1,1}.
\end{align*}
The following lemma shows that such a decomposition holds generally.
\begin{lemma}\label{k:le:decom}
\begin{align}\label{k:eq:dec}
\mc{L}{m_1,m_2,\ldots, m_n} = 
\binom{L}{m_1}\mc{L-m_1}{m_2,\ldots, m_n} 
-\mc{L}{m_1+m_2,m_3,\ldots, m_n},
\end{align} 
where $\binom{L}{m_1}=\binom{L}{m_1,L-m_1}$ 
denotes the binomial coefficient.
\end{lemma}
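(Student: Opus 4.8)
The plan is to prove \eqref{k:eq:dec} directly from the definition \eqref{k:eq:gdef2} by splitting the defining sum according to whether the parts $m_1$ and $m_2$ get merged or not. Recall from the description of $\preceq$ via \eqref{k:eq:cont} that a composition $(i_1,\ldots,i_l)\in{\mathcal M}$ satisfies $(i_1,\ldots,i_l)\preceq(m_1,\ldots,m_n)$ precisely when $(i_1,\ldots,i_l)$ is obtained by cutting the sequence $(m_1,\ldots,m_n)$ into $l$ consecutive nonempty blocks and replacing each block by the sum of its entries. (I assume $n\ge 2$, so that $m_1<L$ and the right-hand side is well defined; for $n=1$ there is nothing to prove.)

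First I would isolate the terms in $\mc{L}{m_1,\ldots,m_n}=\sum(-1)^{n-l}\mn{L}{i_1,\ldots,i_l}$ for which the first block is exactly $\{m_1\}$, i.e.\ $i_1=m_1$. For these, $(i_2,\ldots,i_l)$ runs exactly over the compositions with $(i_2,\ldots,i_l)\preceq(m_2,\ldots,m_n)$, and the sequence $(m_2,\ldots,m_n)\in{\mathcal M}$ has $n-1$ parts and total $L-m_1$. Using the elementary factorization $\mn{L}{m_1,i_2,\ldots,i_l}=\binom{L}{m_1}\mn{L-m_1}{i_2,\ldots,i_l}$ and rewriting the sign as $(-1)^{n-l}=(-1)^{(n-1)-(l-1)}$, this partial sum equals $\binom{L}{m_1}\mc{L-m_1}{m_2,\ldots,m_n}$ by \eqref{k:eq:gdef2} applied on the ring of length $L-m_1$.

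Next I would handle the complementary terms, those for which $m_1$ and $m_2$ lie in one common block. Merging $m_1$ and $m_2$ first sets up a bijection between such partitions of $(m_1,\ldots,m_n)$ and arbitrary partitions of $(m_1+m_2,m_3,\ldots,m_n)$, under which the value $\mn{L}{i_1,\ldots,i_l}$ is unchanged. Since $(m_1+m_2,m_3,\ldots,m_n)$ has $n-1$ parts, the sign becomes $(-1)^{n-l}=-(-1)^{(n-1)-l}$, so this partial sum equals $-\mc{L}{m_1+m_2,m_3,\ldots,m_n}$. Adding the two pieces yields \eqref{k:eq:dec}.

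The argument is essentially bookkeeping, so I do not anticipate a serious obstacle; the only point that needs care is tracking the number of parts — hence the sign $(-1)^{n-l}$ — when passing from $(m_1,\ldots,m_n)$ to the shorter sequences $(m_2,\ldots,m_n)$ and $(m_1+m_2,m_3,\ldots,m_n)$, together with checking the degenerate cases $n=2$ separately so that $\mc{L-m_1}{m_2}=1$ and $\mc{L}{m_1+m_2}=1$ reduce correctly.
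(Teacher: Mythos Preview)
Your proposal is correct and follows essentially the same approach as the paper: split the defining sum \eqref{k:eq:gdef2} according to whether $m_1$ is contracted with $m_2$ or not, obtaining the two terms on the right-hand side. Your write-up is in fact more explicit than the paper's about tracking the sign $(-1)^{n-l}$ and the bijections involved, but the underlying idea is identical.
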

\begin{proof}
In making $(i_1,\ldots, i_l)$ from $(m_1,\ldots, m_n)$ by 
the successive contractions (\ref{k:eq:cont}),
classify the summands in (\ref{k:eq:gdef2}) according to whether 
$m_1$ has been contracted to $m_2$ or not.  
If it is not contracted, then $i_1=m_1$ always holds and the 
corresponding summands yield 
$\mn{L}{m_1}\mc{L-m_1}{m_2,\ldots, m_n}$.
The other summands correspond to the contracted case 
$i_1\ge m_1+m_2$, whose contribution is  
$-\mc{L}{m_1+m_2,m_3,\ldots, m_n}$.
\end{proof}

\begin{lemma}\label{k:le:rec}
\begin{align}\label{k:eq:rec}
\mc{L}{1^t, \pi} = \sum_{s=0}^a(-1)^{s+a}\mn{L}{s}
\mc{L-s}{a+1-s,1^{t-a-1},\pi}\quad (1 \le a \le t-1),
\end{align}
where $\pi$ is an arbitrary array of positive integers 
summing up to $L-t$.
\end{lemma}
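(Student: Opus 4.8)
The plan is to prove \eqref{k:eq:rec} by induction on $a$, using Lemma \ref{k:le:decom} as the sole structural input, supplemented by one elementary binomial identity. Recall $\mn{L}{s}=\binom{L}{s}$, and note that $1\le a\le t-1$ forces $t\ge 2$.

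\emph{Base case $a=1$.} The right-hand side of \eqref{k:eq:rec} equals $\mn{L}{1}\mc{L-1}{1^{t-1},\pi}-\mc{L}{2,1^{t-2},\pi}$, which is precisely Lemma \ref{k:le:decom} applied to $\mc{L}{1^t,\pi}$ with $(m_1,m_2)=(1,1)$; this is legitimate since $t\ge 2$.

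\emph{Inductive step.} Assume \eqref{k:eq:rec} holds for some $a$ with $1\le a\le t-2$; I will derive it for $a+1$. To each summand $\mc{L-s}{a+1-s,1^{t-a-1},\pi}$ on the right-hand side I apply Lemma \ref{k:le:decom}, peeling off the leading entry $a+1-s$ against the adjacent entry $1$ (legitimate because $a+1\le t-1$ makes the block $1^{t-a-1}$ nonempty, so a genuine $m_2=1$ is present):
\[
\mc{L-s}{a+1-s,1^{t-a-1},\pi}=\binom{L-s}{a+1-s}\mc{L-a-1}{1^{t-a-1},\pi}-\mc{L-s}{a+2-s,1^{t-a-2},\pi}.
\]
Substituting and regrouping, the terms $-\mc{L-s}{a+2-s,1^{t-a-2},\pi}$, once the extra sign is absorbed into $(-1)^{s+a}$, are exactly the $s=0,1,\dots,a$ summands of \eqref{k:eq:rec} with $a$ replaced by $a+1$. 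The remaining terms contribute $\bigl(\sum_{s=0}^{a}(-1)^{s+a}\mn{L}{s}\binom{L-s}{a+1-s}\bigr)\mc{L-a-1}{1^{t-a-1},\pi}$, so it remains only to show the scalar coefficient equals $\mn{L}{a+1}$, i.e.\ that this term is precisely the missing $s=a+1$ summand of the target identity.

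\emph{The binomial identity.} For this I use $\binom{L}{s}\binom{L-s}{a+1-s}=\binom{L}{a+1}\binom{a+1}{s}$ to factor $\binom{L}{a+1}$ out of the sum, followed by $\sum_{s=0}^{a}(-1)^{s}\binom{a+1}{s}=(-1)^{a}$ (immediate from $\sum_{s=0}^{a+1}(-1)^{s}\binom{a+1}{s}=0$). The two signs $(-1)^a$ cancel, giving $\sum_{s=0}^{a}(-1)^{s+a}\binom{L}{s}\binom{L-s}{a+1-s}=\binom{L}{a+1}=\mn{L}{a+1}$, which closes the induction. The argument is little more than careful bookkeeping of indices and signs; the only point that needs attention is keeping the block $1^{t-a-1}$ nonempty so that Lemma \ref{k:le:decom} can be invoked with a trailing $1$ as $m_2$, which is guaranteed exactly by the hypothesis $a\le t-1$ applied to the target exponent $a+1$. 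I do not foresee any substantive obstacle beyond this.
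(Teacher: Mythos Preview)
Your proof is correct and follows essentially the same route as the paper's: induction on $a$, base case via Lemma~\ref{k:le:decom}, inductive step by applying Lemma~\ref{k:le:decom} to each summand $\mc{L-s}{a+1-s,1^{t-a-1},\pi}$, and the same binomial identity $\sum_{s=0}^{a}(-1)^{s+a}\binom{L}{s}\binom{L-s}{a+1-s}=\binom{L}{a+1}$ (which the paper records as \eqref{k:eq:wa}) to collapse the residual sum into the missing $s=a+1$ term. Your explicit check that $a\le t-2$ keeps the block $1^{t-a-1}$ nonempty is a nice point of care that the paper leaves implicit.
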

\begin{proof}
We employ the induction on $a$.
The case $a=1$, i.e.,
\begin{align*}
\mc{L}{1^t,\pi} = -\mc{L}{2,1^{t-2},\pi} 
+ \mn{L}{1}\mc{L-1}{1^{t-1},\pi}
\end{align*}
follows from lemma \ref{k:le:decom}. 
Suppose that (\ref{k:eq:rec}) holds for $a=a$.  
Substitution of (\ref{k:eq:dec}) gives 
\begin{align}\label{k:eq:sp1}
\mc{L}{1^t, \pi} =  
\sum_{s=0}^a(-1)^{s+a}\mn{L}{s}
\left[\mn{L-s}{a+1-s}\mc{L-a-1}{1^{t-a-1},\pi} - 
\mc{L-s}{a+2-s,1^{t-a-2},\pi}\right].
\end{align}
In the first term, the $s$-sum is taken as
\begin{align}\label{k:eq:wa}
\sum_{s=0}^a(-1)^{s+a}\mn{L}{s}\mn{L-s}{a+1-s} = 
\mn{L}{a+1}\sum_{s=0}^a(-1)^{s+a}\mn{a+1}{s} = \mn{L}{a+1}.
\end{align}
Thus (\ref{k:eq:sp1}) implies 
the $a\rightarrow a\!+\!1$ case of (\ref{k:eq:rec}).
\end{proof}

\begin{lemma}\label{k:le:owari}
\begin{align*}
\mc{L}{1^{a+1},\pi} + \mc{L}{1^{a-1},2,\pi}
= \mn{L}{a}\mc{L-a}{1,\pi},
\end{align*}
where $\pi$ is an arbitrary array of positive integers 
summing up to $L-a-1$.
\end{lemma}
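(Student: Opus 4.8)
The plan is to prove the identity by a direct expansion that feeds lemmas \ref{k:le:rec} and \ref{k:le:decom} into each other, the only genuine computation being an alternating binomial sum. First I would dispose of the degenerate case $a=1$: there $\mc{L}{1^{a-1},2,\pi}=\mc{L}{2,\pi}$, and lemma \ref{k:le:decom} applied to $\mc{L}{1,1,\pi}$ gives $\mc{L}{1^{2},\pi}=\mn{L}{1}\mc{L-1}{1,\pi}-\mc{L}{2,\pi}$, so the two left-hand terms sum to $\mn{L}{1}\mc{L-1}{1,\pi}$, as wanted. Assume henceforth $a\ge 2$.

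Next I would reduce both terms on the left-hand side with lemma \ref{k:le:rec}. For the first term, take $t=a+1$ and let the summation parameter appearing in lemma \ref{k:le:rec} also equal $a$; the residual block of $1$'s then has exponent $t-a-1=0$, so
\begin{align*}
\mc{L}{1^{a+1},\pi}=\sum_{s=0}^{a}(-1)^{s+a}\mn{L}{s}\mc{L-s}{a+1-s,\pi}.
\end{align*}
For the second term, regard $\mc{L}{1^{a-1},2,\pi}=\mc{L}{1^{a-1},\pi'}$ with $\pi'=(2,\pi)$ and apply lemma \ref{k:le:rec} with $t=a-1$ and summation parameter $a-2$ (for $a=2$ this is a vacuous identity and no reduction is needed); again the residual block of $1$'s disappears and
\begin{align*}
\mc{L}{1^{a-1},2,\pi}=\sum_{s=0}^{a-2}(-1)^{s+a}\mn{L}{s}\mc{L-s}{a-1-s,2,\pi}.
\end{align*}
Because $a-1-s\ge 1$ throughout this range, lemma \ref{k:le:decom} rewrites each inner symbol as $\mc{L-s}{a-1-s,2,\pi}=\binom{L-s}{a-1-s}\mc{L-a+1}{2,\pi}-\mc{L-s}{a+1-s,\pi}$.

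Adding the two expansions, the coefficients of the symbols $\mc{L-s}{a+1-s,\pi}$ come from the difference of $\sum_{s=0}^{a}$ and $\sum_{s=0}^{a-2}$, which leaves only the terms $s=a-1$ and $s=a$, contributing $-\mn{L}{a-1}\mc{L-a+1}{2,\pi}+\mn{L}{a}\mc{L-a}{1,\pi}$. The remaining piece is $\bigl(\sum_{s=0}^{a-2}(-1)^{s+a}\mn{L}{s}\binom{L-s}{a-1-s}\bigr)\mc{L-a+1}{2,\pi}$, and with the identity $\mn{L}{s}\binom{L-s}{a-1-s}=\binom{L}{a-1}\binom{a-1}{s}$ followed by $\sum_{s=0}^{a-2}(-1)^{s}\binom{a-1}{s}=(-1)^{a}$ (the full alternating sum over $0\le s\le a-1$ vanishing since $a\ge 2$) this collapses to $\binom{L}{a-1}\mc{L-a+1}{2,\pi}$. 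Hence the two multiples of $\mc{L-a+1}{2,\pi}$ cancel and only $\mn{L}{a}\mc{L-a}{1,\pi}$ survives, which is the claim.

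The whole thing is bookkeeping once the two reductions are in hand; the one point needing care is the range condition on the summation parameter in lemma \ref{k:le:rec}, which forces the small cases $a=1$ (handled directly above) and $a=2$ (where the second reduction degenerates to a triviality) to be flagged explicitly. I do not expect any deeper obstacle.
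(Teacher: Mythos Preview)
Your proof is correct and follows essentially the same route as the paper's: both apply lemma~\ref{k:le:rec} to $\mc{L}{1^{a+1},\pi}$ (with $t=a+1$) and to $\mc{L}{1^{a-1},2,\pi}$ (with $t=a-1$, parameter $a-2$), feed lemma~\ref{k:le:decom} into the second expansion, and finish with the same alternating binomial identity. The only difference is that you are more careful than the paper about the range constraint $1\le a\le t-1$ in lemma~\ref{k:le:rec}, explicitly flagging $a=1$ and $a=2$; the paper silently relies on the resulting formulas remaining (trivially) valid in those boundary cases.
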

\begin{proof}
The case $t=a+1$ in 
(\ref{k:eq:rec}) reads
\begin{align*}
\mc{L}{1^{a+1},\pi} = \sum_{s=0}^a(-1)^{s+a}\mn{L}{s}
\mc{L-s}{a+1-s,\pi}.
\end{align*}
Similarly, by setting $t\rightarrow a-1, a\rightarrow a-2$ and 
$\pi \rightarrow (2,\pi)$ in (\ref{k:eq:rec}), we get
\begin{align*}
\mc{L}{1^{a-1},2,\pi} &= \sum_{s=0}^{a-2}(-1)^{s+a}\mn{L}{s}
\mc{L-s}{a-1-s,2,\pi}\\
&=\sum_{s=0}^{a-2}(-1)^{s+a}\mn{L}{s}
\left[\mn{L-s}{a-1-s}\mc{L-a+1}{2,\pi}
-\mc{L-s}{a+1-s,\pi}\right],\label{k:eq:aho}
\end{align*}
where (\ref{k:eq:dec}) has been applied in the second equality.
The sum of the two expressions gives
\begin{align*}
\mc{L}{1^{a+1},\pi} + \mc{L}{1^{a-1},2,\pi} 
&=-\mn{L}{a-1}\mc{L-a+1}{2,\pi}+
\mn{L}{a}\mc{L-a}{1,\pi}\\
&+\left[
\sum_{s=0}^{a-2}(-1)^{s+a}\mn{L}{s}
\mn{L-s}{a-1-s}\right]\mc{L-a+1}{2,\pi}.
\end{align*}
The $s$-sum in the last term is $\binom{L}{a-1}$ 
due to (\ref{k:eq:wa}),
finishing the proof.
\end{proof}

\noindent{\em Proof of Theorem \ref{k:th:add}}.
For $(m_1,\ldots, m_n) \in {\mathcal M}$, we are to show 
\begin{align*}
\mc{L}{m_1,\ldots, m_n} = 
\mc{L}{1^{m_1-1}21^{m_2-2} \ldots1^{m_{n-1}-2}21^{m_n-1}},
\end{align*}
where the explicit form of the dual is taken from (\ref{k:eq:di}). 
We invoke the double induction on $(L, n)$.
The case $L=1$ is trivially true.
In addition, the case $(m_1,\ldots, m_n) = (L)$ has already been 
verified for all $L$ in lemma \ref{k:le:mae}.
We assume that the assertion for $(L',n')$ is true 
for $L'<L$ with any $n'$ 
and also for $(L,n')$ with $n'<n$.
Consider the decomposition (\ref{k:eq:dec}).
By the induction assumption, 
the two quantities $\mc{\cdot}{\cdot }$ on the right hand side
can be replaced with their dual.
Then the relation to be proved becomes
\begin{align*}
&\mc{L}{1^{m_1-1}21^{m_2-2} 
\ldots1^{m_{n-1}-2}21^{m_n-1}}\\
&=\mn{L}{m_1}
\mc{L-m_1}{1^{m_2-1}21^{m_3-2} 
\ldots1^{m_{n-1}-2}21^{m_n-1}}
-\mc{L}{1^{m_1+m_2-1}21^{m_3-2} 
\ldots1^{m_{n-1}-2}21^{m_n-1}}.
\end{align*}
In terms of 
$\pi=(1^{m_2-2}21^{m_3-2} \ldots1^{m_{n-1}-2}21^{m_n-1})$,
this is expressed as 
\begin{align*}
\mc{L}{1^{m_1-1},2,\pi}
= \mn{L}{m_1}\mc{L-m_1}{1,\pi}
-\mc{L}{1^{m_1+1},\pi}.
\end{align*}
The proof is finished by lemma \ref{k:le:owari}. \qed

\section{Derivation of the stationary state}\label{sec:stationary}

Here we sketch a procedure to derive the stationary state in the 
framework of the algebraic Bethe ansatz. Throughout this appendix,
we set the nesting order as a standard one: $a_j=j\,(1\le j \le N)$.

As seen in section~\ref{sec:stationary-eigenvalue}, the eigenvalue
of the transfer matrix for the stationary state can be simply calculated
by setting all the Bethe roots equal to $1/p$. In contrast to the
eigenvalue problem, the evaluation of the eigenstate is not trivial.
This is caused by the $B$-operators such as \eqref{s:B-op1} and 
\eqref{s:B-op2} that 
approach zero as $\lambda\to1/p$. Thus to obtain the state,  
we must normalize $B$-operators as
\begin{align}
\widetilde{B}^{(j)}_{\alpha_k}(\lambda):=\frac{B^{(j)}_{\alpha_k}(\lambda)}{1-p \lambda},
\quad \alpha_k\in \{j+2,\dots,N\}
\quad (1\le k \le n_{j+1};\,0\le j \le N-2).
\end{align}
Note that  $B^{(0)}(\lambda):=B(\lambda)$.
First we consider the eigenstate $|\{\lambda^{(N-1)}\}\ket$
of $T^{(N-2)}(\lambda|\{\lambda^{(N-2)}\})$.
As shown in section \ref{sec:5-1},
this state is constructed by a multiple action of 
$\widetilde{B}^{(N-2)}_{\alpha}$ on $|\vac^{(N-2)}\ket=|N-1\ket_1\otimes\cdots
\otimes|N-1\ket_{n_{N-2}}$:
\begin{align}
|\{\lambda^{(N-1)}\}\ket&=\lim_{\lambda_j^{(N-1)}\to 1/p}
\widetilde{B}^{(N-2)}_{N}(\lambda_1^{(N-1)})\cdots\widetilde{B}^{(N-2)}_{N}
(\lambda_{n_{N-1}}^{(N-1)})|\vac^{(N-2)}\ket  \nn \\
&={\rm const.}\sum_{1\le \gamma_1<\dots <\gamma_{n_{N-1}}\le n_{N-2}}\left\{\prod_{j=1}^{n_{N-1}}
\frac{1-q\lambda_{\gamma_j}^{(N-2)}}{1-p\lambda_{\gamma_j}^{(N-2)}}
\right\}
\left(
\bigotimes_{k=1}^{n_{N-2}}
|N-1+\sum_{j=1}^{n_{N-1}}\delta_{k \gamma_j}\ket_k
\right)   \nn \\
&=:F^{(N-3)\alpha_1\cdots \alpha_{n_{N-2}}}|\alpha_1\ket_1\otimes\cdots
   \otimes |\alpha_{n_{N-2}}\ket_{n_{N-2}}, \,\,
\alpha_j\in\{N-1,N\}\,\, (1\le j \le n_{N-2}),
\label{s:1st}
\end{align}
where $F^{(N-3)\alpha_1 \cdots \alpha_{n_{N-2}}}$ is the 
element of  $|\{\lambda^{(N-1)}\}\ket$, from 
which the eigenstate $|\{\lambda^{(N-2)}\}\ket$ 
of $T^{(N-3)}(\lambda|\{\lambda^{(N-3)}\})$ can
be constructed as
\begin{align}
|\{\lambda^{(N-2)}\}\ket=\lim_{\lambda_j^{(N-2)}\to 1/p}F^{(N-3)\alpha_1 \cdots \alpha_{n_{N-2}}}
\widetilde{B}^{(N-3)}_{\alpha_1}(\lambda_1^{(N-2)}) \cdots 
\widetilde{B}^{(N-3)}_{\alpha_{n_{N-2}}}(\lambda_{n_{N-2}}^{(N-2)})|\vac^{(N-3)}\ket.
\label{s:2nd}
\end{align}
Note that $F^{(0)\alpha_1\cdots\alpha_{n_{1}}}$ denotes
$F^{\alpha_1\cdots\alpha_{n_{1}}}$ defined by \eqref{s:BV}.
Since the coefficient  $F^{(N-3)\alpha_1 \cdots \alpha_{n_{N-2}}}$ contains
the term $\prod (1-p \lambda_j^{(N-2)})^{-1}$, we cannot take the limit
$\lambda_j^{(N-2)}\to 1/p$ ($1 \le j \le n_{N-2})$ independently of $j$. 
To take this limit correctly,
we solve for the roots $\lambda_j^{(N-2)}$ ($2\le j \le n_{N-2}$) in terms of $\lambda_1^{(N-2)}$
by using the Bethe equations \eqref{s:BAE}:
\begin{alignat}{2}
& d(\lambda_j^{(1)}) =
    -\prod_{k=1}^{n_1} 
       \frac{f(\lambda_k^{(1)},\lambda_j^{(1)})}{f(\lambda_j^{(1)},\lambda_k^{(1)})}
                             &\text{ for $N=3$}, \nn \\
&\frac{1}{\prod_{k=1}^{n_{N-3}}f(\lambda_j^{(N-2)},\lambda_k^{(N-3)})} = -
    \prod_{k=1}^{n_{N-2}}\frac{f(\lambda_k^{(N-2)},\lambda_j^{(N-2)})}
                          {f(\lambda_j^{(N-2)},\lambda_k^{(N-2)})}
            \quad
                                         & \text{ for $N >3$}.
\label{s:BAE-expnd}
\end{alignat}
In the above, we have put $\lambda_j^{(N-1)}=1/p$. To extract the behavior  
of $\lambda_j^{(N-2)}$ around the point $1/p$, we expand them in terms of $\lambda_1^{(N-2)}-1/p$
as
\begin{align}
\lambda_j^{(N-2)}=\frac{1}{p}+\sum_{k=1}^{n_{N-2}}g_{j}^{(k)}(\lambda_1^{(N-2)}-1/p)^{k}
+O((\lambda_1^{(N-2)}-1/p)^{n_{N-2}+1}) \quad \text{ for $1\le j \le n_{N-2}$},
\label{s:expansion}
\end{align}
where $g_{1}^{(k)}=\delta_{1k}$.
Inserting them into \eqref{s:BAE-expnd} and comparing the coefficients of each order,
one obtains the set of equations determining the coefficients $g_{j}^{(k)}$.
In the following, as an example, we  write down the equations determining the first 
three coefficients.
\begin{align}
\begin{split}
&b^{(1)}=(-1)^{n_{N-2}+1}\prod_{k=1}^{n_{N-2}}\frac{c_{j k}^{(1)}}{c_{k j}^{(1)}}, \quad
b^{(2)}=b^{(1)}\sum_{k=1}^{n_{N-2}}\left\{\frac{c_{j k}^{(2)}}{c_{j k}^{(1)}}-
                                   \frac{c_{k j}^{(2)}}{c_{k j}^{(1)}}\right\},\\
&b^{(3)}=b^{(1)}\sum_{1\le k<l \le n_{N-2}}\left\{\frac{c_{j k}^{(2)}}{c_{j k}^{(1)}}-
                                   \frac{c_{k j}^{(2)}}{c_{k j}^{(1)}}\right\}
                                   \left\{\frac{c_{j l}^{(2)}}{c_{j l}^{(1)}}-
                                   \frac{c_{l j}^{(2)}}{c_{l j}^{(1)}}\right\}\\
&\qquad \qquad \quad+b^{(1)}\sum_{k=1}^{n_{N-2}}\left\{
                        \frac{c_{j k}^{(3)}}{c_{j k}^{(1)}}-
                        \frac{c_{j k}^{(2)}c_{k j}^{(2)}}{c_{j k}^{(1)}c_{k j}^{(1)}}+
                        \frac{(c_{k j}^{(2)})^2}{(c_{k j}^{(1)})^2}-
                        \frac{c_{k j}^{(3)}}{c_{k j}^{(1)}}
                     \right\},
\label{s:det}
\end{split}
\end{align}
where $c_{jk}$ and $b$ are defined as
\begin{align}
\begin{split}
&c_{jk}^{(1)}=q g_{k}^{(1)}-p g_{j}^{(1)},\quad
c_{jk}^{(2)}=q g_{k}^{(2)}-p g_{j}^{(2)}+
             p q g_{j}^{(1)}g_{k}^{(1)} \\
&c_{jk}^{(3)}=q g_{k}^{(3)}-p g_{j}^{(3)}+
             p q(g_{j}^{(1)}g_{k}^{(2)}+g_{j}^{(2)}g_{k}^{(1)}),
\end{split}
\end{align}
and, for $N=3$,
\begin{align}
b^{(1)}=1, \quad
b^{(2)}=L p g_{j}^{(1)}, \quad
b^{(3)}=L p g_{j}^{(2)}+\frac{1}{2}L(L-1)(p g_{j}^{(1)})^2,
\end{align}
and, for $N>3$,
\begin{align}
\begin{split}
&b^{(1)}=1, \quad
b^{(2)}=\sum_{k=1}^{n_{N-3}}\frac{1-q \lambda_k^{(N-3)}}{1-p \lambda_k^{(N-3)}}p g_j^{(1)},  \\
&b^{(3)}=\sum_{1\le k < l \le n_{N-3}} \frac{1-q \lambda_k^{(N-3)}}{1-p \lambda_k^{(N-3)}}
                                       \frac{1-q \lambda_l^{(N-3)}}{1-p \lambda_l^{(N-3)}}
 (p g_{j}^{(1)})^2                                \\
&\qquad \qquad \quad +\sum_{k=1}^{n_{N-3}}\left\{
                   \frac{1-q \lambda_k^{(N-3)}}{1-p \lambda_k^{(N-3)}} p g_{j}^{(2)}
           -\frac{(1-q \lambda_k^{(N-3)})q\lambda_k^{(N-3)}}{(1-p \lambda_k^{(N-3)})^2}
                  (p g_{j}^{(1)})^2
                     \right\}.
\end{split}
\end{align}
By solving these equations, the coefficients $g_{j}^{(k)}$ are uniquely determined.
For instance, $g_{j}^{(1)}$ is simply given by roots of unity:
\begin{align}
g_{j}^{(1)}=\exp\left\{\frac{2\pi \i}{n_{N-2}}(j-1)\right\} \quad
(1\le j\le n_{N-2}).
\end{align}
Substituting \eqref{s:expansion} together with the explicit form 
of the coefficients $g_{j}^{(k)}$
into \eqref{s:2nd} and then taking the limit $\lambda_1^{(N-2)}\to 1/p$,
one obtains the eigenstate $|\{\lambda^{(N-2)}\}\ket$ whose elements
give $F^{(N-4)\alpha_1 \cdots \alpha_{n_{N-3}}}$.
Repeating this procedure, one calculates the stationary state.

As a simple example, let us demonstrate this procedure
for the maximal sector of $L=N=3$ ($n_j=L-j$\, $(1\le j\le 2)$).
From \eqref{s:1st} $|\lambda^{(2)}\ket$ is given by
\begin{align}
|\lambda^{(2)}\ket=-p\frac{1-q \lambda_2^{(1)}}{1-p \lambda_2^{(1)}}|2\ket_1\otimes|3\ket_2
-p\frac{1-q \lambda_1^{(1)}}{1-p \lambda_1^{(1)}}|3\ket_1\otimes|2\ket_2.
\end{align} 
Namely $F^{2 3}$ and $F^{3 2}$  are, respectively, given by
\begin{align}
F^{2 3}=-p\frac{1-q \lambda_2^{(1)}}{1-p \lambda_2^{(1)}},
\quad
F^{3 2}=-p\frac{1-q \lambda_1^{(1)}}{1-p \lambda_1^{(1)}}.
\label{s:F}
\end{align}
To take the limit in \eqref{s:2nd}, we 
expand $\lambda_2^{(1)}$ by solving \eqref{s:det}. The resultant
expression reads
\begin{align}
\lambda_2^{(1)}=\frac{1}{p}-(\lambda_1^{(1)}-1/p)+\frac{p (3 p+q)}{p-q}(\lambda_1^{(1)}-1/p)^2
+O((\lambda_1^{(1)}-1/p)^3).
\end{align}
Inserting this and \eqref{s:F} into \eqref{s:1st}, and taking
the limit $\lambda_1^{(1)}\to 1/p$, we finally arrive at
\begin{align}
&|\{\lambda^{(1)}\}\rangle
=-p (p+q)C^{\alpha_1\alpha_2\alpha_3}
              |\alpha_1\ket_1\otimes |\alpha_2\ket_2\otimes |\alpha_3\ket_3, \nn \\
&C^{123}=C^{231}=C^{312}=2p+q, \quad
 C^{132}=C^{213}=C^{321}=p+2q.
\end{align}

We leave it as a future task to extend the concrete calculation 
as above to the general case and derive 
an explicit formula for the stationary state.
For an alternative approach by the matrix product ansatz, 
see \cite{PEM}.

\section{Eigenvalues and Bethe roots for $N=L=4$}\label{sec:appC}
We list the spectrum of the Hamiltonian \eqref{k:hdef} and
the transfer matrix \eqref{s:TM}
in the basic sectors
for $N=L=4$ and 
$(p,q)=(2/3,1/3)$.
The corresponding Bethe roots
with the standard nesting order
are also listed here.
The spectrum of Hamiltonian is also obtained 
by specializing the result in figure \ref{spec}. 
The second and the third examples from the last
demonstrate that there are two Bethe roots that yield the 
same eigen-polynomial.
These Bethe roots are not regular and 
the eigen-polynomial is not genuine in the sense of 
section \ref{k:subs:cp}. 
The sectors are specified by 
$(m_1,\ldots, m_4)$ according to \eqref{k:eq:sort}.

\begin{table}
\caption{\label{table1}Table of eigenvalues of
the transfer matrix and the Hamiltonian and
the corresponding Bethe roots I.}
\begin{tabular}{@{}crrrrr}
\br
 sector & $\Lambda(\lambda)$ & $E$ &
$\{\lambda^{(1)}\}$  & $\{\lambda^{(2)}\}$ & $\{\lambda^{(3)}\}$
\\
\mr
(4,0,0,0) &\scriptsize{$\frac{16 \lambda^4}{27}+1$} & $0 $ & $\emptyset$ & $\emptyset$ &$\emptyset$  
                                                                                  \\[-0.5mm] \mr 
$(1,3,0,0)$ &\scriptsize{$\frac{34 \lambda^4}{81}+1$} & 0 &
{\scriptsize $\{1.5,1.5,1.5\}$}        & $\emptyset$ & $\emptyset$ 
\\[-0.5mm] \mr 
\multicolumn{2}{r}{\scriptsize{$\frac{10 \lambda^4}{27}+\frac{2 \lambda^3}{9}-\frac{2 \lambda^2}{3}+2 \lambda -1 $ }}
      &$-2$ & {\tiny $\left\{
                                          \begin{array}{@{\,}r@{\,}}
                                             -2.51978 \\
                                              1.00989-0.565265 \i \\
                                              1.00989+0.565265 \i
                                           \end{array}
                                           \right\}$} & $\emptyset$ &$\emptyset$   
 \\[-0.5mm] \mr
\multicolumn{2}{r}{\scriptsize{$\begin{array}{@{}r@{}}
        \left(\frac{32}{81}\pm \frac{2 \i}{81}\right)\lambda^4-
        \left(\frac{1}{27}\pm \frac{\i}{9}\right)\lambda^3  \\[0.5mm]
        + \left(\frac{1}{3}\mp \frac{\i}{9}\right)\lambda^2   
       + \left(\frac{1}{3}\pm \i\right) \lambda \mp\i 
       \end{array} $ }}&$-1\pm \frac{\i}{3}$ & {\tiny $\left\{
                                          \begin{array}{@{\,}r@{\,}}
                                            -0.809148\pm 2.91994 \i \\
                                             0.824307\pm 0.182529 \i \\ 
                                              1.16517\mp 0.618862 \i 
                                           \end{array}
                                           \right\}$}   &   $\emptyset$ &$\emptyset$   
 \\[-0.5mm] \mr
 $(2,2,0,0)$ &\scriptsize{$\frac{4 \lambda^4}{9}+1$}& 0
&{\scriptsize $\{1.5,1.5\}$}        & $\emptyset$ &$\emptyset$   
 \\[-0.5mm] \mr
\multicolumn{2}{r}{\scriptsize{$\begin{array}{@{}r@{}}
    \left(\frac{32}{81}\pm\frac{4 \i}{81}\right) 
     \lambda^4\mp\frac{2 \i \lambda^3}{9}        
     +\frac{5 \lambda^2}{9}         \\
       \pm \i \lambda \mp \i
   \end{array}  $ }} & $(-1)^2$ & {\tiny $\left\{
                                          \begin{array}{@{\,}r@{\,}}
                                             -0.241158\pm1.94173 \i \\
                                              1.14116\mp0.141729 \i
                                           \end{array}
                                           \right\}$} & $\emptyset$ &$\emptyset$   
 \\[-0.5mm] \mr
\multicolumn{2}{r}{\scriptsize{$\frac{4 \lambda^4}{9}-\frac{2 \lambda^3}{3}+3 \lambda^2-3 \lambda +1$}} & $-3$ & {\tiny $\left\{
                                          \begin{array}{@{\,}r@{\,}}
                                              -0.75-1.29904 \i \\
                                              -0.75+1.29904 \i 
                                           \end{array}
                                           \right\}$} & $\emptyset$ &$\emptyset$   
 \\[-0.5mm] \mr
\multicolumn{2}{r}{\scriptsize{$\frac{28 \lambda^4}{81}+\frac{8 \lambda ^3}{27}-
   \frac{2 \lambda^2}{9}+\frac{4 \lambda }{3}-1$}} &$-\frac{4}{3}$
                            & {\tiny $\left\{
                                          \begin{array}{@{\,}r@{\,}}
                                              -3.62132 \\
                                               0.62132 
                                           \end{array}
                                           \right\}$} & $\emptyset$ &$\emptyset$   
 \\[-0.5mm] \mr
\multicolumn{2}{r}{\scriptsize{$\frac{28 \lambda^4}{81}+\frac{10 \lambda^3}{27}-
 \frac{5 \lambda^2}{9}+\frac{5 \lambda }{3}-1$ }}&$-\frac{5}{3}$
                            & {\tiny $\left\{
                                          \begin{array}{@{\,}r@{\,}}
                                              -2.42705 \\
                                               0.927051 \\
                                           \end{array}
                                           \right\}$} & $\emptyset$ &$\emptyset$   
\\[-0.5mm] \mr
$(3,1,0,0)$ &\scriptsize{$\frac{40 \lambda ^4}{81}+1$}& $0$
&\scriptsize{$\{1.5\}$} &  $\emptyset$ &$\emptyset$   
 \\[-0.5mm] \mr
\multicolumn{2}{r}{\scriptsize{$\frac{8 \lambda^4}{27}+\frac{8 \lambda^3}{9}-
\frac{4 \lambda^2}{3}+2 \lambda -1$}} & $-2$
&\scriptsize{$\{-1.5\}$}  &  $\emptyset$ &$\emptyset$ \\[-0.5mm] \mr
\multicolumn{2}{r}{\scriptsize{$\begin{array}{@{}r@{}}
   \left(\frac{32}{81}\mp\frac{8 \i}{81}\right) \lambda^4+
   \left(\frac{4}{27}\pm\frac{4 \i}{9}\right) \lambda^3 \\[0.5mm]+
   \left(\frac{2}{3}\mp\frac{2 \i}{9}\right) \lambda^2-
   \left(\frac{1}{3}\pm\i\right) \lambda \pm\i
   \end{array}$ }} & $-1\pm \frac{\i}{3}$
&\scriptsize{$\{\mp 1.5\i\}$}&  $\emptyset$ &$\emptyset$
\\[-0.5mm] \mr
$(1,1,2,0)$&\scriptsize{$\frac{22 \lambda^4}{81}+1$ }& 0 &{\scriptsize $\{1.5,1.5,1.5\}$}       &{\scriptsize $\{1.5,1.5\}$}        &$\emptyset$   
 \\[-0.5mm] \mr
\multicolumn{2}{r}{\scriptsize{$\frac{2 \lambda^4}{9}+\frac{2 \lambda^3}{9}-\frac{2 \lambda^2}{3}+2 \lambda -1 $}} 
      &$-2$ & {\tiny $\left\{
                                          \begin{array}{@{\,}r@{\,}}
                                              1.00989-0.565265 \i \\
                                              1.00989+0.565265 \i \\
                                              -2.51978 \\
                                           \end{array}
                                           \right\}$}
                          & {\scriptsize $\{1.5,1.5\}$}   &$\emptyset$   
 \\[-0.5mm] \mr
\multicolumn{2}{r}{\scriptsize{$\begin{array}{@{}r@{}}
\left(\frac{20}{81}\pm\frac{2 \i}{81}\right) \lambda^4-
\left(\frac{1}{27}\pm\frac{\i}{9}\right) \lambda^3 \\
+\left(\frac{1}{3}\mp\frac{\i}{9}\right) \lambda^2
+\left(\frac{1}{3}\pm\i\right) \lambda \mp \i
   \end{array}$}}  & $-1\pm \frac{\i}{3}$ &{\tiny $\left\{
                                          \begin{array}{@{\,}r@{\,}}
                                                -0.809148\pm2.91994 \i \\
                                                 0.824307\pm0.182529 \i \\
                                                 1.16517\mp0.618862 \i
                                           \end{array}
                                           \right\}$} 
                                           &{\scriptsize $\{1.5,1.5\}$} 
 &$\emptyset$   
 \\[-0.5mm] \mr
\multicolumn{2}{r}{\scriptsize{$\begin{array}{@{}r@{}}
    \left(\frac{2}{9}\pm\frac{4 \i}{81}\right) 
     \lambda^4\mp\frac{2 \i \lambda^3}{9}        
     +\frac{5 \lambda^2}{9}         \\
       \pm \i \lambda \mp \i
   \end{array}  $}} & $(-1)^2$ & {\tiny $\left\{
                                          \begin{array}{@{\,}r@{\,}}
                                             -0.241158\pm1.94173 \i \\
                                              1.14116\mp0.141729 \i \\
                                              1.5
                                           \end{array}
                                           \right\}$} 
                          & {\tiny $\left\{
                                          \begin{array}{@{\,}r@{\,}}
                                             0.69207\pm0.648316 \i \\
                                             1.19884\mp0.161286 \i \\ 
                                           \end{array}
                                           \right\}$} &$\emptyset$   
 \\[-0.5mm] \mr
\multicolumn{2}{r}{\scriptsize{$\frac{22 \lambda^4}{81}-\frac{2 \lambda^3}{3}+3 \lambda^2-3 \lambda +1$}}&$-3$& {\tiny $\left\{
                                          \begin{array}{@{\,}r@{\,}}
                                             1.5, -0.75-1.29904 \i \\
                                             -0.75+1.29904 \i
                                           \end{array}
                                           \right\}$}
                          & {\tiny $\left\{
                                          \begin{array}{@{\,}r@{\,}}
                                               0.789474-0.957186 \i \\
                                               0.789474+0.957186 \i
                                           \end{array}
                                           \right\}$}  &$\emptyset$   
 \\[-0.5mm] \mr
\multicolumn{2}{r}{\scriptsize{$\frac{14 \lambda^4}{81}+\frac{8 \lambda^3}{27}-
   \frac{2 \lambda^2}{9}+\frac{4 \lambda }{3}-1$}}&$-\frac{4}{3}$ & {\tiny $\left\{
                                          \begin{array}{@{\,}r@{\,}}
                                       1.5, -3.62132 \\ 
                                           0.62132
                                           \end{array}
                                           \right\}$}
                          & {\tiny $\left\{
                                          \begin{array}{@{\,}r@{\,}}
                                            0.686441-0.503364 \i \\
                                            0.686441+0.503364 \i
                                           \end{array}
                                           \right\}$}  &$\emptyset$   
 \\[-0.5mm] \mr
\multicolumn{2}{r}{\scriptsize{$\frac{14 \lambda^4}{81}+\frac{10 \lambda^3}{27}-\frac{5 \lambda^2}{9}
   +\frac{5 \lambda }{3}-1$}} &$-\frac{5}{3}$  & {\tiny $\left\{
                                          \begin{array}{@{\,}r@{\,}}
                                           1.5, -2.42705 \\
                                             0.927051 
                                           \end{array}
                                           \right\}$}
                          & {\tiny $\left\{
                                          \begin{array}{@{\,}r@{\,}}
                                             0.765957-0.188811 \i \\ 
                                             0.765957+0.188811 \i
                                           \end{array}
                                           \right\}$}  &$\emptyset$   
 \\[-0.5mm] \mr
\multicolumn{2}{r}{\scriptsize{$\frac{10 \lambda^4}{81}+\frac{2 \lambda^3}{9}+\frac{4 \lambda^2}{3}-2 \lambda +1$}} &$-2$ & {\tiny $\left\{
                                          \begin{array}{@{\,}r@{\,}}
                                              -0.61352-2.06536 \i \\
                                              -0.61352+2.06536 \i \\
                                                0.72704 
                                           \end{array}
                                           \right\}$} 
                          & {\tiny $\left\{
                                          \begin{array}{@{\,}r@{\,}}
                                             -9.97723 \\
                                              0.977226 
                                           \end{array}
                                           \right\}$}  &$\emptyset$   
 \\[-0.5mm] \mr
\multicolumn{2}{r}{\scriptsize{$\begin{array}{@{}r@{}}
    \left(\frac{8}{81}\pm\frac{10 \i}{81}\right) \lambda^4+
    \left(\frac{13}{27}\mp\i\right) \lambda^3 \\
    +\left(\frac{1}{3}\pm\frac{25 \i}{9}\right)\lambda^2-
    \left(\frac{1}{3}\pm3 \i\right) \lambda \pm\i
   \end{array}  $ }} & $-3\pm \frac{\i}{3}$ & {\tiny $\left\{
                                          \begin{array}{@{\,}r@{\,}}
                                              -1.97293\mp0.68627 \i \\
                                              -0.405293\pm1.74466 \i \\
                                               0.762834\mp0.481463 \i 
                                           \end{array}
                                           \right\}$}
                          & {\tiny $\left\{
                                          \begin{array}{@{\,}r@{\,}}
                                              -3.45362\pm7.73728 \i \\
                                               1.07431\mp0.289005 \i
                                           \end{array}
                                           \right\}$}  &$\emptyset$   
\\[-0.5mm]
\mr
$(1,2,1,0)$ &\scriptsize{$\frac{26 \lambda^4}{81}+1$}& 0 & {\scriptsize $\{1.5,1.5,1.5\}$}   &{\scriptsize $\{1.5\}$}& $\emptyset$   
 \\[-0.5mm] \mr
\multicolumn{2}{r}{\scriptsize{$\frac{22 \lambda^4}{81}+\frac{2 \lambda^3}{9}-\frac{2 \lambda^2}{3}+2 \lambda -1 $}}
      &$-2$ & {\tiny $\left\{
                                          \begin{array}{@{\,}r@{\,}}
                                              1.00989-0.565265 \i \\
                                              1.00989+0.565265 \i \\
                                              -2.51978 \\
                                           \end{array}
                                           \right\}$} & {\scriptsize $\{1.5\}$}    &$\emptyset$   
 \\[-0.5mm] \mr
\multicolumn{2}{r}{\scriptsize{$\begin{array}{@{}r@{}}
\left(\frac{8}{27}\pm\frac{2 \i}{81}\right) \lambda^4-
\left(\frac{1}{27}\pm\frac{\i}{9}\right) \lambda^3 \\
+\left(\frac{1}{3}\mp\frac{\i}{9}\right) \lambda^2
+\left(\frac{1}{3}\pm\i\right) \lambda \mp \i
   \end{array}$}} & $-1\pm \frac{\i}{3}$ &{\tiny $\left\{
                                          \begin{array}{@{\,}r@{\,}}
                                                -0.809148\pm2.91994 \i \\
                                                 0.824307\pm0.182529 \i \\
                                                 1.16517\mp0.618862 \i
                                           \end{array}
                                           \right\}$} 
& {\scriptsize $\{1.5\}$}  &$\emptyset$   
 \\[-0.5mm] \mr
\multicolumn{2}{r}{\scriptsize{$\frac{10 \lambda ^4}{81}+\frac{8 \lambda^3}{9}-\frac{4 \lambda^2}{3}+2 \lambda -1$}} & $-2$ &{\scriptsize $\{-1.5,1.5,1.5\}$}  & {\scriptsize $\{-0.5\}$}  
                                                                              &$\emptyset$   
 \\[-0.5mm] \mr
\multicolumn{2}{r}{\scriptsize{$\begin{array}{@{}r@{}}
\left(\frac{2}{9}\mp\frac{8 \i}{81}\right) \lambda^4+
\left(\frac{4}{27}\pm\frac{4\i}{9}\right) \lambda^3 \\
+\left(\frac{2}{3}\mp\frac{2 \i}{9}\right) \lambda^2-
\left(\frac{1}{3}\pm\i\right) \lambda \pm \i   \end{array}$}}
& $-1\pm \frac{\i}{3}$ &{\scriptsize $\{\mp 1.5 \i , 1.5 ,1.5\}$}
& {\tiny $\{ 0.253846\mp0.969231 \i\}$} & $\emptyset$   
 \\[-0.5mm] \br
\end{tabular}                                                                  
\end{table}

\begin{table}
\caption{\label{table2}Table of eigenvalues of
the transfer matrix and the Hamiltonian and
the corresponding Bethe roots II.}
\begin{tabular}{@{}crrrrr}
\br
sector & $\Lambda(\lambda)$ & $E$ &
$\{\lambda^{(1)}\}$  & $\{\lambda^{(2)}\}$ & $\{\lambda^{(3)}\}$
\\[-0.5mm] \mr
(1,2,1,0) &
\scriptsize{$\frac{16 \lambda ^4}{81}+\frac{\lambda ^3}{3}+\lambda -1$}& $-1$ &{\tiny $\left\{
                                          \begin{array}{@{\,}r@{\,}}
                                                 -7.5494 \\
                                                  0.383413 \\
                                                  1.16599 
                                           \end{array}
                                           \right\}$} &
{\scriptsize $\{1.5\}$}    &$\emptyset$   
 \\[-0.5mm] \mr
\multicolumn{2}{r}{\scriptsize{$\begin{array}{@{}r@{}}
\left(\frac{2}{81}\mp\frac{14 \i}{81}\right) \lambda^4+
\left(\frac{10}{9}\pm\frac{11 \i}{9}\right) \lambda^3 \\
-
\left(\frac{5}{9}\pm3 \i\right) \lambda ^2\pm3 \i \lambda \mp\i
\end{array}$ }}  & $(-3)^2$ &{\tiny $\left\{
                                          \begin{array}{@{\,}r@{\,}}
                                                 -1.9214\pm0.238512 \i \\
                                                 -0.315515\mp1.62413 \i \\
                                                  1.00162\pm0.326798 \i
                                           \end{array}
                                           \right\}$}
& {\tiny $\{-2.19231\mp3.46154 \i\}$} & $\emptyset$  
 \\[-0.5mm] \mr
\multicolumn{2}{r}{\scriptsize{$\frac{4 \lambda^4}{27}+\frac{7 \lambda^3}{27}+
  \frac{14 \lambda^2}{9}-\frac{7 \lambda }{3}+1$}}& $-\frac{7}{3}$ &{\tiny $\left\{
                                          \begin{array}{@{\,}r@{\,}}
                                                 -0.511746-1.74231 \i \\ 
                                                 -0.511746+1.74231 \i \\
                                                  1.02349 
                                           \end{array}
                                           \right\}$} 
&{\scriptsize $\{1.5\}$}  &$\emptyset$   
\\[-0.5mm] \mr
\multicolumn{2}{r}{\scriptsize{$
\frac{2 \lambda ^4}{9}-\frac{4 \lambda^3}{27}+\frac{20 \lambda^2}{9}
-\frac{8 \lambda }{3}+1 $}}& $-\frac{8}{3}$ &{\tiny $\left\{
                                          \begin{array}{@{\,}r@{\,}}
                                                -0.717003-1.5691 \i \\
                                                -0.717003+1.5691 \i \\
                                                 1.13401
                                           \end{array}
                                           \right\}$} & {\tiny $\{-0.214286\}$}  
                                                                              &$\emptyset$   
\\[-0.5mm]
\mr
$(2,1,1,0)$ &\scriptsize{$\frac{28 \lambda^4}{81}+1$}& 0 
& {\scriptsize $\{1.5,1.5\}$}
 &{\scriptsize $\{1.5\}$}& $\emptyset$   
 \\ \mr
\multicolumn{2}{r}{\scriptsize{$\begin{array}{@{}r@{}}
    \left(\frac{8}{27}\pm\frac{4 \i}{81}\right) 
     \lambda^4\mp\frac{2 \i \lambda^3}{9}        
     +\frac{5 \lambda^2}{9}         \\
       \pm \i \lambda \mp \i
   \end{array}  $}} & $(-1)^2$ & {\tiny $\left\{
                                          \begin{array}{@{\,}r@{\,}}
                                             -0.241158\pm1.94173 \i \\
                                              1.14116\mp0.141729 \i
                                           \end{array}
                                           \right\}$} & 
{\scriptsize $\{1.5\}$} &$\emptyset$   
 \\[-0.5mm] \mr
\multicolumn{2}{r}{\scriptsize{$\frac{28 \lambda^4}{81}-\frac{2 \lambda^3}{3}+3 \lambda^2-
   3\lambda +1$ }} & $-3$ &{\tiny $\left\{
                                          \begin{array}{@{\,}r@{\,}}
                                                 -0.75-1.29904 \i \\
                                                 -0.75+1.29904 \i 
                                           \end{array}
                                           \right\}$} & {\scriptsize $\{1.5\}$}   &$\emptyset$   
 \\[-0.5mm] \mr
\multicolumn{2}{r}{\scriptsize{$\frac{20 \lambda^4}{81}+\frac{8 \lambda^3}{27}-
\frac{2 \lambda^2}{9}+\frac{4 \lambda }{3}-1$}} & $-\frac{4}{3}$ &{\tiny $\left\{
                                          \begin{array}{@{\,}r@{\,}}
                                                 -3.62132\\ 
                                                  0.62132
                                           \end{array}
                                           \right\}$} & {\scriptsize $\{1.5\}$}    &$\emptyset$   
 \\[-0.5mm] \mr
\multicolumn{2}{r}{\scriptsize{$\frac{20 \lambda^4}{81}+\frac{10 \lambda^3}{27}-
  \frac{5 \lambda^2}{9}+\frac{5 \lambda }{3}-1$ }} & $-\frac{5}{3}$ &{\tiny $\left\{
                                          \begin{array}{@{\,}r@{\,}}
                                                 -2.42705 \\
                                                  0.927051
                                           \end{array}
                                           \right\}$} & {\scriptsize $\{1.5\}$}   &$\emptyset$   
 \\[-0.5mm] \mr
\multicolumn{2}{r}{\scriptsize{$\frac{4 \lambda^4}{27}+\frac{8 \lambda^3}{9}
  -\frac{4 \lambda^2}{3}+2 \lambda -1$}} & $-2$
&{\scriptsize $\{-1.5,1.5\}$} &{\scriptsize $\{0.3\}$}     &$\emptyset$   
 \\[-0.5mm] \mr
\multicolumn{2}{r}{\scriptsize{$\begin{array}{@{}r@{}}
\left(\frac{20}{81}\mp\frac{8 \i}{81}\right) \lambda^4+
\left(\frac{4}{27}\pm\frac{4 \i}{9}\right) \lambda^3 \\
+
\left(\frac{2}{3}\mp\frac{2 \i}{9}\right) \lambda^2-
\left(\frac{1}{3}\pm \i\right) \lambda \pm \i
\end{array}$ }} & $-1\pm \frac{\i}{3}$
&{\scriptsize $\{\mp 1.5 \i ,1.5\}$}
& {\tiny $\{ 0.617647\mp 0.529412 i\}$}  
                                                                              & $\emptyset$   
 \\[-0.5mm] \mr
\multicolumn{2}{r}{\scriptsize{$\frac{4 \lambda^4}{81}+\frac{8 \lambda^3}{9}+
   \frac{2 \lambda^2}{3}-2 \lambda +1$}}& $-2$
&{\scriptsize $\{-1.5 \i,1.5 \i\}$} & {\scriptsize $\{-4.5\}$}   &$\emptyset$   
 \\[-0.5mm] \mr
\multicolumn{2}{r}{\scriptsize{$\begin{array}{@{}r@{}}
\left(-\frac{4}{81}\mp \frac{16 \i}{81}\right) \lambda^4+
\left(\frac{44}{27}\pm \frac{4 \i}{3}\right) \lambda^3 \\
-\left(\frac{4}{3}\pm \frac{28 \i}{9}\right) \lambda^2+
\left(\frac{1}{3}\pm 3 \i\right) \lambda \mp \i
\end{array}$ }} & $-3\pm \frac{\i}{3}$
&{\scriptsize $\{- 1.5,\mp 1.5 \i\}$}
& {\tiny $\{ -2.55882\mp4.76471 \i \}$}  
& $\emptyset$
\\[-0.5mm] \mr
(1,1,1,1) &\scriptsize{$\frac{14 \lambda ^4}{81}+1$} & 0
&{\scriptsize $\{1.5,1.5,1.5\}$} & {\scriptsize $\{1.5,1.5\}$}
 &{\scriptsize $\{1.5\}$}  
\\[-0.5mm] \mr 
\multicolumn{2}{r}{
\scriptsize{$\frac{10 \lambda^4}{81}+\frac{2 \lambda^3}{9}-\frac{2 \lambda^2}{3}+2 \lambda -1 $}} 
      &$-2$ & {\tiny $\left\{
                                          \begin{array}{@{\,}r@{\,}}
                                              1.00989-0.565265 \i \\
                                              1.00989+0.565265 \i \\
                                              -2.51978 \\
                                           \end{array}
                                           \right\}$}
 & {\scriptsize $\{1.5,1.5\}$} &{\scriptsize $\{1.5\}$}  
\\[-0.5mm] \mr 
\multicolumn{2}{r}{\scriptsize{$\begin{array}{@{}r@{}}
\left(\frac{4}{27}\pm\frac{2 \i}{81}\right) \lambda^4-
\left(\frac{1}{27}\pm\frac{\i}{9}\right) \lambda^3 \\
+\left(\frac{1}{3}\mp\frac{\i}{9}\right) \lambda^2
+\left(\frac{1}{3}\pm\i\right\} \lambda \mp \i
   \end{array}$}}   & $-1\pm \frac{\i}{3}$ &{\tiny $\left\{
                                          \begin{array}{@{\,}r@{\,}}
                                                -0.809148\pm2.91994 \i \\
                                                 0.824307\pm0.182529 \i \\
                                                 1.16517\mp0.618862 \i
                                           \end{array}
                                           \right\}$} 
                                           &{\scriptsize $\{1.5,1.5\}$}
&{\scriptsize $\{1.5\}$}   
\\[-0.5mm] \mr 
\multicolumn{2}{r}{\scriptsize{$\begin{array}{@{}r@{}}
    \left(\frac{10}{81}\pm\frac{4 \i}{81}\right)
     \lambda^4\mp\frac{2 \i \lambda^3}{9}        
     +\frac{5 \lambda^2}{9}         \\
       \pm \i \lambda \mp \i
   \end{array}  $ }} & $(-1)^2$ & {\tiny $\left\{
                                          \begin{array}{@{\,}r@{\,}}
                                             -0.241158\pm1.94173 \i \\
                                              1.14116\mp0.141729 \i \\
                                              1.5
                                           \end{array}
                                           \right\}$} 
                          & {\tiny $\left\{
                                          \begin{array}{@{\,}r@{\,}}
                                             0.69207\pm0.648316 \i \\
                                             1.19884\mp0.161286 \i \\ 
                                           \end{array}
                                           \right\}$} &{\scriptsize $\{1.5\}$}   
\\[-0.5mm] \mr
\multicolumn{2}{r}{\scriptsize{$\frac{14 \lambda^4}{81}
-\frac{2 \lambda^3}{3}+3 \lambda^2-3 \lambda +1$}}
    &$-3$& {\tiny $\left\{
                                          \begin{array}{@{\,}r@{\,}}
                                             -0.75-1.29904 \i \\
                                             -0.75+1.29904 \i \\ 
                                              1.5
                                           \end{array}
                                           \right\}$}
                          & {\tiny $\left\{
                                          \begin{array}{@{\,}r@{\,}}
                                               0.789474-0.957186 \i \\
                                               0.789474+0.957186 \i
                                           \end{array}
                                           \right\}$}  &{\scriptsize $\{1.5\}$}  
\\[-0.5mm] \mr
\multicolumn{2}{r}{\scriptsize{$\frac{2 \lambda^4}{27}+\frac{8 \lambda^3}{27}-
   \frac{2 \lambda^2}{9}+\frac{4 \lambda }{3}-1$}}
&$-\frac{4}{3}$ & {\tiny $\left\{
                                          \begin{array}{@{\,}r@{\,}}
                                         1.5,  -3.62132 \\ 
                                           0.62132
                                           \end{array}
                                           \right\}$}
                          & {\tiny $\left\{
                                          \begin{array}{@{\,}r@{\,}}
                                            0.686441-0.503364 \i \\
                                            0.686441+0.503364 \i
                                           \end{array}
                                           \right\}$}  &{\scriptsize $\{1.5\}$}   
\\[-0.5mm]
\mr
\multicolumn{2}{r}{\scriptsize{$\frac{2 \lambda^4}{27}+\frac{10 \lambda^3}{27}-\frac{5 \lambda^2}{9}
   +\frac{5 \lambda }{3}-1$}}&$-\frac{5}{3}$  & {\tiny $\left\{
                                          \begin{array}{@{\,}r@{\,}}
                                          1.5,   -2.42705 \\
                                             0.927051
                                           \end{array}
                                           \right\}$}
                          & {\tiny $\left\{
                                          \begin{array}{@{\,}r@{\,}}
                                             0.765957-0.188811 \i \\ 
                                             0.765957+0.188811 \i
                                           \end{array}
                                           \right\}$}  &{\scriptsize $\{1.5\}$}
                                                                               \\[-0.5mm]\mr
\multicolumn{2}{r}{\scriptsize{$\frac{2 \lambda^4}{81}+\frac{2 \lambda^3}{9}+\frac{4 \lambda^2}{3}-2 \lambda +1$ }}
      &$-2$ & {\tiny $\left\{
                                          \begin{array}{@{\,}r@{\,}}
                                              -0.61352-2.06536 \i \\
                                              -0.61352+2.06536 \i \\
                                                0.72704 
                                           \end{array}
                                           \right\}$} 
                          & {\tiny $\left\{
                                          \begin{array}{@{\,}r@{\,}}
                                             -9.97723 \\
                                              0.977226 
                                           \end{array}
                                           \right\}$}
&{\scriptsize $\{1.5\}$}                \\[-0.5mm]\mr
\multicolumn{2}{r}{
\scriptsize{$\begin{array}{@{}r@{}}
    \pm\frac{10 \i}{81} \lambda^4+
    \left(\frac{13}{27}\mp\i\right) \lambda^3 \\
    +\left(\frac{1}{3}\pm\frac{25 \i}{9}\right)\lambda^2-
    \left(\frac{1}{3}\pm3 \i\right) \lambda \pm\i
   \end{array}  $}}& $-3\pm \frac{\i}{3}$ & {\tiny $\left\{
                                          \begin{array}{@{\,}r@{\,}}
                                              -1.97293\mp0.68627 \i \\
                                              -0.405293\pm1.74466 \i \\
                                               0.762834\mp0.481463 \i 
                                           \end{array}
                                           \right\}$}
                          & {\tiny $\left\{
                                          \begin{array}{@{\,}r@{\,}}
                                              -3.45362\pm7.73728 \i \\
                                               1.07431\mp0.289005 \i
                                           \end{array}
                                           \right\}$}  &{\scriptsize $\{1.5\}$}  
\\[-0.5mm] \mr
\multicolumn{2}{r}{
\scriptsize{$\frac{4 \lambda^4}{81}+\frac{\lambda^3}{3}+\lambda -1$}}
    &$-1$& {\tiny $\left\{
                                          \begin{array}{@{\,}r@{\,}}
                                              1.16599,-7.5494 \\
                                              0.383413
                                           \end{array}
                                           \right\}$}
                          & {\scriptsize $\{0.5,1.5\}$}  &{\tiny $\{0.954545\}$}   
\\[-0.5mm] \mr
\multicolumn{2}{r}{\scriptsize{$\begin{array}{@{}r@{}}
   \left(-\frac{10}{81}\mp\frac{14 \i}{81}\right) \lambda^4+
   \left(\frac{10}{9}\pm\frac{11 \i}{9}\right) \lambda^3 \\ -
   \left(\frac{5}{9}\pm3 \i\right) \lambda^2\pm3 \i \lambda \mp\i
   \end{array}  $}} & $(-3)^2$ & {\tiny $\left\{
                                          \begin{array}{@{\,}r@{\,}}
                                              -1.9214\pm0.238512 \i \\
                                              -0.315515\mp1.62413 \i \\
                                               1.00162\pm0.326798 \i
                                           \end{array}
                                           \right\}$}
                          & {\tiny $\left\{
                                          \begin{array}{@{\,}r@{\,}}
                                              -2.19231\mp3.46154 \i \\
                                              1.5
                                           \end{array}
                                           \right\}$}  &\tiny{$\{-0.101751\mp0.59081 \i\}$}   
\\[-0.5mm]
\br
\end{tabular}
\end{table}

\begin{table}
\caption{\label{table3}Table of eigenvalues of
the transfer matrix and the Hamiltonian and
the corresponding Bethe roots III.}
\begin{tabular}{@{}crrrrr}
\br
sector &
$\Lambda(\lambda)$ & $E$ &
$\{\lambda^{(1)}\}$  & $\{\lambda^{(2)}\}$ & $\{\lambda^{(3)}\}$
\\ \mr
(1,1,1,1)
\\[-0.5mm]
\multicolumn{2}{r}{\scriptsize{$\frac{7 \lambda^3}{27}+\frac{14 \lambda^2}{9}-\frac{7 \lambda }{3}+1$}}
      &$-\frac{7}{3}$ & {\tiny $\left\{
                                          \begin{array}{@{\,}r@{\,}}
                                             -0.511746-1.74231 \i \\
                                             -0.511746+1.74231 \i \\
                                              1.02349
                                           \end{array}
                                           \right\}$} 
                          & {\scriptsize $\{-1.5,1.5\}$}
&{\scriptsize $\{0.3\}$}
 \\[-0.5mm] \mr
\multicolumn{2}{r}{\scriptsize{$\frac{2 \lambda^4}{27}-\frac{4 \lambda^3}{27}+\frac{20 \lambda^2}{9}-\frac{8 \lambda }{3}+1$}} 
      &$-\frac{8}{3}$ & {\tiny $\left\{
                                          \begin{array}{@{\,}r@{\,}}
                                              -0.717003-1.5691 \i \\
                                              -0.717003+1.5691 \i \\
                                               1.13401
                                           \end{array}
                                           \right\}$}
                          & {\tiny $\left\{
                                          \begin{array}{@{\,}r@{\,}}
                                              -0.214286\\
                                              1.5 
                                           \end{array}
                                           \right\}$}  &{\tiny $\{0.672414\}$}   
\\[-0.5mm] \mr
\multicolumn{2}{r}{\scriptsize{$-\frac{10 \lambda^4}{81}+\frac{8 \lambda^3}{9}+\frac{2 \lambda^2}{3}-2 \lambda +1$}}
      &$-2$ & {\scriptsize $\{-1.5 \i,1.5 \i, 1.5\}$} 
 & {\tiny $\{ -2.06427, 0.778553\}$}
&{\tiny $\{-1.77273\}$}   
\\[-0.5mm] \mr
\multicolumn{2}{r}{\scriptsize{$\begin{array}{@{}r@{}}
   \left(-\frac{2}{9}\mp \frac{16 \i}{81}\right) \lambda^4+
   \left(\frac{44}{27}\pm \frac{4 \i}{3}\right) \lambda^3 \\-
   \left(\frac{4}{3}\pm \frac{28 \i}{9}\right) \lambda^2+
   \left(\frac{1}{3}\pm 3 \i\right) \lambda \mp \i
   \end{array}  $}} & $-3\pm \frac{\i}{3}$
& {\scriptsize $\{-1.5,\mp 1.5 \i ,1.5\}$}
                          & {\tiny $\left\{
                                          \begin{array}{@{\,}r@{\,}}
                                              -2.35155\mp3.02941 \i \\
                                              0.874627\mp0.15521 \i
                                           \end{array}
                                           \right\}$}
&\tiny{$\{-1.63706\mp1.91878 \i\}$}   
\\[-0.5mm] \mr
\multicolumn{2}{r}{\scriptsize{$-\frac{2 \lambda^4}{81}+\frac{8 \lambda^3}{9}-\frac{4 \lambda^2}{3}+2 \lambda -1$ }}&
$-2$ & \scriptsize
{$\{-1.5,1.5,1.5 \}$ } &
\tiny{$\{-0.5,1.5\}$} & \tiny{$\{0.576923\}$}
\\[-0.5mm] \cline{5-6}
& & & & \tiny{$\{1.5,1.5\}$} & \tiny{$\{-1.5\}$}
\\[-1mm] \mr
\multicolumn{2}{r}{\scriptsize{$\begin{array}{@{}r@{}}
\left(\frac{2}{27}\mp \frac{8 \i}{81}\right) \lambda^4+
\left(\frac{4}{27}\pm \frac{4 \i}{9}\right) \lambda^3 \\
+
\left(\frac{2}{3}\mp \frac{2 \i}{9}\right) \lambda^2-
\left(\frac{1}{3}\pm \i\right) \lambda  \pm \i
   \end{array}  $}} & $-1\pm\frac{\i}{3}$ 
                          & {\scriptsize $\{\mp 1.5 \i,1.5,1.5 \} $} 
& \tiny{$\left\{\begin{array}{@{\,}r@{\,}}0.253846\mp 0.969231 \i \\ 1.5 \end{array}\right\}$} &
\tiny{$\{0.784404\mp0.385321 \i\}$}\\
\cline{5-6}
& & & & \tiny{$\{1.5,1.5\}$}& \tiny{$\{\mp 1.5 \i\}$} 
\\[-0.5mm] \mr
\multicolumn{2}{r}{\scriptsize{$-\frac{50 \lambda^4}{81}+4 \lambda^3-6 \lambda^2+4 \lambda -1$}}  
& $-4$ & {\scriptsize $\{-1.5,-1.5\i,1.5\i\}$}
 & {\tiny $\left\{
                                          \begin{array}{@{\,}r@{\,}}
                                              -0.5-3.74166 \i \\
                                              -0.5+3.74166 \i
                                           \end{array}
                                           \right\}$}
 &\scriptsize{$\{25.5\}$}   
\\[-0.5mm] \br
\end{tabular}
\end{table}

\newpage

\end{document}